\documentclass{amsart}
\usepackage{amssymb, amsthm, amsmath, graphicx, latexsym}

\begin{document}

\newcommand{\barint}{\overline{\hspace{.65em}}\!\!\!\!\!\!\int}
\newcommand{\ep}{\epsilon}
\newcommand{\ue}{u_\epsilon}
\newcommand{\ve}{v_\epsilon}
\newcommand{\jep}{j_\epsilon}
\newcommand{\loc}{ {\mbox{\scriptsize{loc}}} }
\newcommand{\R}{{\mathbb R}}
\newcommand{\C}{{\mathbb C}}
\newcommand{\T}{{\mathbb T}}
\newcommand{\Z}{{\mathbb Z}}
\newcommand{\N}{{\mathbb N}}
\newcommand{\Q}{{\mathbb Q}}
\newcommand{\Hdf}{{\mathcal{H}}}
\newcommand{\calA}{{\mathcal{A}}}
\newcommand{\calI}{{\mathcal{I}}}
\newcommand{\calR}{{\mathcal{R}}}
\newcommand{\calD}{{\mathcal{D}}}
\newcommand{\calL}{{\mathcal{L}}}
\newcommand{\calF}{{\mathcal{F}}}
\newcommand{\calG}{{\mathcal{G}}}
\newcommand{\calH}{{\mathcal{H}}}
\newcommand{\calJ}{{\mathcal{J}}}
\newcommand{\calM}{{\mathcal{M}}}
\newcommand{\massnorm}{{\bf{M}}}
\newcommand{\flatnorm}{{\bf {F}}}
\newcommand{\dist}{\operatorname{dist}}
\newcommand{\sign}{\operatorname{sign}}
\newcommand{\spt}{\operatorname{spt}}
\def\rest{\hskip 1pt{\hbox to 10.8pt{\hfill
\vrule height 7pt width 0.4pt depth 0pt\hbox{\vrule height 0.4pt
width 7.6pt depth 0pt}\hfill}}}
\newcommand{\bd}{\partial}

\newcommand{\beq}{\begin{equation}}
\newcommand{\eeq}{\end{equation}}

\def\logeps{{|\!\log\epsilon|}}

\theoremstyle{plain}
\newtheorem{theorem}{Theorem}
\newtheorem{proposition}{Proposition}
\newtheorem{lemma}{Lemma}
\newtheorem*{lemma2prime}{Lemma 2\,$'$}
\newtheorem{corollary}{Corollary}

\theoremstyle{definition}
\newtheorem{definition}{Definition}

\theoremstyle{remark}
\newtheorem{remark}{Remark}
\newtheorem{example}{Example}
\newtheorem{warning}{Warning}

\numberwithin{equation}{section}
\setcounter{tocdepth}{3}


\title[$\Gamma$-convergence of 3d Ginzburg-Landau functionals]
{Convergence of Ginzburg-Landau functionals in 3-d superconductivity}

\begin{abstract} In this paper we consider the asymptotic behavior of the Ginzburg-Landau model for superconductivity in 3-d, in various energy regimes. We rigorously derive, through an analysis via $\Gamma$-convergence, a reduced model for the vortex density, and deduce a curvature equation for the vortex lines. In the companion paper \cite{bjos} we describe further applications to superconductivity and superfluidity, such as general expressions for the first critical magnetic field $H_{c_1}$, and the critical angular velocity of rotating Bose-Einstein condensates.
\end{abstract}

\author{S. Baldo \and  R.L. Jerrard \and G. Orlandi \and H.M. Soner}

\address{Department of Mathematics, University of Verona, Verona, Italy}\email{sisto.baldo@univr.it}
\address{Department of Mathematics, University of Toronto,
Toronto, Ontario, Canada}\email{rjerrard@math.toronto.edu}
\address{Department of Mathematics, University of Verona, Verona, Italy}\email{giandomenico.orlandi@univr.it}
\address{Department of Mathematics, ETH Z\"urich, Z\"urich, Switzerland}\email{mete.soner@math.ethz.ch}
\maketitle



\section{INTRODUCTION}

In this paper we investigate the asymptotic behavior as $\ep\to 0$ of the functionals
\[
E_\ep(u) \equiv E_\ep(u;\Omega)= \int_\Omega e_\ep(u)\ dx=\int_\Omega \frac 12 |Du|^2 +\frac 1{\ep^2}W(u)\ dx ,
\]
where $\ep>0$, $\Omega$ is a bounded Lipschitz domain in $\R^3$, $u=u^1+iu^2\in H^1(\Omega; \C)$,
$W:\R^2\simeq\C\to \R$ is nonnegative and continuous, $W(u)= 0 \iff |u|=1$, and is assumed to satisfy some growth condition at infinity and around its zero set (see hypothesis $(H_q)$ below).

In the case $W(u)=\frac{(1-|u|^2)^2}{4}$, one usually refers to $E_\ep$ as the Ginzburg-Landau functional. This model is relevant to a variety of phenomena in quantum physics and in fact, as corollaries of its asymptotic analysis we will derive, here and in the companion paper \cite{bjos}, reduced models for density of vortex lines (or curves) in 3-d superconductivity and Bose-Einstein condensation.
In these physical application, $\ep$ represents a (small) characteristic length, $u$ corresponds to a wavefunction, $|u|^2$ to the density of superconducting or superfluid material contained in $\Omega$. Moreover, the {\it momentum}, defined as the 1-form 
$$ju\equiv(iu, du)\equiv u^1du^2 - u^2 du^1\, ,$$
represents the superconducting (resp. superfluid) current, and hence it is natural to interpret the Jacobian $Ju\equiv du^1\wedge du^2$ as the {\it vorticity}, since $2Ju=d(ju)$. 
We refer the reader to the Appendix for  notation used throughout this paper and background on differential forms and related material.

\medskip
\noindent
In the 2-d case it has been recognized since \cite{bbh} that for minimizers $u_\ep$ of $E_\ep$ (subject to appropriate boundary conditions), as $\ep\to 0$, typically the energy scales like $\logeps$ and there are a finite number of singular points, called {\it vortices}, where the energy density $e_\ep(u_\ep)dx$ and the vorticity $Ju_\ep$ concentrate. Moreover, the rescaled energy $\frac{E_\ep(u_\ep)}{\logeps}$ controls the total vorticity. These phenomena are robust, in the sense that analogous results hold in higher dimensions (see \cite{lr,bbo}, where the limiting vorticity is supported in a codimension 2 minimal surface) and under weaker assumptions on $u_\ep$, as stated in the following $\Gamma$-convergence result:

\begin{theorem}[\cite{js,abo}] \label{thm:1} Let $K>0$, $n\ge 2$, $\Omega\subset\R^n$ be a bounded Lipschitz domain, and the potential $W$ satisfy the growth condition\footnote{cf. condition (2.2) in \cite{abo}.}
$$
\liminf_{|u|\to\infty}\frac{W(u)}{|u|^q}>0\, ,\qquad\liminf_{|u|\to 1}\frac{W(u)}{(1-|u|)^2}>0\, ,
 \leqno{(\text{$H_q$})}
 $$ 
for some $q\ge 2$. Then the following statements hold:

\smallskip
\noindent
 {\rm (i)  Compactness and lower bound inequality.} For any  sequence $u_\ep\in H^1(\Omega,\C)$ such that 
$$
E_\ep(u_\ep)\le K\logeps\, ,\leqno{(\text{$H_0$})}
$$ 
we have, up to a subsequence, $Ju_\ep\to J$ in  $W^{-1,p}$ for every  $p<\frac{n}{n-1}$\, ,
where $J$ is an exact measure-valued 2-form in $\Omega$ with finite mass $||J||\equiv|J|(\Omega)$, and $J$ has the structure of an $(n-2)$-rectifiable boundary with multiplicities in $\pi\cdot\Z$. Moreover,
 \beq\liminf_{\ep\to 0} \frac{{E}_\ep(u_\ep)}{\logeps}\ge ||J||.\eeq

\noindent
{\rm (ii) Upper bound (in)equality.}
For any exact measure-valued 2-form $J$ having the structure of an $(n-2)$-rectifiable boundary in $\Omega$ with multiplicities in $\pi\cdot\Z$, there exist $u_\ep\in H^1(\Omega,\C)$ s.t. $Ju_\ep\to J$ in $W^{-1,p}$ for every $p<\frac{n}{n-1}$, and
\beq\lim_{\ep\to 0} \frac{{E}_\ep(u_\ep)}{\logeps}= ||J||.\eeq

\end{theorem}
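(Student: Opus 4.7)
My plan is to deduce Theorem~\ref{thm:1} in dimension $n\ge 3$ from the well-developed $2$-dimensional Ginzburg--Landau theory via slicing, combined with a ball-construction lower bound per slice and rectifiability results recovered from the slice information.

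\textbf{Part (i): Compactness and lower bound.} First I would establish the $2$-d ingredient: if $v_\ep\in H^1(B^2,\C)$ satisfies $E_\ep(v_\ep;B^2)\le K\logeps$, the Jerrard--Sandier ball construction yields $Jv_\ep\to\mu$ in $W^{-1,1}$ for a signed atomic measure $\mu=\pi\sum d_i\delta_{x_i}$, $d_i\in\Z$, with lower bound $\liminf E_\ep(v_\ep;B^2)/\logeps\ge\|\mu\|$. For general $n$ I would foliate $\Omega$ by families of affine $2$-planes $\{P_t\}$ in several directions and apply Fubini to both the energy density and to $Ju_\ep$ to get $\int E_\ep(u_\ep|_{P_t})\,dt\le C E_\ep(u_\ep)\le CK\logeps$; for a.e.\ slice the $2$-d bound applies and $Ju_\ep|_{P_t}$ subsequentially converges to a point measure with weights in $\pi\Z$. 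Using flat-norm compactness coming from the Jacobian estimate $\|Ju_\ep\|_{(C^{0,\alpha})^\ast}\lesssim E_\ep/\logeps$ (so $Ju_\ep$ is pre-compact in $W^{-1,p}$ for $p<n/(n-1)$), I extract a subsequence $Ju_\ep\to J$, run the slicing argument simultaneously over sufficiently many directions, and invoke the Ambrosio--Kirchheim slicing characterization of rectifiable currents to conclude that $J$ is $(n-2)$-rectifiable with multiplicity in $\pi\Z$. Exactness of $J$ is inherited from $2Ju_\ep=d(ju_\ep)$. The lower bound then passes to the limit by Fatou:
\[
\liminf_\ep\frac{E_\ep(u_\ep)}{\logeps}\;\ge\; \int\liminf_\ep\frac{E_\ep(u_\ep|_{P_t})}{\logeps}\,dt \;\ge\;\int\|J|_{P_t}\|\,dt \;=\;\|J\|,
\]
the last step being the integral-geometric mass identity for rectifiable currents.

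\textbf{Part (ii): Upper bound.} By a density argument in the flat topology for rectifiable boundaries with multiplicity in $\pi\Z$, it suffices to realize $J$ corresponding to a finite union of smooth oriented $(n-2)$-submanifolds $\Sigma_k$ carrying multiplicities $\pi d_k$, $d_k\in\Z$. For a single piece $\Sigma$ with multiplicity $\pi d$ I use Fermi coordinates $(y,r,\theta)$ with $y\in\Sigma$ and $(r,\theta)$ polar in the normal $2$-plane, and set $u_\ep(y,r,\theta)=f(r/\ep)e^{id\theta}$ on a tube of radius $\rho\gg\ep$, with $f:[0,\infty)\to[0,1]$ the standard Ginzburg--Landau radial profile ($f(0)=0$, $f(\infty)=1$). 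A direct calculation gives energy $\pi |d|\logeps\,\Hdf^{n-2}(\Sigma)+O(1)$ inside the tube. Outside the tubes I choose a unimodular extension: such an extension on $\Omega\setminus\bigcup_k\Sigma_k$ exists precisely because $J$ is exact, which guarantees that the prescribed multi-valued phase has no global obstruction (and the energy of the modulus-one extension is $O(1)$). Summing yields $\lim_\ep E_\ep(u_\ep)/\logeps=\sum_k\pi|d_k|\Hdf^{n-2}(\Sigma_k)=\|J\|$, and $W^{-1,p}$-convergence $Ju_\ep\to J$ is transparent from the construction.

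\textbf{Main obstacles.} The hardest step is the promotion from slicewise information to a global rectifiable structure with $\pi\Z$-quantized multiplicity in part (i): per slice one has atomic measures in $\pi\Z$, but assembling an $(n-2)$-rectifiable current globally requires careful application of Ambrosio--Kirchheim slicing (or White's rectifiability criterion), controlling an uncountable family of slicing directions and an exceptional set of slices of zero measure in each family. In part (ii) the main technical points are the unimodular extension outside the tubes and the density reduction from general rectifiable exact boundaries to smooth (polyhedral) pieces; both rely crucially on exactness of $J$ and on a flat-approximation theorem for rectifiable boundaries with quantized multiplicity.
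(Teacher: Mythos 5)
Theorem~\ref{thm:1} is stated here as a background result, with the proofs residing in the cited papers. Of those, \cite{js} takes a route close in spirit to yours (slicing, $2$-d ball constructions, duality/flat-norm compactness), whereas \cite{abo} — and the present paper, in its Proposition~\ref{prop:3.1} which extends the lower bound to $g_\ep\gg\logeps$ — uses a grid discretization: approximate $Ju_\ep$ by a polyhedral cycle supported on the dual $1$-skeleton of a fine grid, control degrees on $2$-faces by local energy, and exploit lower semicontinuity of mass on the limiting polyhedral approximations. So your proposal is a genuinely different route from the one this paper pursues for the related statements, but not from one of the two cited sources.

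There is, however, a real gap in your lower bound argument: the final equality
$\int\|J|_{P_t}\|\,dt=\|J\|$ for a single family of parallel $2$-planes is false. If $J=\pi\theta\,\Hdf^{n-2}\rest\Sigma\wedge\tau$ and you slice perpendicular to a fixed simple $(n-2)$-covector $\omega$, the coarea formula for currents gives
$\int\|J|_{P_t}\|\,dt=\pi\int_\Sigma\theta\,|\langle\tau(y),\omega\rangle|\,d\Hdf^{n-2}(y)\le\|J\|$,
with equality only if the tangent plane of $\Sigma$ is everywhere orthogonal to the slicing planes. The inequality goes the wrong way, so your chain of estimates only yields $\liminf E_\ep/\logeps\ge\int\|J|_{P_t}\|\,dt$, which can be strictly smaller than $\|J\|$. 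The standard fix is to localize: cover $\Sigma$ by disjoint balls $B_j$ on which the approximate tangent is within $\delta$ of a constant $\tau_j$, slice each $B_j$ perpendicular to $\tau_j$, run the Fubini--Fatou--$2$d argument separately on each $B_j$, sum, and send $\delta\to0$. This localization is precisely the point of the comass construction underlying \eqref{eq:prop:3.1loc}--\eqref{eq:prop:3.1final}, and is also why for $n\ge4$ the argument only delivers the comass (Remark~\ref{rem:open}) rather than the mass when the limit is not rectifiable; for rectifiable $J$ the approximate tangent is a.e.\ simple, so after localization the comass and mass agree, but without localization the argument does not close.

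Two smaller points worth tightening. First, the per-slice lower bound requires identifying the $2$-d limit $J_t$ with the slice of $J$ for a.e.\ $t$ along the \emph{same} subsequence for which $Ju_\ep\to J$; this follows from flat-norm convergence of slices, but the diagonal extraction ``simultaneously over sufficiently many directions'' needs to be made precise (this is a genuine technical portion of \cite{js}). Second, in the upper bound, the existence of the unimodular extension is not guaranteed by exactness of $J$ as a distributional $2$-form alone; you need the stronger hypothesis already in the statement, namely that $J/\pi$ is the boundary of an integer-multiplicity rectifiable $(n-1)$-current in $\Omega$. That integral-boundary structure is what removes the $H^1(\Omega\setminus\cup\Sigma_k;\Z)$ obstruction to lifting the phase, and is what the Hodge/Biot--Savart construction in Section~\ref{sect:3.6} exploits.
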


Other energy regimes arise naturally for $E_\ep$ and are interesting for applications. In particular the energy regime $E_\ep(u_\ep) \approx \logeps^2$ corresponds to the onset of the mixed phase in type-II superconductors, and to the appearance of vortices in Bose-Einstein condensates. These situations have been extensively studied in the 2-d case, especially by Sandier and Serfaty in the case of superconductivity (see \cite{ss} and references therein).  In this energy regime, the number of vortices is of order $\logeps$, hence unbounded as $\ep\to 0$. Another feature is that the contribution of the vortices to the energy is of the same order as the contribution of the momentum, so that the limiting behavior can be described in term of this last quantity, suitably normalized. A $\Gamma$-convergence result for $\frac 1{g_\ep}E_\ep$ for general energy regimes $E_\ep(u_\ep) \lesssim g_\ep  \ll \ep^{-2}$ has been  proved, in the 2-d case, in \cite{js2}, see also \cite{ss}.

\subsection{Main results}
A first result of this paper extends the asymptotic analysis of \cite{js2} to the 3-d case. We write $f_\ep\ll h_\ep$ (or $h_\ep\gg f_\ep$) to express $f_\ep=o(h_\ep)$ as $\ep\to 0$.
We will use the notation
\beq
 \calA_0 := \{(J,v) : J\mbox{ is an exact measure-valued 2-form in $\Omega$,}\ 
v\in L^2(\Lambda^1\Omega)\}
\label{eq:calA0def}\eeq 
Measure-valued $k$-forms are discussed in the Appendix, see in particular Sections \ref{sect:mvf} and \ref{sect:mvftop}. Our conventions imply that a measure-value form $J$ has finite
mass, so that  $\| J \| := |J|(\Omega)<\infty$, where $|J|$ denotes the total variation measure associated with $J$. We say that a measure-valued $k$-form $J$  is {\em exact} if
$J = dw$ in the sense of distributions for some measure-valued $k-1$-form $w$.
We show in Lemma \ref{lem:dp} that a measure-valued $(n-1)$-form
$J$ on a smooth bounded open $\Omega\subset \R^n$ is exact if and only if $dJ=0$ and the associated flux through each component of the boundary $\partial \Omega$ vanishes. The latter condition follows automatically from the former if $\partial \Omega$ is connected.

\begin{theorem}\label{thm:2} Let $\Omega$ be a bounded Lipschitz domain in $\R^3$,  $W(u)$ satisfy $(H_q)$ for some $q\ge 2$, and $\logeps\ll g_\ep\ll\ep^{-2}$. Then the following statements hold:

\smallskip
\noindent
 {\rm (i)  Compactness and lower bound inequality.} For any  sequence $u_\ep\in H^1(\Omega,\C)$ such that
$$
E_\ep(u_\ep)\le Kg_\ep\, ,\leqno{(\text{$H_g$})} \qquad \mbox{ for some }K>0,
$$ 
there exist $(J,v)\in \calA_0$ such that after passing to a subsequence if necessary,
 \beq\label{eq:convju} |\ue|\to 1 \quad\text{in } L^q(\Omega),\qquad
 \frac{ju_\ep}{|u_\ep|\sqrt{g_\ep}}\rightharpoonup v \quad\text{weakly in } {L^2(\Lambda^1\Omega)\, , }
 \eeq
 \beq\label{eq:convu}
\frac{ju_\ep}{\sqrt{g_\ep}}\rightharpoonup v \quad\text{weakly in } {L^{\tfrac{2q}{q+2}}(\Lambda^1\Omega)\, }.
 \eeq
If $g_\ep \le \logeps^2$, then in addition
\beq\label{eq:convJ}
\frac{\logeps}{g_\ep}Ju_\ep=\frac{\logeps}{2g_\ep}d(ju_\ep) \to J\qquad\text{in }  W^{-1,p}(\Lambda^2\Omega)\, \quad\forall\, p<3/2 .
\eeq
The convergences in \eqref{eq:convu} and \eqref{eq:convJ} yield, in different scaling regimes,
\[
\mbox{if $\logeps \ll  g_\ep \ll \logeps^2$ then $(J,v)\in 
\calA_1 := \{ (J, v) \in \calA_0:  dv = 0\}$,}
\leqno{(\text{$S_1$})} 
\]
\[
\mbox{ if $ g_\ep = \logeps^2$ then $(J,v)\in \calA_2:= \{ (J, v)\in \calA_0 :J = \frac 12 dv \in H^{-1}(\Lambda^2\Omega)  \}$,}
\leqno{(\text{$S_2$})} 
\]
\[
\mbox{ if $\logeps^2\ll g_\ep \ll \ep^{-2}$ then $(J,v)\in \calA_3 := \{ (J, v)\in \calA_0 : J=0 \}$}.
\leqno{(\text{$S_3$})} 
\]
and in every case, 
\beq\label{eq:gammaliminf}
\liminf_{\ep\to 0} \frac{{E}_\ep(u_\ep)}{g_\ep}\ge ||J|| +\frac{1}{2}|| v ||^2_{L^2(\Lambda^1\Omega)}\, .
\eeq

\noindent
{\rm (ii) Upper bound (in)equality.}
Assume that $(g_\ep)_{\ep>0}$ satisfies one of the scaling conditions $(S_k)$, $k\in \{1,2,3\}$, identified above, and that $(J,v)\in \calA_k$.
Then $\exists\,  U_\ep\in H^1(\Omega;\C)$ such that \eqref{eq:convju}, \eqref{eq:convu}, \eqref{eq:convJ} hold, and
\beq\label{eq:gammalimsup}
\lim_{\ep\to 0} \frac{{E}_\ep(U_\ep)}{g_\ep}=||J|| +\frac{1}{2}|| v ||^2_{L^2(\Lambda^1\Omega)}.\eeq
\end{theorem}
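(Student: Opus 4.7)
The overall strategy mirrors the 2-d case \cite{js2}, combined with slicing techniques from \cite{abo} to reduce codimension-2 statements to their 2-d counterparts. For part (i), the potential bound $\int W(u_\ep)\le Kg_\ep\ep^2=o(1)$ together with $(H_q)$ gives $|u_\ep|\to 1$ in $L^q$. The pointwise identity $|Du|^2 = |D|u||^2 + |ju|^2/|u|^2$ on $\{u\ne 0\}$, combined with $E_\ep(u_\ep)\le Kg_\ep$, yields an $L^2$-bound on $ju_\ep/(|u_\ep|\sqrt{g_\ep})$, producing the weak limit $v$; Hölder together with $|u_\ep|\to 1$ in $L^q$ then upgrades this to \eqref{eq:convu}. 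The $W^{-1,p}$-compactness of $(\logeps/g_\ep)Ju_\ep$ follows from the Jacobian estimate of \cite{js,abo}. The three scaling structures fall out of the identity
\[
\frac{\logeps}{g_\ep}Ju_\ep \;=\; \frac{\logeps}{2\sqrt{g_\ep}}\cdot\frac{d(ju_\ep)}{\sqrt{g_\ep}},
\]
since $d(ju_\ep)/\sqrt{g_\ep}\to dv$ distributionally: the prefactor $\logeps/(2\sqrt{g_\ep})$ tends to $\infty$, $1/2$, or $0$ respectively, forcing $dv=0$, $J=\tfrac 12 dv$, or $J=0$.

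For the lower bound \eqref{eq:gammaliminf}, the key is to decouple concentrated vortex energy from bulk kinetic energy. Using the 2-d ball construction on slices transverse to the vortex tubes (as in \cite{abo}), I would extract a set $A_\ep\subset\Omega$ with $|A_\ep|\to 0$ such that $\liminf_{\ep\to 0} g_\ep^{-1}\int_{A_\ep} e_\ep(u_\ep)\ge \|J\|$. On $\Omega\setminus A_\ep$ one has $|u_\ep|$ close to $1$, hence
\[
\frac{1}{g_\ep}\int_{\Omega\setminus A_\ep}e_\ep(u_\ep)\;\ge\; \frac{1}{2g_\ep}\int_{\Omega\setminus A_\ep}\frac{|ju_\ep|^2}{|u_\ep|^2},
\]
and weak $L^2$-lower semicontinuity, together with $\chi_{\Omega\setminus A_\ep}\to 1$ a.e., produces $\tfrac 12\|v\|_{L^2}^2$ in the limit. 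Summing these two contributions gives \eqref{eq:gammaliminf}.

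For the upper bound (ii), given $(J,v)\in\calA_k$, I would first approximate $J$ by a polyhedral $\pi\Z$-valued $1$-cycle $J_\ep$ with $(\logeps/g_\ep)J_\ep\to J$ and $\|J_\ep\|=(1+o(1))\tfrac{g_\ep}{\logeps}\|J\|$, then invoke Theorem \ref{thm:1}(ii) to produce $u_\ep^{(1)}$ carrying the vortex skeleton $J_\ep$ with energy $(1+o(1))g_\ep\|J\|$. After mollifying $v$ to a smooth form $v_\eta$ and performing a Hodge decomposition to handle possibly non-exact pieces, one builds a phase $\phi_\ep$ with $d\phi_\ep$ approximating $\sqrt{g_\ep}\,v_\eta$, and sets $U_\ep = u_\ep^{(1)}e^{i\phi_\ep}$. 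Expanding
\[
|DU_\ep|^2 = |Du_\ep^{(1)}|^2 + |u_\ep^{(1)}|^2|d\phi_\ep|^2 + 2\langle ju_\ep^{(1)},d\phi_\ep\rangle
\]
yields the desired leading-order energy, provided the cross term is $o(g_\ep)$, which follows from the smoothness of $d\phi_\ep$ and the concentration of $ju_\ep^{(1)}$ on a negligible set. A diagonal argument in $\eta$ concludes.

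The principal obstacle lies in the critical regime $(S_2)$, where $J=\tfrac 12 dv$ ties the singular and smooth parts of the limit together. For the lower bound, one must ensure that the singular portion of $ju_\ep$ near vortex cores does not spuriously contribute to $\tfrac 12\|v\|^2$; this requires a delicate choice of scale for the excision set $A_\ep$, balancing vortex localization against recovery of the full $L^2$-mass of $v$ on the complement. For the upper bound, since $v$ need not be exact, $\phi_\ep$ is only locally defined, and reconciling this multi-valued phase with the prescribed skeleton $J_\ep$ demands a gauge choice compatible with $J=\tfrac 12 dv$. In both directions the 3-d geometry is handled by slicing, so the quantitative balancing is ultimately reduced to the 2-d estimates of \cite{js2}.
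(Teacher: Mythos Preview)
Your outline for part (i) is essentially correct and matches the paper: the convergence \eqref{eq:convju}--\eqref{eq:convu} follows exactly as you say, and the lower bound \eqref{eq:gammaliminf} is obtained by isolating a small ``vortex set'' $C_\ep$ (built via the grid discretization of \cite{abo}) that carries the mass $\|J\|$, while weak $L^2$ lower semicontinuity on $\Omega\setminus C_\ep$ supplies $\tfrac12\|v\|^2$.

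For part (ii), however, there is a genuine gap in the critical regime $(S_2)$, and the stated energy accounting is wrong. You claim that invoking Theorem~\ref{thm:1}(ii) yields $u_\ep^{(1)}$ with energy $(1+o(1))g_\ep\|J\|$. But Theorem~\ref{thm:1} concerns a \emph{fixed} rectifiable $J$ at scale $\logeps$; it does not apply as a black box to $\ep$-dependent $J_\ep$ with $\|J_\ep\|\to\infty$. More importantly, for any reasonable recovery sequence with $\sim g_\ep/\logeps$ vortex lines, the pairwise interactions among lines contribute at order $g_\ep$, not $o(g_\ep)$: writing the Hodge decomposition $v=\gamma+d\alpha+d^*\beta$ (so $J=\tfrac12 dd^*\beta$), the energy of the vortex configuration alone is $(1+o(1))g_\ep\bigl(\|J\|+\tfrac12\|d^*\beta\|_{L^2}^2\bigr)$, and correspondingly $ju_\ep^{(1)}/\sqrt{g_\ep}\rightharpoonup d^*\beta$, \emph{not} $0$. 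Your cross-term argument (``concentration of $ju_\ep^{(1)}$ on a negligible set'') is therefore incorrect as stated. The final numbers could still work out---the phase $\phi_\ep$ need only capture the closed part $\gamma+d\alpha$, and orthogonality handles the cross term---but this requires identifying the weak limit of $ju_\ep^{(1)}/\sqrt{g_\ep}$ precisely as $d^*\beta$, which no black-box statement supplies.

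The paper takes a fundamentally different route and does \emph{not} reduce the upper bound to 2-d via slicing. After piecewise-linear approximation of $v$ (Lemma~\ref{lem:fem}) and Hodge decomposition on a smooth domain, the vorticity $dd^*\beta$ is discretized by an explicit collection $\Gamma_h$ of line segments with quantized flux $h=g_\ep^{-1/2}$ and controlled separation (Proposition~\ref{prop:1}). One then writes down $u_\ep=\rho_\ep\exp\bigl(i\tfrac{2\pi}{h}\int d^*\beta_h + \ldots\bigr)$ directly: the linking-number computation \eqref{eq:rmodz} makes the exponential single-valued, and genuinely three-dimensional Biot--Savart pointwise estimates (Lemmas~\ref{lem:pointwise1}--\ref{lem:convolbeta}) split the energy of $d^*\beta_h$ into the near-field contribution $\|J\|$ and the far-field contribution $\tfrac12\|d^*\beta\|_{L^2}^2$. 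This integrated construction is what resolves the ``gauge'' issue you identify; separating vortex skeleton from phase, as you propose, does not.
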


The compactness and lower bound assertions are either very easy, already  known, see for example \cite{ss-jfa}, or are proved almost exactly as in the 2d case. 
The upper bound \eqref{eq:gammalimsup} is the main new part of the theorem,  and constitutes the most difficult part of the theorem. 

\begin{remark} Assume that $(g_\ep)_{\ep>0}$ satisfies one of the scaling conditions $(S_k)$, $k\in \{1,2,3\}$, identified above, and for $(J,v)\in \calA_0$, 
set
\beq\label{eq:EJv}
E(J,v):=||J||+\frac{1}{2}|| v||^2_{L^2(\Lambda^1\Omega)}\quad\quad  \mbox{ if }(J,v)\in \calA_k,
\eeq
and $E(J,v) := +\infty$ if $(J,v)\not \in \calA_k$. 
We express the $\Gamma$-convergence result of Theorem \ref{thm:2} using the notation
\beq\label{eq:gammalim0}
\frac{E_\ep(u_\ep)}{g_\ep}\xrightarrow{\Gamma} E(J,v),
\eeq
where the $\Gamma$-limit is intended with respect to  the convergences \eqref{eq:convju},\eqref{eq:convu},\eqref{eq:convJ}.
Notice that the contributions of vorticity and momentum are decoupled in the $\Gamma$-limit, due to the different scaling factors in \eqref{eq:convu}, \eqref{eq:convJ}, except for the critical regime $g_\ep=\logeps^2$,
 where the scalings of $Ju_\ep$ and $ju_\ep$ coincide, and the limits
satisfy $2J = dv$ (see section \ref{sect:E} below). In particular, Theorem \ref{thm:2} expresses the fact that for regimes $g_\ep\ll \logeps^2$, the contribution to the energy is given by the vorticity and the curl-free part of the momentum, while for $g_\ep\gg\logeps^2$ the contribution of the vorticity  vanishes asymptotically.
\end{remark} 
\begin{remark} As observed in \cite{js,abo}, replacing $W(u)$ by $\sigma\cdot W(u)$, $\sigma>0$, and letting $\sigma\to 0$, the lower bound \eqref{eq:gammaliminf} can be sharpened to
\beq\label{eq:gammaliminf'}
\liminf_{\ep\to 0}\int_\Omega \frac{|\nabla u_\ep|^2}{2g_\ep}\ge ||J|| +\frac{1}{2}|| v ||^2_{L^2(\Lambda^1\Omega)}\, .
\eeq
Moreover, for a sequence $u_\ep$ satisfying \eqref{eq:gammalimsup}, the potential part of the energy is a lower order term, i.e. 
\beq\label{eq:negligible}
\int_\Omega\frac{W(u_\ep) }{\ep^2}=o(g_\ep)\qquad\text{as } \ep\to 0.
\eeq
Inequality \eqref{eq:gammaliminf'} is also proved in \cite{ss-jfa}.
\end{remark}
\begin{remark} In the 2-d case the $\Gamma$-convergence result of \cite{js2} is formulated exactly as Theorem \ref{thm:2} above, except for the convergence of the normalized Jacobians $\frac{\logeps}{g_\ep}Ju_\ep$, that takes place there in $W^{1,p}$ for any $p<2$.  
\end{remark}
\begin{remark}\label{rem:localization} By localization, Theorem \ref{thm:2} implies the following: for any $u_\ep$ satisfying $(H_g)$, the rescaled energy densities $\frac{e_\ep(u_\ep)dx}{g_\ep}$ converge
weakly as measures in $\Omega$, upon passing to a subsequence, to a limiting measure $\mu$,  with $|J|+\frac{v^2}{2}dx \le \mu$. It then follows that 
$\mu=|J|+\frac{v^2}{2}dx$ for any sequence $(u_\ep)$ such that the convergences \eqref{eq:convju}, \eqref{eq:convu}, \eqref{eq:convJ} and the upper bound equality \eqref{eq:gammalimsup} hold. 
\end{remark}
\begin{remark}
The final compactness assertion \eqref{eq:convJ} is proved by establishing convergence
in $W^{-1,1}$, and then interpolating, using the easy estimate $\| Ju_\ep \|_{L^1} \le \| Du\|_{L^2}$.
For $\logeps\ll g_\ep \ll \ep^{-2}$, \eqref{eq:convu} already implies that $\frac{\logeps}{g_\ep}Ju_\ep\to 0$ in $W^{-1, \frac{2q}{q+2}}$. This can also be improved by interpolating with $L^1$ estimates
(which imply $W^{-1, 3/2}$ estimates)
if $\frac {2q}{q+2} < \frac 32$.
\end{remark}
\begin{remark}\label{rem:convJn} The convergences \eqref{eq:convju},\eqref{eq:convu},\eqref{eq:convJ} have been already established in the analysis of \cite{js,abo,js2}. In particular, for a domain $\Omega\subset\R^n$ with $n\ge 4$, \eqref{eq:convju} and \eqref{eq:convu} still hold true, while the normalized Jacobians converge to $J$ in $W^{-1,p}$ for any $p<\frac{n}{n-1}$. Moreover, assuming $g_\ep\le \ep^{-\gamma}$ for some $0<\gamma<2$, the convergence in \eqref{eq:convu} can be improved according to $\gamma$, see \cite{js2}.
In \cite{bos-app}, following \cite{bbm}, the convergence in \eqref{eq:convJ} has been proved also to hold in $W^{1,\frac{n}{n-1}}$ (as well as in fractional spaces $W^{s,p}$ with $sp=n/(n-1)$) for $n\ge 4$, and even in the case $n=3$, assuming the condition $u\in L^q(\Omega)$ for $q>6$ (see \cite{bos-app}, Theorem 1.3 and Remark 1.6). 
\end{remark}
\begin{remark}\label{rem:logeps}
In the scaling $g_\ep = \logeps$ studied in Theorem \ref{thm:1}, 
arguments in the proof of Theorem \ref{thm:2} can easily be adapted to show that 
$
\frac{E_\ep(u_\ep)}{g_\ep}\xrightarrow{\Gamma} E(J,v),
$
where the $\Gamma$-limit is again intended with respect to  the convergences \eqref{eq:convju},\eqref{eq:convu},\eqref{eq:convJ}, and where $E(J,v)$
is defined exactly as in \eqref{eq:EJv}, except that  $E(J,v)$
is set equal to $+\infty$ unless $dv=0$ {\em and } $J$ has the structure of a rectifiable boundary. This is an improvement
over Theorem \ref{thm:1} (cf. analogous results in \cite{bos-rings} for critical points of $E_\ep$, and in \cite{bow} for minimizers with local energy bounds), and in fact is valid in $\R^n$ for any $n\ge 3$. 
\end{remark}
\begin{remark}\label{rem:open} The validity of \eqref{eq:gammaliminf}, \eqref{eq:gammalimsup} in dimension $n\ge 4$ remains an open issue for energy regimes $g_\ep \gg\logeps$. A major difficulty is to determine the correct generalization of the total variation term $\|J\|$ in \eqref{eq:EJv}. Different candidates include the total variation
with respect to the comass norm, the Euclidean norm, and the mass norm, see
\cite{F}. For measure-valued $2$-forms in $\R^3$, all of these coincide. 

The most reasonable conjecture is that the mass norm is the suitable one for the higher-dimensional generalization of Theorem \ref{thm:2}, but this seems difficult to prove.
The arguments we give to prove \eqref{eq:gammaliminf} are in fact presented in $\R^n$,
and for $n\ge 4$ prove that \eqref{eq:gammaliminf} holds  with $\|J\|$ replaced by the {\em comass} of $J$, which in general is strictly less than the mass of $J$. Lower bounds involving the comass norm
in $\R^n, n\ge 4$, are also proved in \cite{ss-jfa}.

By way of illustration,  for the (constant) measure-valued $2$-form $J= dx^1\wedge dx^2 + dx^3 \wedge dx^4$  on an open set $\Omega\subset \R^4$, 
one has comass$(J) = |\Omega|$, the  Euclidean total variation of $J$ is $\sqrt{2} |\Omega|$, and mass$(J) = 2|\Omega|$. 

For $\logeps^2 \ll g_\ep \ll \ep^{-2}$,  the total variation term does not appear in the limiting functional,
so the issue of mass versus comass does not arise, and the proof of the lower bound \eqref{eq:gammaliminf} is straightforward; in fact it follows from arguments we give here. 
The upper bound \eqref{eq:gammalimsup} is probably also easier  in this case 
than for $\logeps \ll g_\ep \le \logeps^2$.

\end{remark}

Replacing assumption $(H_q)$ for $W(u)$ with the  following one (verified in particular for sequences of minimizers)
$$
\exists \, C> 1 \quad\text{such that }   |u_\ep|\le C \qquad\forall \ep<1\, ,\leqno{(\text{$H_\infty$})}
$$
and taking into account Remark \ref{rem:convJn}, a variant of Theorem \ref{thm:2} can be formulated as follows:
%
%
\begin{theorem}\label{thm:2bis} In the hypotheses of Theorem \ref{thm:2}, we have

\smallskip
\noindent
{\rm (i) Compactness.} For any  sequence $u_\ep\in H^1(\Omega,\C)$ verifying $(H_g)$ and $(H_\infty)$
 we have, up to a subsequence,
 \beq\label{eq:convjubis}
 \frac{ju_\ep}{\sqrt{g_\ep}}\rightharpoonup v \ \text{weakly in } L^2(\Lambda^1\Omega)\, ,\quad
\frac{\logeps}{g_\ep}Ju_\ep\to J\ \text{in }  W^{-1,3/2}(\Lambda^2\Omega)\,  ,
\eeq
where $J$ is an exact measure-valued 2-form in $\Omega$, with finite mass $||J||\equiv|J|(\Omega)$. 

\smallskip
\noindent
{\rm (ii) $\Gamma$-convergence.}  Assuming that $g_\ep$ respects one of the scaling conditions $S_k$ from Theorem \ref{thm:2},
we have
\beq\label{eq:gammalimbis}
\frac{E_\ep(u_\ep)}{g_\ep}\xrightarrow{\Gamma} E(J,v),
\eeq
with respect to the convergence \eqref{eq:convjubis},
where $E(J,v)$ is defined in \eqref{eq:EJv},
taking into account the relevant scaling regime.
\end{theorem}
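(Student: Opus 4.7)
Since $(H_\infty)$ is strictly stronger than the hypotheses of Theorem~\ref{thm:2}, the plan is to extract the slightly strengthened compactness \eqref{eq:convjubis} and then to deduce the $\Gamma$-convergence from Theorem~\ref{thm:2} essentially for free.

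For the first half of \eqref{eq:convjubis}, I would use $(H_\infty)$ to bound
$$
|ju_\ep|\,\le\,2|u_\ep|\,|Du_\ep|\,\le\,2C\,|Du_\ep|,
$$
so that $(H_g)$ yields $\|ju_\ep\|_{L^2}^2\le 4C^2\|Du_\ep\|_{L^2}^2\le 8C^2 K g_\ep$. Hence $ju_\ep/\sqrt{g_\ep}$ is bounded in $L^2(\Lambda^1\Omega)$ and extracts a weak limit $v$; this already collapses the two convergences \eqref{eq:convju} and \eqref{eq:convu} of Theorem~\ref{thm:2} into a single $L^2$-weak statement, with no need for the normalizing factor $1/|u_\ep|$. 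For the Jacobians, Theorem~\ref{thm:2} provides $\tfrac{\logeps}{g_\ep} Ju_\ep\to J$ in $W^{-1,p}$ for every $p<3/2$; to upgrade this to the critical exponent $p=3/2$ I would invoke the refinement of \cite{bos-app,bbm} recalled in Remark~\ref{rem:convJn}, whose hypothesis $u\in L^q$ with $q>6$ is trivially satisfied under $(H_\infty)$.

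The $\Gamma$-lower bound then transfers verbatim from Theorem~\ref{thm:2}: weak $L^2$-convergence of $ju_\ep/\sqrt{g_\ep}$ implies the weak $L^{2q/(q+2)}$-convergence in \eqref{eq:convu}, while $W^{-1,3/2}$-convergence of the normalized Jacobians implies $W^{-1,p}$-convergence for every $p<3/2$, so that the assignment of $(J,v)$ to $\calA_1$, $\calA_2$ or $\calA_3$ according to the scaling regime $(S_k)$ proceeds exactly as before, and \eqref{eq:gammaliminf} becomes the desired lower bound for \eqref{eq:gammalimbis}.

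For the upper bound, the natural strategy is to recycle the recovery sequence $U_\ep$ built in Theorem~\ref{thm:2}. The only item to check is that $U_\ep$ itself satisfies $(H_\infty)$ uniformly in $\ep$; I expect this to be the main---though modest---obstacle. It should follow from the explicit form of the construction, in which $U_\ep$ is essentially a product $\rho_\ep\,e^{i\phi_\ep}$ with $\rho_\ep$ a standard radial vortex profile valued in $[0,1]$, suitably glued by smooth cut-offs away from the limiting vortex set. Once this uniform modulus bound is in place, the strengthened convergences \eqref{eq:convjubis} hold along $U_\ep$, and \eqref{eq:gammalimsup} yields the upper bound \eqref{eq:gammalimbis}.
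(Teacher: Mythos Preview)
Your proposal is correct and matches the approach implicit in the paper, which does not give a separate proof of Theorem~\ref{thm:2bis} but presents it as an immediate consequence of Theorem~\ref{thm:2} together with Remark~\ref{rem:convJn}. The only point worth tightening is that your ``modest obstacle'' is in fact empty: the recovery sequence of Theorem~\ref{thm:2} is explicitly $U_\ep=\rho_\ep\,v_h$ with $|v_h|=1$ and $0\le\rho_\ep\le 1$ (see \eqref{eq:roep}, \eqref{eq:uep}, \eqref{eq:vepsubcrit}), so $|U_\ep|\le 1$ holds by construction.
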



\subsection{The critical regime \mathversion{bold}$g_\ep=\logeps^2$\mathversion{normal}}\label{sect:E}
Let us specialize the statements of Theorems \ref{thm:2} and  \ref{thm:2bis} to the critical regime $g_\ep=\logeps^2$, where the scaling factors in \eqref{eq:convju}, \eqref{eq:convu},\eqref{eq:convJ} are equal, and hence 
the normalized vorticity is related to the momentum by the formula $2 J=dv$. We then have
\beq\label{eq:gammalimE}
\frac{E_\ep(u_\ep)}{\logeps^2}\xrightarrow{\Gamma} E(v),
\eeq
where, for $v\in L^2(\Lambda^1\Omega)$, we define 
\beq\label{eq:ev}
E(v):=E(\frac{dv}{2},v)=\frac{1}{2}||dv||+\frac{1}{2}||v||^2_{L^2(\Lambda^1\Omega)}
\eeq
if the mass $||dv||\equiv|dv|(\Omega)$ is finite, $E(v)=+\infty$ otherwise.
The $\Gamma$-limit is intended with respect to  the convergences \eqref{eq:convju},\eqref{eq:convu},\eqref{eq:convJ}.

Clearly Theorem \ref{thm:2bis}  yields the same conclusion \eqref{eq:gammalimE}, this time with respect to the convergence \eqref{eq:convjubis}, which in this case reads
\beq\label{eq:convbis}
\frac{ju_\ep}{\logeps}\rightharpoonup v \ \text{weakly in } L^2(\Lambda^1\Omega),\quad\frac{2Ju_\ep}{\logeps}\to dv \ \text{in } W^{-1,3/2}(\Lambda^2\Omega)\ .
\eeq

\medskip
\subsection{Applications to superconductivity}\label{S:superc}

As a first application of the above results in the energy regime $g_\ep=\logeps^2$, we describe the asymptotic behavior of the Ginzburg-Landau functional tor superconductivity 
$$
{{\mathcal F}}_\ep(u,A)=\int_{\Omega}\frac{|du -iAu|^2}{2}+\frac{1}{\ep^2}W(u)\, dx+
\int_{\R^3}\frac{|dA-h_{ex}|^2}{2}\, dx
$$
defined for $\Omega\subset\R^3$,  where the 2-form $h_{ex}\in L^2_{loc}(\Lambda^2\R^3)$ is an external applied magnetic field, the 1-form $A\in H^1(\Lambda^1R^3)$ is the induced vector potential (gauge field).
It does not change the problem to assume that $h_{ex}$ has the form $h_{ex} = dA_{ex}$ for some $A_{ex}\in H^1_{loc}(\Lambda^1\R^3)$, and we will always make this assumption.

Let 
$\dot H^1_{*} (\Lambda^1\R^3) :=  \{ A\in \dot H^1(\Lambda^1\R^3) : d^* A = 0 \}$, and define
the
inner product $(A,B)_{\dot H^1_{*} (\Lambda^1\R^3)}:= (dA, dB)_{L^2(\Lambda^2\R^3) }$.
This makes $\dot H^1_{*} (\Lambda^1\R^3)$ into a Hilbert space, 
satisfying in addition the Sobolev inequality
\[
\| A \|_{L^6(\Lambda^1\R^3)} \le C \| A \|_{\dot H^1_{*}(\Lambda^1\R^3)} \, .
\]
We will study $\calF_\ep(v,A)$ for $(v,A)\in H^1(\Omega;\C)\times [A_{ex}+\dot H^1_*(\Lambda^1\R^3)]$;
this is reasonable in view of the gauge-invariance of $\calF_\ep$, that is, the fact
that
\beq\label{eq:gauge}
{\mathcal F}_\ep(u,A)={\mathcal F}_\ep(u{\cdot e^{i\phi}},A{+d\phi}) \qquad\forall \phi\in H^1(\R^3)\, .
\eeq

It is useful to decompose  $\mathcal F_\ep$ as follows (see e.g. \cite{br}):
\beq\label{eq:br}
{\mathcal F}_\ep(u,A)=E_\ep(u)\, +\,  {\mathcal I}(u,A) +  {\mathcal M}(A,h_{ex})+{\mathcal R}(u,A)\, , 
\eeq
with
\beq\label{eq:br1}
{\mathcal I}(u,A) :=\, - \, \int_\Omega A\cdot ju\, dx ,
\eeq
\beq\label{eq:br2}
\mathcal M(A,h_{ex}) : =\int_\Omega \frac{|A|^2}{2}dx \, +\int_{\R^3} \frac{| dA - h_{ex}|^2}{2}\, dx
\ = \ \frac 12 \| A\|_{L^2(\Lambda^1\Omega)}^2 + \frac 12 \| A - A_{ex}\|_{\dot H^1_*(\Lambda^1\R^3)}^2.
\eeq
 and $\mathcal R(u,A)=\frac{1}{2}\int_\Omega(|u|^2-1)|A|^2dx$ is a remainder term of lower order. Thus $\mathcal F_\ep(u,A)$ may be written as a continuous perturbation of $E_\ep(u)+\mathcal M(A,h_{ex})$, and using the stability properties of $\Gamma$-convergence we deduce, as in \cite{js2} for the 2-d case, the  $\Gamma$-convergence for the functionals $\mathcal F_\ep$ in the critical energy regime $g_\ep=\logeps^2$:
 
\begin{theorem}\label{thm:3} Let $\Omega\subset\R^3$ be a bounded Lipschitz domain, $W(u)$ satisfy $(H_q)$ with $q\ge 3$, and assume $h_{ex} = dA_{ex,\ep}$ and that there exists $A_{ex,o}\in H^1_{loc}(\Lambda^1\R^3)$
such that $\frac {A_{ex,\ep}} {\logeps} - A_{ex,0}\to 0$ in $\dot H^1_*(\Lambda^1\R^3)$.
Then the following hold.

\medskip
\noindent
{\rm (i) Compactness.} For any sequence $(u_\ep,A_\ep)\in H^1(\Omega;\C)\times [A_{ex,0} +  \dot H^1_*(\Lambda^1\R^3)]$ such that  ${\mathcal F}_\ep(u_\ep,A_\ep)\le K\logeps^2$, we have, up to a subsequence,
\beq\label{eq:convA}
\frac{A_\ep}{\logeps}- A\rightharpoonup 0 \quad\text{weakly in } \dot H^1_*(\Lambda^1\R^3)\, 
\eeq
for some $A\in A_{ex,0} +  \dot H^1_*(\Lambda^1\R^3)$
as well as the convergences \eqref{eq:convju},\eqref{eq:convu},\eqref{eq:convJ} of Theorem \ref{thm:2} in the case $g_\ep=\logeps^2$. 

\medskip
\noindent
{\rm (ii) $\Gamma$-convergence.} 
For $v\in L^2(\Lambda^1\Omega)$ and $A\in  A_{ex,0} +  \dot H^1_*(\Lambda^1\R^3)$, define
 \beq\label{eq:F}
 {\mathcal F}(v,A)=\frac{1}{2}||dv||+\frac{1}{2}|| v-A ||^2_{L^2(\Lambda^1\Omega)}+\frac{1}{2}||dA-dA_{ex,0}||^2_{L^2(\Lambda^2\R^3)}\, 
 \eeq
if $||dv||=|dv|(\Omega)$ is finite, ${\mathcal F}(v,A)=+\infty$ otherwise. 

\medskip
\noindent
Then under the convergences \eqref{eq:convA}, \eqref{eq:convju},\eqref{eq:convu},\eqref{eq:convJ}, we have
\beq\label{eq:gammalimF}\frac{{\mathcal F}_\ep(u_\ep,A_\ep)}{\logeps^2}\xrightarrow{\Gamma}\mathcal F(v,A).
\eeq
\end{theorem}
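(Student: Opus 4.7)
My plan is to derive Theorem~\ref{thm:3} from Theorem~\ref{thm:2} by exploiting the decomposition~\eqref{eq:br}, treating $\mathcal F_\ep$ as a continuous perturbation of $E_\ep(u)+\mathcal M(A,h_{ex})$. On the $u$-side, Theorem~\ref{thm:2} in the critical regime $g_\ep=\logeps^2$ already delivers the $\Gamma$-limit $E(v)=\tfrac12\|dv\|+\tfrac12\|v\|^2_{L^2(\Omega)}$ and the compactness in~\eqref{eq:convju},~\eqref{eq:convu},~\eqref{eq:convJ}. On the $A$-side, $\mathcal M$ is a quadratic form that is weakly lower-semicontinuous on $A_{ex,0}+\dot H^1_*(\Lambda^1\R^3)$. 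The coupling is supplied by $\mathcal I$ and the lower-order term $\mathcal R$; in the limit one should have $\mathcal I/\logeps^2 \to -\int_\Omega A\cdot v\,dx$ and $\mathcal R/\logeps^2\to 0$, and the algebraic identity
\[
\tfrac12\|v\|^2_{L^2(\Omega)} \, - \, \int_\Omega A\cdot v\,dx \, + \, \tfrac12\|A\|^2_{L^2(\Omega)} \, = \, \tfrac12\|v-A\|^2_{L^2(\Omega)}
\]
then reassembles the sum of the four limits into~\eqref{eq:F}.

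For compactness, the nonnegative summand $\tfrac12\|A_\ep-A_{ex,\ep}\|^2_{\dot H^1_*}$ of $\mathcal M$ is controlled by $\mathcal F_\ep(u_\ep,A_\ep)\le K\logeps^2$, so together with the hypothesis $A_{ex,\ep}/\logeps\to A_{ex,0}$ in $\dot H^1_*$ I extract, along a subsequence, $A_\ep/\logeps\rightharpoonup A$ for some $A\in A_{ex,0}+\dot H^1_*(\Lambda^1\R^3)$. To then activate Theorem~\ref{thm:2} for $u_\ep$ I need $E_\ep(u_\ep)=O(\logeps^2)$. Inverting~\eqref{eq:br} and using Sobolev on $\R^3$ together with the standard a priori bound $\|u_\ep\|_{L^q}\lesssim 1$ from $(H_q)$, I estimate
\[
|\mathcal I(u_\ep,A_\ep)|+|\mathcal R(u_\ep,A_\ep)| \, \lesssim \, \|A_\ep\|_{L^6(\Omega)}\|u_\ep\|_{L^q}\|du_\ep\|_{L^2} \, + \, \|A_\ep\|_{L^6(\Omega)}^2 \, \||u_\ep|^2-1\|_{L^{q/2}(\Omega)}.
\]
The condition $q\ge 3$ makes both H\"older pairings admissible, and a Young inequality absorbs $\|du_\ep\|^2_{L^2}$ into the left-hand side to yield $E_\ep(u_\ep)\le K'\logeps^2$, after which the convergences in~\eqref{eq:convju},~\eqref{eq:convu},~\eqref{eq:convJ} follow from Theorem~\ref{thm:2}(i).

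For the $\Gamma$-liminf I sum four pieces. Theorem~\ref{thm:2}(i) provides $\liminf E_\ep(u_\ep)/\logeps^2 \ge \tfrac12\|dv\|+\tfrac12\|v\|^2_{L^2(\Omega)}$. Rellich compactness upgrades $A_\ep/\logeps\rightharpoonup A$ on the bounded set $\Omega$ to strong $L^r(\Omega)$-convergence for every $r<6$, which combined with weak lower-semicontinuity of the $\dot H^1_*$-seminorm yields the correct liminf of $\mathcal M/\logeps^2$. The cross term $\mathcal I/\logeps^2 = -\int(A_\ep/\logeps)\cdot(ju_\ep/\logeps)\,dx$ converges to $-\int_\Omega A\cdot v\,dx$ by pairing weak $L^{2q/(q+2)}(\Omega)$-convergence of $ju_\ep/\logeps$ from~\eqref{eq:convu} against strong $L^{2q/(q-2)}(\Omega)$-convergence of $A_\ep/\logeps$; here $q\ge 3$ is precisely what makes $2q/(q-2)\le 6$, keeping the Sobolev exponent below the compact-embedding threshold. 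Finally $\mathcal R/\logeps^2 \to 0$ by the same H\"older pairing, since $\||u_\ep|^2-1\|_{L^{q/2}(\Omega)}\to 0$ by~\eqref{eq:convju} while $\|A_\ep/\logeps\|^2_{L^6}$ is bounded. The identity of paragraph~1 then assembles the four limits into $\mathcal F(v,A)$. For the $\Gamma$-limsup, given $(v,A)$ with $\mathcal F(v,A)<\infty$, I choose $A_\ep:=A_{ex,\ep}+\logeps(A-A_{ex,0})$, so $A_\ep\in A_{ex,\ep}+\dot H^1_*$ and $A_\ep/\logeps\to A$ strongly in $\dot H^1_*$, and $u_\ep := U_\ep$ the recovery sequence from Theorem~\ref{thm:2}(ii) realizing $E(v)$; strong convergence of $A_\ep/\logeps$ turns each of the three liminf estimates into an equality in the limit.

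The main obstacle I foresee is the borderline behavior of the cross term $\mathcal I$ at the endpoint $q=3$, where the duality $L^{6/5}\times L^6$ is exactly critical and the embedding $\dot H^1(\R^3)\hookrightarrow L^6$ is continuous but not compact on bounded subsets. I would handle this either by interpolating the trivial bound $\|ju_\ep\|_{L^1}\le \|u_\ep\|_{L^2}\|du_\ep\|_{L^2}$ with $\|ju_\ep\|_{L^{6/5}}$ to gain a sliver of extra integrability on $ju_\ep/\logeps$, or by factoring $ju_\ep/\logeps = |u_\ep|\cdot ju_\ep/(|u_\ep|\logeps)$ and pairing the weak $L^2$-limit $v$ of the second factor (from~\eqref{eq:convju}) against $|u_\ep|\cdot A_\ep/\logeps$, whose strong $L^2(\Omega)$-convergence follows from a Vitali equi-integrability argument combining $|u_\ep|\to 1$ in $L^3$ with the Sobolev bound $A_\ep/\logeps\in L^6$.
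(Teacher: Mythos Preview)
Your proposal is correct and follows essentially the same route as the paper: use the decomposition~\eqref{eq:br} of $\mathcal F_\ep$ into $E_\ep+\mathcal M+\mathcal I+\mathcal R$, invoke Theorem~\ref{thm:2} for $E_\ep$, use weak lower semicontinuity for $\mathcal M$, and show that $\mathcal I$ converges continuously while $\mathcal R$ is lower order.

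Two small points of comparison are worth noting. First, in your derivation of $E_\ep(u_\ep)\le K'\logeps^2$, the bound $\|u_\ep\|_{L^q}\lesssim 1$ is not truly a priori: $(H_q)$ only yields $\|u_\ep\|_{L^q}^q\lesssim 1+\ep^2 E_\ep(u_\ep)$, so your H\"older estimate feeds $E_\ep(u_\ep)$ back into the right-hand side. The loop can be closed by one more application of Young's inequality, but the paper sidesteps this via the pointwise chain
\[
|ju\cdot A|\le \tfrac14|Du|^2+2|A|^2+2(|u|-1)^2|A|^2\le \tfrac14|Du|^2+2|A|^2+\tfrac c{\ep^2}\bigl||u|-1\bigr|^3+C\ep^2|A|^6,
\]
where $q\ge 3$ allows $c||u|-1|^3\le \tfrac12 W(u)$; this absorbs directly into $E_\ep(u_\ep)$ without any circularity.

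Second, your flagging of the $q=3$ endpoint for $\mathcal I$ is well taken, and in fact you are more careful here than the paper, which pairs weak $L^{6/5}$ convergence of $ju_\ep/\logeps$ against strong $L^p(\Omega)$ convergence of $A_\ep/\logeps$ for $p<6$ without comment on the borderline case. Your second workaround is the cleanest fix and needs no Vitali argument: write $ju_\ep/\logeps=|u_\ep|\cdot\bigl(ju_\ep/(|u_\ep|\logeps)\bigr)$, the second factor weakly convergent to $v$ in $L^2$ by~\eqref{eq:convju}, and observe that $|u_\ep|A_\ep/\logeps\to A$ strongly in $L^2(\Omega)$ because
\[
\bigl\|(|u_\ep|-1)\tfrac{A_\ep}{\logeps}\bigr\|_{L^2(\Omega)}\le \bigl\||u_\ep|-1\bigr\|_{L^3(\Omega)}\bigl\|\tfrac{A_\ep}{\logeps}\bigr\|_{L^6(\Omega)}\to 0
\]
while $A_\ep/\logeps\to A$ strongly in $L^2(\Omega)$ by Rellich.
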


\begin{remark} Assuming $(H_\infty)$, the $\Gamma$-limit \eqref{eq:gammalimF} is obtained with respect to the convergences
\eqref{eq:convA}, \eqref{eq:convbis}.
\end{remark}

\begin{remark}
The statement of Theorem \ref{thm:3} is not gauge-invariant, as
the condition that $A_\ep\in A_{ex, \ep} + H^1_*(\Lambda^1\R^3)$ uniquely determines the function 
$\phi$ in \eqref{eq:gauge}. Fixing this degree of freedom is clearly necessary for compactness.
Note however that the limiting functional $\calF$ has a gauge-invariance property:
$\calF(v,A) = \calF(v+\gamma|_\Omega, A+\gamma)$ whenever $d\gamma = 0$.
\end{remark}

The Euler-Lagrange equations of the functional $\mathcal F$ consist in  the Amp\`ere law $d^*H=j$ for the resulting magnetic field $H=dA-h$, generated by the (gauge-invariant) super-current $j=v-A$ in $\Omega$ (see \eqref{eq:E-L-F1}), and a curvature equation for the vortex filaments, i.e. the streamlines of the limiting vortex distribution (see \eqref{eq:E-L-F2}), which reads, in the regular case,
\beq\label{eq:curvature}
\begin{cases}
\vec\kappa=2\vec\tau\times\vec\jmath &\text{in }\Omega,\\
\vec\tau_\top=0 &\text{on }\bd\Omega\, .
\end{cases}
\eeq
whith $\vec\kappa$ and $\vec\tau$ denoting respectively the curvature vector and the unit tangent to the vortex filament, $\vec \jmath$ the vector field corresponding to the super-current $j=v-A$, and $\times$ the exterior product in $\R^3$.
Formula \eqref{eq:curvature} generalizes the corresponding law in the case of a finite number of vortices (see \cite{bos-rings}, Theorem 3 (iv),  and \cite{chi}).

\begin{remark}  In \cite{bjos} we analyze in more detail the properties of minimizers of the limiting functional $\mathcal F$ through the introduction of a dual variational problem. 
We use this description to characterize to leading order the first critical field $H_{c_1}$.

These results extend to 3 dimensions facts about 2-d models of superconductivity first established 
by Sandier and Serfaty \cite{ss00}, see also \cite{ss} and other references cited therein. Following the initial work of Sandier and Serfaty, it was shown in \cite{js2} that their results can be recovered via the 2-d analog of the procedure we follow here and in \cite{bjos}.

As far as we know, the relevance of convex duality in these settings was first pointed out by  Brezis and Serfaty \cite{BrezisSerfaty}.
%
%
%
\end{remark}

\begin{remark} In \cite{bjos} we also apply Theorem \ref{thm:2} to study the $\Gamma$-limit of the Gross-Pitaevskii functional for superfluidity, and derive in particular a reduced vortex density model for rotating Bose-Einstien condensates, deducing the corresponding curvature equations and an expression for the critical angular velocity.
\end{remark}

\begin{remark}
Theorem \ref{thm:3} is concerned with the description of the behavior of {\it global} minimizers. The convergence of {\it local} minimizers with bounded vorticity has been studied, under various assumptions, in \cite{JMS,MSZ,m-z}, relying on techniques related to Theorem \ref{thm:1}.
\end{remark}


\subsection{Plan of the paper} This paper is organized as follows: in Section \ref{sect:gammaliminf} we prove the lower bound and compactness statement (i) of Theorem \ref{thm:2}, while Section \ref{sect:upperbound} is devoted to the proof of the upper bound statement (ii). In Section \ref{sect:supercond} we prove Theorem \ref{thm:3} and derive the Euler-Lagrange equations of the $\Gamma$-limit, obtaining in particular formula \eqref{eq:curvature}.
Section \ref{sect:appendix} is an Appendix that collects some notation and the proofs of some auxiliary results.

\bigskip
{\bf Acknowledgements.} This research has been partially funded by G.N.A.M.P.A. of Istituto Nazionale di Alta Matematica (visiting professor program),  Universit\`a di Verona (funding program Cooperint), and the National Science and Engineering Research Council of Canada 
under operating Grant 261955.  We warmly thank these institutions for support and kind hospitality.
We are also grateful to Giovanni Alberti for numerous helpful discussions.




\section{LOWER BOUND AND COMPACTNESS}\label{sect:gammaliminf}
 

In this section we prove statement (i) of Theorem \ref{thm:2}, relying largely on our previous works \cite{js2,abo}. We prove everything in $\Omega\subset \R^n$, for arbitrary $n\ge 3$. We note however
that the lower bound inequality \eqref{eq:gammaliminf} is not expected to be sharp when $n\ge 4$, see Remark \ref{rem:open}.

 We first derive \eqref{eq:convju} and \eqref{eq:convu}.
Then, assuming  \eqref{eq:convJ}, we derive the characterization of the limiting spaces $\calA_k$ corresponding to the scaling regimes $S_k$ identified in the statement of the Theorem.
We next turn to the proof of the lower bound  \eqref{eq:gammaliminf}.
The compactness statement \eqref{eq:convJ} in the case $p=1$ will be 
obtained during the proof of \eqref{eq:gammaliminf}, and the case $1<p<\frac{n}{n-1}$, (see Remark \ref{rem:convJn}) will from the case $p=1$  by a short interpolation argument.

\medskip
\noindent
{\bf Proof of \eqref{eq:convju}, \eqref{eq:convu}.} Observe first that $|\ue|\to 1$ in $L^q(\Omega)$ by assumptions $(H_q)$ on $W(u)$  and $(H_g)$ on $E_\ep$, since
$$
\int_\Omega\left|1-|\ue |\right|^q\le C \int_\Omega W(u_\ep)\le C\ep^2 E_\ep(\ue)\le C\ep^2g_\ep\to 0\, .
$$
From the identity
$|u|^2|\nabla u|^2=|u|^2|\nabla |u| |^2+|ju|^2$ we deduce that
\beq\label{eq:boundju}
 \int_\Omega\frac{|ju_\ep|^2}{|u_\ep|^2g_\ep}\le 2\cdot\frac{E_\ep(u_\ep)}{g_\ep}\le 2K,
\eeq
which yields, up to a subsequence, $\tfrac{j\ue}{|\ue|\sqrt{g_\ep}}\rightharpoonup v$ weakly in $L^2(\Omega)$, completing the proof of  \eqref{eq:convju}. 
Now write
$$
\frac{j\ue}{\sqrt{g_\ep}}=\frac{j\ue}{|\ue|\sqrt{g_\ep}}+{(|\ue|-1)}\cdot\frac{j\ue}{|\ue |\sqrt{g_\ep}}\, .
$$
Using \eqref{eq:convju} we deduce  that  ${(|\ue|-1)}\cdot\frac{j\ue}{|\ue |\sqrt{g_\ep}}\rightharpoonup 0$    weakly in $L^{\tfrac{2q}{q+2}}(\Omega)$. This yields $\frac{j\ue}{\sqrt{g_\ep}}\rightharpoonup v$ weakly in $L^{\tfrac{2q}{q+2}}(\Omega)$, i.e. \eqref{eq:convu}.


\qed

Next,  the characterization of the limiting spaces $\calA_k$
follows  from \eqref{eq:convju},\eqref{eq:convu} and \eqref{eq:convJ}, since by \eqref{eq:convu} we deduce that $d(\frac{ju_\ep}{\sqrt g_\ep})\rightharpoonup dv$ weakly in $W^{-1,\frac{2q}{q+2}}(\Omega)$, hence, in the case $g_\ep\gg \logeps^2$,
\beq\label{eq:cases'}
\frac{\logeps}{g_\ep}Ju_\ep=\left(\frac{\logeps}{\sqrt g_\ep}\right)d\left(\frac{ju_\ep}{\sqrt g_\ep}\right)\rightharpoonup 0\cdot dv=0\quad\text{in }W^{-1,\frac{2q}{q+2}}(\Omega)\, .
\eeq
In view of \eqref{eq:convJ}, this implies $J=0$ by uniqueness of the weak limit.
On the other hand, in the case $g_\ep\ll \logeps^2$, 

$$
d\left(\frac{ju_\ep}{\sqrt g_\ep}\right)=2(\frac{\sqrt g_\ep}{\logeps})\cdot\left(\frac{\logeps}{g_\ep}Ju_\ep\right)\to 0\cdot J=0\quad\text{in }W^{-1,p}(\Omega),\ p< \frac n{n-1}\, ,
$$
which implies $dv=0$, again by uniqueness of the weak limit. The above formulas, in the case $g_\ep=\logeps^2$, imply that $dv = 2J$.

\medskip

\noindent
We turn to the proof of \eqref{eq:gammaliminf} distinguishing two cases, namely $\logeps \ll g_\ep\le \logeps^2$, and $ \logeps^2\ll g_\ep \ll \ep^{-2}$. We begin with the latter case.

\medskip
\noindent
{\bf Proof of \eqref{eq:gammaliminf} in the case \mathversion{bold}${g_\ep\gg\logeps^2}$.\mathversion{normal}}  In this energy regime, we have just shown that $J=0$, and \eqref{eq:convju} and \eqref{eq:boundju} immediately imply
\beq
\liminf_{\ep\to 0}\frac{E_\ep(u_\ep)}{g_\ep}\ge \frac{1}{2}\int_\Omega |v|^2\, ,
\eeq
yielding conclusion \eqref{eq:gammaliminf}.

\qed

If it is not true that $g_\ep \gg \logeps^2$, then by passing to a subsequence we may suppose that $g_\ep \le C \logeps^2$. By renaming the constant $K$ in $(H_g)$ we may also assume that $C=1$. Thus the proof of \eqref{eq:gammaliminf}
will be completed by the following.

\medskip
\noindent{\bf Proof of \eqref{eq:gammaliminf} in the case \mathversion{bold}${\logeps \ll g_\ep\le \logeps^2}$.\mathversion{normal}} The main step in the proof is the following  improvement of \cite{abo}, Proposition 3.1. We establish it in greater generality than
is needed for the proof of \eqref{eq:gammaliminf}.

We remark that \eqref{eq:gammaliminf} in the scaling ${\logeps \ll g_\ep\le \logeps^2}$
is already established in \cite{ss-jfa}, and moreover that a key point in the proof there is a result similar to the following proposition. 

\begin{proposition}\label{prop:3.1} Let $u_\ep$ be a sequence of smooth maps on $\Omega\subset \R^n$, $n\ge 2$,
such that $(H_g)$ holds, with $\logeps \le g_\ep \le \logeps^2$.
Then we have, up to a subsequence,
\beq\label{eq:convflat}
\frac{\logeps}{g_\ep}Ju_\ep\to J\, \qquad\text{in } W^{-1,1}(\Lambda^2\Omega)\, ,  
\eeq
where $J$ is an exact measure-valued 2-form\footnote{In the case $g_\ep=\logeps$, $J$ has the structure of a rectifiable boundary with multiplicities in $\pi\cdot\Z$, according to Theorem \ref{thm:1}.} 
with finite mass in $\Omega$. Moreover, there exists a closet set $C_\ep\subset \Omega$ such that $|C_\ep|\to 0$,  and such that for every simple 2-covector $\eta$ such that $|\eta|=1$ and for every open set $U\Subset\Omega$, it holds
\beq\label{eq:prop:3.1-0}
\liminf_{\ep\to 0} \frac{E_\ep(u_\ep; C_\ep)}{g_\ep}\ge |( J\, , \eta) |(U)\, ,
\eeq
where $( J\, , \eta)$ is the signed measure defined according to \eqref{eq:mueta}.

\end{proposition}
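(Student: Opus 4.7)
My plan is to adapt the strategy of \cite{abo,js} to the scaling $\logeps \le g_\ep \le \logeps^2$. Compactness and the lower bound will be obtained simultaneously, via a vortex-ball construction that produces a concentration set $C_\ep$ on which the energy controls the Jacobian, together with a slicing/Fubini step that extracts the directional information $|(J,\eta)|$.

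The first step is a Jerrard--Sandier vortex-ball construction in the ambient $\R^n$: starting from the sublevel set $\{|u_\ep|\le 1/2\}$, disjoint balls are grown until their union $C_\ep$ has measure $o(1)$ while still capturing all the essential vortex regions. Outside $C_\ep$ the modulus $|u_\ep|$ is bounded away from zero, and on each ball $B$ of final radius $r_B$ and degree $d_B$ one has the sharp lower bound $E_\ep(u_\ep; B) \gtrsim \pi |d_B|\log(r_B/\ep)$. The construction is intrinsically $n$-dimensional, so it produces a single bad set $C_\ep$ that serves uniformly in $\eta$.

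For the compactness of $\tilde J_\ep := \tfrac{\logeps}{g_\ep} Ju_\ep$ in $W^{-1,1}$, I would combine the Jerrard--Soner Jacobian estimate $\|Ju_\ep\|_{(C^{0,1}_c)^*} \le C E_\ep(u_\ep)/\logeps$ with the approximation of $Ju_\ep$ by a sum of weighted Dirac-type currents concentrated on the balls produced above, whose total mass is controlled by $g_\ep/\logeps$. Riesz representation then yields a limit measure-valued 2-form $J$ of finite mass, and exactness is inherited from the identity $Ju_\ep = \tfrac12 d(ju_\ep)$ via a distributional-primitive construction as in \cite{abo,js}; when $n=3$ one may alternatively invoke Lemma \ref{lem:dp}, since the flux of $\tilde J_\ep$ through $\partial\Omega$ vanishes in the limit by closedness.

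The heart of the argument is the slicing step, localized to the fixed open set $U \Subset \Omega$. Given a simple unit 2-covector $\eta$, choose orthonormal coordinates so that $\eta = e^1 \wedge e^2$ and slice $U$ by the affine 2-planes $\Pi_y := (\R^2 \times \{y\}) \cap U$ for $y \in \R^{n-2}$. Only the in-plane derivatives contribute to the 2-d energy of the restriction $u_\ep|_{\Pi_y}$, so $E_\ep^{\text{2d}}(u_\ep|_{\Pi_y}) \le \int_{\Pi_y} e_\ep(u_\ep)$, and by Fubini this slice energy is $O(g_\ep)$ for a.e.\ $y$. Applying the 2-d analogue of the proposition on each slice (cf.\ \cite{abo}, Prop.~3.1, and its extension to $\logeps\le g_\ep\le \logeps^2$ in \cite{js2}) yields
$$
\liminf_{\ep\to 0} \frac{E_\ep(u_\ep; \Pi_y \cap C_\ep)}{g_\ep} \ge \|J_y\|
$$
for a.e.\ $y$, where $J_y$ denotes the 2-d Jacobian limit on the slice. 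Integrating over $y$ by Fatou and coarea, and identifying $\int_{\R^{n-2}} \|J_y\|\, dy$ with $|(J,\eta)|(U)$ via the definition of the signed measure $(J,\eta)$, then gives \eqref{eq:prop:3.1-0}. The principal obstacle I anticipate is identifying the slice-limit $J_y$ with the slice of the global limit $J$ in direction $\eta$: this requires upgrading the weak $W^{-1,1}$ convergence of $\tilde J_\ep$ to a.e.-$y$ convergence of sliced forms, which is standard but delicate and typically requires a diagonal subsequence selection together with the a.e.\ energy bound guaranteed by Fubini.
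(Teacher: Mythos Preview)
Your approach is genuinely different from the paper's, and it contains a real gap at the very first step.

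You invoke an ``intrinsically $n$-dimensional'' Jerrard--Sandier ball construction producing balls $B$ with degree $d_B$ and lower bound $E_\ep(u_\ep;B)\gtrsim \pi|d_B|\log(r_B/\ep)$. No such construction exists for $n\ge 3$: the restriction of $u_\ep$ to $\partial B\cong S^{n-1}$ carries no integer degree, since $\pi_{n-1}(S^1)=0$. The Jerrard--Sandier ball growth is fundamentally $2$-dimensional; its known $n$-dimensional incarnations (e.g.\ in \cite{ss-jfa}) already proceed by slicing, and therefore do not hand you a single $\eta$-independent closed set $C_\ep$ with sharp lower bounds built in. So the object you want to feed into the slicing step is not available.

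Even if one grants you some $\eta$-independent $C_\ep$ covering $\{|u_\ep|\le 1/2\}$ with $|C_\ep|\to 0$, the slicing step has a second gap. Applying the $2$-d analogue of the proposition on a slice $\Pi_y$ produces its own slice-dependent bad set $C_\ep^y\subset\Pi_y$ and yields $\liminf g_\ep^{-1}E_\ep^{\rm 2d}(u_\ep;C_\ep^y)\ge \|J_y\|$; it says nothing about $E_\ep(u_\ep;\Pi_y\cap C_\ep)$. Having $|u_\ep|>1/2$ on $\Pi_y\setminus C_\ep$ is not enough: the sharp constant $\pi$ in the $2$-d lower bound comes from the ball-growth procedure carried out \emph{on the slice}, and there is no reason the sliced set $\Pi_y\cap C_\ep$ inherits that structure. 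Finally, the issue you flag yourself---identifying the slice-by-slice limit $J_y$ with the slice of the global limit $J$---is not merely delicate but genuinely open at the level of $W^{-1,1}$ convergence without further input.

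The paper sidesteps all of this by using the grid-based polyhedral approximation of \cite{abo}, Section~3, rather than balls. A first grid at scale $\ell_\ep=\logeps^{-10}$ produces a polyhedral $2$-form $\nu_\ep$ with $\|Ju_\ep-\nu_\ep\|_{W^{-1,1}}\le C\ell_\ep g_\ep$ and mass $\le Cg_\ep/\logeps$, which gives \eqref{eq:convflat}; the set $C_\ep$ is then the (manifestly closed, $\eta$-independent) tubular neighbourhood of the cubes carrying nonzero face-degree. For the lower bound, a \emph{second} grid is chosen by a mean-value argument so that the restriction of the energy to its $2$-skeleton \emph{inside $C_\ep$} is controlled by $E_\ep(u_\ep;C_\ep)$. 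The $2$-d Jerrard--Sandier bound is then applied face-by-face on this second grid---this is where the codimension-$2$ degree enters, on $2$-faces rather than on $(n-1)$-spheres---and a restriction lemma (Lemma~\ref{lem:nurest}) shows that the resulting polyhedral form, cut down to a suitable sublevel set of $\dist(\cdot,{\rm spt}\,\nu_\ep)$, still converges to $J$. Lower semicontinuity of total variation then gives \eqref{eq:prop:3.1-0} directly, with no need to slice $J$ itself.
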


Our proof of Proposition \ref{prop:3.1} differs  from that of the corresponding point (Proposition IV.3) in \cite{ss-jfa}. One feature of our proof is that the set $C_\ep$ that
we construct is manifestly a closed set, whereas in the construction of \cite{ss-jfa}, a
certain amount of work is required even to see that the corresponding set is measurable.

Taking for granted Proposition \ref{prop:3.1}, we complete the proof of \eqref{eq:gammaliminf}. First,  a standard localization argument (see \cite{abo}, p. 1436) gives, for any finite collection of pairwise disjoint open sets $U_j\Subset\Omega$ and simple unit 2-covectors $\eta_j$, 
\beq\label{eq:prop:3.1loc}
\sum_j|(J\, ,\eta_j)|(U_j)\le \liminf_{\ep\to 0}\frac{E_\ep(u_\ep; C_\ep)}{g_\ep}
\eeq

Taking the supremum over all choices of pairwise disjoint open sets $U_j$ and unit simple 2-covectors $\eta_j$ on the l.h.s. of \eqref{eq:prop:3.1loc}  yields  the total {\it comass norm} of $J$ in the sense of 
\cite{F}, section 1.8.1. In the 3-dimensional case\footnote{and for any $n\ge 3$ if $g_\ep=\logeps$, then $J$ is obtained as a limit of polygonal currents with uniformly bounded mass, and hence is rectifiable by the Federer-Fleming closure theorem.} 
this coincides with the total variation (or $L^1$, accordingly) norm of $J$, since all 2-covectors in $\R^3$ are necessarily simple. Hence we may write, for $n=3$,
\beq\label{eq:prop:3.1final}
|J|(\Omega)\le \liminf_{\ep\to 0}\frac{E_\ep(u_\ep; C_\ep)}{g_\ep}\, .
\eeq

Let now $\Omega_\ep\equiv\Omega\setminus C_\ep$, and $\chi_{\ep}(x)$ be the characteristic function of $\Omega_\ep$. We may assume after passing to a subsequence that $\chi_{\ep}(x)\to 1$ as $\ep\to 0$ for a.e. $x\in\Omega$, since $|C_\ep|\to 0$.
Then  for any 
$h\in L^2$, $\chi_\ep\cdot h\to h$ in $L^2$ by the dominated convergence theorem,
and so it follows from \eqref{eq:convju} that
$$
\int_\Omega h\cdot\chi_\ep\cdot\frac{ju_\ep}{|u_\ep| \sqrt{g_\ep}}\to\int h\cdot v\qquad\text{as }\ep\to 0.
$$
That is, $\chi_\ep\cdot\frac{ju_\ep}{|u_\ep|\sqrt{g_\ep}}\rightharpoonup v$ weakly in $L^2$. Since
$$
\int_{\Omega_\ep} e_\ep(u)\ge \frac{1}{2}\int_{\Omega}{\chi}_{\Omega_\ep}\frac{|ju_\ep|^2}{|u_\ep|^2}
$$
we deduce that
\beq\label{eq:liminf-v}
\liminf_{\ep\to 0} \frac{E_\ep(u_\ep;\Omega_\ep)}{g_\ep}\ge \liminf_{\ep\to 0}\frac{1}{2}\int_{\Omega}{\chi}_{\Omega_\ep}\frac{|ju_\ep|^2}{|u_\ep|^2g_\ep}\ge \frac{1}{2}\int_\Omega v^2\, .
\eeq

To conclude observe that $E_\ep(u_\ep;\Omega)=E_\ep(u_\ep;C_\ep)+E_\ep(u_\ep;\Omega_\ep)$, so that
\beq\label{eq:fatou}
\liminf_{\ep\to 0}\frac{E_\ep(u_\ep;\Omega)}{g_\ep}\ge \liminf_{\ep\to 0}\frac{E_\ep(u_\ep;C_\ep)}{g_\ep}+\liminf_{\ep\to 0}\frac{E_\ep(u_\ep;\Omega_\ep)}{g_\ep}\, .
\eeq
Combining \eqref{eq:fatou} with \eqref{eq:liminf-v} and \eqref{eq:prop:3.1final} we obtain 
\eqref{eq:gammaliminf}

\qed

We now supply the

\medskip
\noindent
{\bf Proof of Proposition \ref{prop:3.1}.} 
We will proceed in two steps: first, we apply the discretization procedure of \cite{abo}, Section 3 at a suitable scale $\ell_\ep$ to deduce \eqref{eq:convflat} and to obtain a identify a small set $C_\ep^\prime\subset\Omega$ where the Jacobian $Ju_\ep$ is essentially confined. Second, we apply the cited  procedure again, this time imposing
an additional condition that yields good control of the resulting 2-form $\nu_\ep'$ (a discretization of the Jacobian)  in a small neighborhood $C_\ep$ of $C_\ep^\prime$ by the Ginzburg-Landau energy {\em in the same small neighborhood} $C_\ep$. We then argue that the restriction of  $\nu_\ep'$ to a suitable subset of $C_\ep$ converges to the same limit as $Ju_\ep$, so that from lower semicontinuity,  bounds on $(\nu_\ep', \eta) \rest C_\ep$ yield estimates on $(J,\eta)$, thereby
proving \eqref{eq:prop:3.1-0} .

We carry out these arguments in detail in the case $n=3$ and then we  discuss the general case.

\medskip
\noindent
{\bf Step 1.} 
We follow \cite{abo}, Section 3. Fix a unit simple 2-covector $\eta$, and an orthonormal basis $(\vec e_i)$ of $\R^3$ satisfying $\eta(\vec e_2\wedge \vec e_3)=1$. Consider a grid $\mathcal G=\mathcal G(a, \vec e_i,\ell)$, given by the collection of cubes with edges  of size $\ell$, and vertices having coordinates (with respect to a reference system with origin in $a\in\R^3$ and orthonormal directions $(\vec e_i)_{i=1,2,3}$) which are integer multiples of $\ell$. For $h=1,2$ denote by $R_h$ 
 the $h$-skeleton of $\mathcal G$, i.e. the union of all $h$-dimensional faces of the cubes of $\mathcal G$. 
 Consider also the dual grid having vertices in the centers of the cubes of $\mathcal G$, and denote by $R^\prime_h$ for $h=1,2$, its $h$-skeleton.
From $(H_g)$ and the assumption that $g_\ep \le \logeps^2$ we have
\beq\label{eq:3.22}
E_\ep(\ue;\Omega)\le K  \logeps^2\, , \quad\text{and we set }\ell\equiv\ell_\ep:=\logeps^{-10}\, .
\eeq
Observe that \eqref{eq:3.22} replaces (3.22) and (3.23) in \cite{abo}. Choose $a\equiv a_\ep$ by a mean-value argument in such a way that Lemma 3.11 of \cite{abo} holds, so that in particular, the restriction of the energy on the 2-d and 1-d skeleton of $\mathcal G$ is controlled by
\beq\label{eq:meanvalue}
\int_{R_h\cap\Omega} e_\ep(u_\ep) d\mathcal H^h\le C_0\ell^{h-3} E_\ep(u_\ep;\Omega)\, ,\quad h=1,2\, ,
\eeq
for a suitable constant $C_0>1$, and moreover
\beq\label{eq:meanvalue1}
\ell\int_\Omega \frac{e_\ep(u_\ep)}{|\text{dist}(x,R_1)|}dx\le C_0 E_\ep(u_\ep;\Omega).
\eeq
In view of \eqref{eq:3.22}, Lemma 3.4 in \cite{abo} is satisfied, hence $|u_\ep|\to 1$ uniformly on $R_1\cap\Omega$. In particular, for any face $Q\in R_2$, the topological degree $d_Q:=\,$deg$\, (\frac{u_\ep}{|u_\ep|},\partial Q, S^1)\in\Z$ is well-defined (modulo the choice of an orientation of $Q$ in $\R^3$).

The discretization procedure of \cite{abo}, Lemmas 3.7 to 3.10, may then take place on any fixed open set $U\Subset\Omega$, yielding an oriented polyhedral 1-cycle (actually, a relative boundary in $\bar U$) $M_\ep=\sum (-1)^{\sigma_i}d_{Q_i}\cdot Q^\prime_i$, where $Q^\prime_i\subset R^\prime_1$ is the unique edge of the cubes of the dual grid intersecting the face $Q_i\subset R_2$, the sign $(-1)^{\sigma_i}$ depends on the orientations of both $Q_i$ and $Q'_i$, and the sum is extended to any $Q_i\subset R_2$ such that $Q_i\cap U\neq\emptyset$. Notice that $M_\ep$ is supported in $R^\prime_1\cap U^{\sqrt 3\ell}$, where $U^{\sqrt 3\ell}$ denotes the tubular neighborhood of $U$ of thickness $\sqrt 3\ell$. The cycle $M_\ep$ gives rise to a (measure-valued) 2-form
$\nu_\ep$, whose action on 2-forms in $C^\infty_c(\Lambda^2\Omega)$ is defined by
\beq\label{eq:nu}
\langle\nu_\ep, \varphi\rangle\, :=\pi\cdot\sum_{\substack{Q_i\subset R_2\\ Q_i\cap U\neq\emptyset}} (-1)^{\sigma_i}d_{Q_i}\int_{Q^\prime_i}\star \varphi\, .
\eeq
The 2-form $\nu_\ep$ is exact in $U$, since $M_\ep$ is a relative boundary in $\bar U$, and enjoys the following properties:
it is a measure-valued 2-form supported in $R^\prime_1\cap U^{\sqrt 3\ell}$, such that its total variation $|\nu_\ep|$ is bounded 
on $U$ by\footnote{cf. \cite{abo}, (3.29)}
\beq\label{eq:massboundnu}
|\nu_\ep|(U)=\sum_{\substack{Q_i\subset R_2\\ Q_i\cap U\neq\emptyset}} \pi \ell\cdot |d_{Q_i}|\le C \frac{E_\ep(u_\ep;\Omega)}{\logeps}
\eeq
with $C>0$ independent of $U\Subset\Omega$, and such that $\nu_\ep$ is close to $Ju_\ep$ in the $W^{-1,1}$ norm, namely\footnote{combine \eqref{eq:3.22} and \eqref{eq:meanvalue1} with  (3.7) and (3.14) of \cite{abo}.}
\beq\label{eq:flatU}
|| Ju_\ep -\nu_\ep ||_{W^{-1,1}(\Lambda^2U)}\le C\ell\cdot E_\ep(u_\ep; \Omega) .
\eeq
Moreover, the support of $\nu_\ep$  is contained in the interior of a set $C^\prime_\ep\subset U^{\sqrt 3\ell}$ given by the union of those cubes of the grid $\mathcal G$ having at least one face $Q\subset R_2$, $Q\cap  U\neq\emptyset$, such that $d_Q\neq 0$. Denote by $I$ the set of indices $i$ in \eqref{eq:massboundnu} for which $d_{Q_i}\neq 0$, or equivalently, $|d_{Q_i}|\ge 1$. By \eqref{eq:massboundnu} we have
\beq\label{eq:smallvolume}
|C^\prime_\ep|\le \ell^3\cdot |I|\le \sum_{i\in I}\ell^3\cdot |d_{Q_i}|\le C\ell^2 \frac{E_\ep(u_\ep;\Omega)}{\logeps}\, , 
\eeq
so that by \eqref{eq:3.22}, $|C^\prime_\ep|\to 0$ as $\ep\to 0$.

Notice moreover that \eqref{eq:massboundnu} and $(H_g)$ imply that 
$
\frac{\logeps}{g_\ep}\cdot\nu_\ep\rightharpoonup J  
$
weakly as measures, where $J$ is a measure-valued 2-form in $\Omega$, which is exact and has total variation $\displaystyle{|J|(\Omega)\le C\liminf_{\ep\to 0}\frac{E_\ep(u_\ep;\Omega)}{g_\ep}}$. By \eqref{eq:flatU} we finally deduce
that $\frac{\logeps}{g_\ep}\cdot Ju_\ep\to J$ in $W^{-1,1}(\Lambda^2U)$ for any $U\Subset\Omega$, which yields \eqref{eq:convflat}

\noindent
{\bf Step 2.} For $N>0$ to be chosen below, define $C_\ep\equiv C_{N,\ep}:=\{x\in\Omega,\ \text{dist}(x,C^{\prime}_\ep)\le 2N\ell\}$ to be the tubular neighborhood of $C^\prime_\ep$ of thickness $2N\ell$ intersected with $\Omega$.  By \eqref{eq:smallvolume} we have
\beq\label{eq:cep}
|C_\ep|\le 8N^3|C^\prime_\ep|\le CN^3\ell^2\frac{g_\ep}{\logeps}\to 0\qquad\text{as }\ep\to 0\, ,
\eeq 
as long as $N^3\le \ell^{-1}$. In view of \eqref{eq:3.22}, \eqref{eq:cep} is verified for instance by fixing 
\beq\label{eq:3.21}
N\equiv N_\ep:=\logeps^{3}.
\eeq
 Observe moreover that 
 \beq\label{eq:3.22''}
 E_\ep(u_\ep; C_\ep)\le E_\ep (u_\ep;\Omega)\le Kg_\ep\le \logeps^2\, .
 \eeq
Consider the grid $\mathcal G^*_\ep=\mathcal G(b_\ep,\vec e_i,\ell)$, where $\ell = \ell_\ep = \logeps^{-10}$ as above and $b_\ep$ is chosen such that  for an arbitrarily fixed $\delta>0$, (3.18), (3.19) and (3.20) in Lemma 3.11 of \cite{abo} hold true, and moreover (3.17) holds true with $\Omega$ replaced by $C_\ep$. 
In other words, denoting  by $R^*_h$ the $h$-skeleton of $\mathcal G^*_\ep$ , $h=1,2$, and $\tilde {R^*_2}$ the union of the faces of the 2-skeleton of $\mathcal G^*_\ep$ orthogonal to $\vec e_1$ we have, 
\beq\label{eq:meanvalueC}
\int_{\tilde R^*_2\cap(C_\ep)} e_\ep(u_\ep) d\mathcal H^2\le (1+\delta)\ell^{-1} E_\ep(u_\ep;C_\ep),
\eeq
\beq\label{eq:meanvalueprime}
\int_{R^*_h\cap\Omega} e_\ep(u_\ep) d\mathcal H^h\le C_0\delta^{-1}\ell^{h-3}E_\ep(u_\ep;\Omega)\, ,\quad h=1,2\, ,
\eeq
\beq\label{eq:meanvalueE}
\ell\int_\Omega \frac{e_\ep(u_\ep)}{|\text{dist}(x,R^*_1)|}dx\le C_0\delta^{-1} E_\ep(u_\ep;\Omega).
\eeq
Fix an open subset $U\Subset\Omega$. As in Step 1, the procedure of \cite{abo} yields a polyhedral cycle 
\beq\label{eq:Mepprime}
M'_\ep=\sum_{\substack{Q_i\subset R^*_2\\ Q_i\cap U\neq\emptyset}} (-1)^{\sigma_i}d_{Q_i}\cdot Q^\prime_i\, ,
\eeq 
which is a relative boundary in 
$\bar{U}$ and is supported in ${R^*_1}'\cap U^{\sqrt 3\ell}$, where  ${R^{*}_1}'$ is the 1-d skeleton  of the dual grid to ${\mathcal G}^*$. The correponding measure-valued 2-form $\nu_\ep^\prime$,  defined as in \eqref{eq:nu} by
\beq\label{eq:nu'}
\langle\nu_\ep^\prime , \varphi\rangle \, :=\pi\cdot\sum_{\substack{Q_i\subset R^*_2\\ Q_i\cap U\neq\emptyset}}(-1)^{\sigma_i}d_{Q_i}\int_{Q'_i}\star\varphi\, ,  
 \qquad\forall\, \varphi\in C^\infty_c(\Lambda^2(\Omega))\, ,
 \eeq
 is exact on $U$ and verifies $|\nu'_\ep|(U)\le C\frac{E_\ep(u_\ep;\Omega)}{\logeps}$ with $C>0$ independent of $U$. 

For $x\in\Omega$ define $f(x):=\dist(x, M_\ep)$, so that $f$ is 1-Lipschitz. Denoting by $C^t=\{x:\ f(x)\le t\}\cap\Omega$, we have that $C^{2N\ell}\subset C_\ep$.
 
 \begin{lemma}\label{lem:nurest} There exists $t:=t_\ep< N\ell$ such that 
 \beq\label{eq:nurest}
 ||\nu'_\ep\rest C^t-\nu_\ep||_{W^{-1,1}(U)}\le C(\ell+N^{-1})g_\ep\, ,
 \eeq
 with $C>0$ independent of $\ep$ and $U$. In particular, the choices  of $\ell$ and $N$ (see \eqref{eq:3.22}  and \eqref{eq:3.21}) imply that 
 \beq\label{eq:convnu'}
 \frac{\logeps}{g_\ep}\cdot\nu'_\ep\rest C^t\to J \qquad \text{in } W^{-1,1}(\Lambda^2U) 
 \eeq
 and, for any 2-covector $\eta$,
 \beq\label{eq:convnu'eta}
 (\frac{\logeps}{g_\ep}\cdot\nu'\rest C^t,\eta)\to (J,\eta)\qquad\text{in } W^{-1,1}(U). 
 \eeq
 \end{lemma}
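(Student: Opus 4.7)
Since $\spt\nu_\ep = M_\ep \subset C^t$ for every $t \ge 0$, one has $\nu_\ep\rest C^t = \nu_\ep$, and hence
\[
\nu'_\ep \rest C^t - \nu_\ep \;=\; (\nu'_\ep - \nu_\ep)\rest C^t.
\]
My strategy is to represent the exact 2-form $\nu'_\ep-\nu_\ep$ as $d\omega$ for a 1-form $\omega$ with small $L^1$-norm, and then bound the $W^{-1,1}$-norm of its restriction to $C^t$ via Stokes' theorem, with the boundary contribution controlled by a mean-value selection of $t$.

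First, applying \eqref{eq:flatU} to both discretizations $\nu_\ep$ (from grid $\mathcal G$) and $\nu'_\ep$ (from grid $\mathcal G^*_\ep$), together with the triangle inequality and $E_\ep(u_\ep;\Omega)\le K g_\ep$, yields $\|\nu'_\ep - \nu_\ep\|_{W^{-1,1}(U)} \le C\ell g_\ep$. Since $\nu'_\ep - \nu_\ep$ is exact in $U$, a standard duality argument on the Lipschitz domain $U$ (the $W^{-1,1}$-norm of an exact form is comparable to the infimum of $L^1$-norms of its 1-form primitives, via a Hodge-type solvability for the exterior derivative) furnishes $\omega\in L^1(\Lambda^1 U)$ with $d\omega = \nu'_\ep - \nu_\ep$ and $\|\omega\|_{L^1(U)}\le C\ell g_\ep$. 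Using that $f(x)=\dist(x,M_\ep)$ is $1$-Lipschitz, the coarea inequality
\[
\int_0^{N\ell}\bigl\|\omega|_{\partial C^s\cap U}\bigr\|_{L^1(\mathcal H^2)}\,ds \;\le\; \int_U|\omega|\,|\nabla f|\,dx \;\le\; \|\omega\|_{L^1(U)}\;\le\; C\ell g_\ep
\]
then provides, on averaging, some $t = t_\ep\in(0,N\ell)$ with $\|\omega|_{\partial C^t\cap U}\|_{L^1(\mathcal H^2)}\le C g_\ep/N$.

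For any test 2-form $\phi$ with $\|\phi\|_{W^{1,\infty}_0(U)}\le 1$, integration by parts on $C^t\cap U$ (the boundary piece on $C^t\cap\partial U$ vanishes since $\phi|_{\partial U}=0$) gives
\[
\bigl|\langle(d\omega)\rest C^t,\,\phi\rangle\bigr| \;\le\; C\|\omega\|_{L^1(U)}\,\|d^*\phi\|_{L^\infty} + \bigl\|\omega|_{\partial C^t\cap U}\bigr\|_{L^1(\mathcal H^2)}\,\|\phi\|_{L^\infty} \;\le\; C(\ell+N^{-1})g_\ep.
\]
Taking the supremum over such $\phi$ proves \eqref{eq:nurest}. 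With the choices $\ell=\logeps^{-10}$ and $N=\logeps^3$, the factor $(\ell+N^{-1})\logeps$ tends to $0$, so dividing \eqref{eq:nurest} by $g_\ep/\logeps$ and combining with $\frac{\logeps}{g_\ep}\nu_\ep\to J$ from Step 1 yields \eqref{eq:convnu'}; the scalar version \eqref{eq:convnu'eta} then follows by testing against forms of the shape $\zeta\eta$ with $\zeta\in W^{1,\infty}_0(U)$.

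\textbf{Main obstacle.} The technical crux is the $L^1$-primitive construction in the first step: producing $\omega\in L^1(\Lambda^1 U)$ with $\|\omega\|_{L^1(U)}\lesssim\|d\omega\|_{W^{-1,1}(U)}$ on the specific Lipschitz domain $U\Subset\Omega$ requires a Hodge/Bogovskii-type solvability for $d$ on $U$ that is uniform in the data; once $\omega$ is in hand, the remainder is a clean coarea-plus-Stokes calculation, and the $L^1$-trace on $\partial C^t\cap U$ needed to make sense of the boundary term is precisely what the coarea argument justifies for a.e.\ $t$.
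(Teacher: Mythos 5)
Your overall strategy --- produce a ``filling'' of $\nu'_\ep-\nu_\ep$ of mass $\lesssim\ell g_\ep$, slice it at a well-chosen level $t$ via coarea, and conclude by Stokes --- is indeed the right one, and the reduction $\nu'_\ep\rest C^t-\nu_\ep = (\nu'_\ep-\nu_\ep)\rest C^t$, the coarea selection of $t$, and the final Stokes estimate are all sound. The gap is exactly where you flag the ``main obstacle'': the production of the primitive $\omega$. You assert that, since $\nu'_\ep-\nu_\ep$ is exact with $W^{-1,1}(U)$-norm $\lesssim\ell g_\ep$, a ``standard duality argument'' yields $\omega\in L^1(\Lambda^1 U)$ with $d\omega=\nu'_\ep-\nu_\ep$ and $\|\omega\|_{L^1(U)}\lesssim\|\nu'_\ep-\nu_\ep\|_{W^{-1,1}(U)}$. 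This inequality is not standard, and the argument you gesture at does not close. The $W^{-1,1}(U)$ norm is an inhomogeneous flat-type norm (supremum against $\phi\in W^{1,\infty}_0(U)$ with $\|\phi\|_{W^{1,\infty}}\le 1$), whereas the minimal $L^1$-primitive is governed by the homogeneous dual, a supremum against $\phi$ constrained only by $\|d^*\phi\|_{L^\infty}\le 1$. Passing from the latter to the former requires, for each such $\phi$, a coclosed modification $\tilde\phi$ with $d^*\tilde\phi=d^*\phi$ and full $\|\tilde\phi\|_{W^{1,\infty}}$ control; in vector notation this is the $L^\infty\to W^{1,\infty}$ solvability of $\nabla\times F=g$, a critical elliptic problem that fails in general. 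Note that Lemma~\ref{lem:mincon} in the paper, the closest statement of this kind actually proved here, is deliberately formulated in the one degree --- measures on $\bd K$ paired with scalar Lipschitz functions --- where the needed modification is just subtraction of a constant, so that this obstruction does not arise; no such elementary gauge fix is available for $2$-forms. So the proposal replaces the hardest step of the proof with a claim that is, as written, unjustified and probably false as a uniform estimate.

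What the paper actually does is construct the filling explicitly rather than abstractly. Choosing a regular value $s$ of $u_\ep$ by a mean-value argument so that \eqref{eq:meanvalueu} holds, one takes $M_s:=u_\ep^{-1}(s)$ and sets $S=\Psi([0,1]\times M_s)-\Psi^*([0,1]\times M_s)$, where $\Psi,\Psi^*$ are the straight-line homotopies to the retraction maps $\Phi,\Phi^*$ of Lemma 3.8 in \cite{abo}. Then $\bd S=\Phi_\# M_s-\Phi^*_\# M_s=M_\ep-M'_\ep$ in $U$, and the area bound $\mathcal H^2(S\cap U)\le C\ell\,E_\ep(u_\ep;\Omega)\le C\ell g_\ep$ drops out of the coarea formula for $|Ju_\ep|$ together with \eqref{eq:meanvalue1} and \eqref{eq:meanvalueE}. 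This explicit construction is the nontrivial input that your Step~1 tries to replace, and it is what gives the mass bound together with its uniformity in $U$; once $S$ (or equivalently the associated measure-valued $1$-form) is in hand, the coarea slicing at some $t_\ep<N\ell$ and the Stokes computation yielding \eqref{eq:nurest} proceed essentially as in your sketch, and \eqref{eq:convnu'}, \eqref{eq:convnu'eta} then follow immediately.
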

 
 We postpone the proof of Lemma \ref{lem:nurest} to Section \ref{sect:nurest} of the Appendix. By \eqref{eq:convnu'eta} and lower semicontinuity of total variation we deduce
 \beq\label{eq:lsc1} 
 \begin{aligned}
 |(J,\eta)|(U) &\le\liminf_{\ep\to 0}|(\frac{\logeps}{g_\ep}\cdot\nu'_\ep\rest C^t\, ,\eta)|(U)\\
 &\le \liminf_{\ep\to 0}|(\frac{\logeps}{g_\ep}\cdot\nu'_\ep\rest C^{N\ell},\eta)|(U)\, .
 \end{aligned}
 \eeq

Observe that specializing \eqref{eq:nu'} to the case $\varphi=\psi \, \eta$, with $\psi\in C^\infty_c(\Omega)$, and letting $\psi$ approach the characteristic function of $C^{N\ell}\cap U$, 
 we have
 \beq\label{eq:nueta}
 |(\nu^\prime_\ep\rest C^{N\ell},\eta)|(U)=|(\nu'_\ep,\eta)|(C^{N\ell}\cap U)=\pi\cdot\sum_{\substack{Q_i\subset R^*_2\\ Q_i\cap U\neq\emptyset}} \left|d_{Q_i}\int_{Q'_i\cap C^{N\ell}\cap U}\star\eta\right|\, .
 \eeq
 Notice that for any $Q'\subset {R^*_1}'$ such that $Q'\cap C^{N\ell}\neq\emptyset$, its dual element $Q$ is contained in the tubular nighborhood of thickness $\sqrt 3\ell$ of $C^{N\ell}$, which is a subset of $C^{2N\ell}$, so that in particular $Q\subset C_\ep$.
Recalling from the definitions that $\star \eta = dx^1 $, which is the oriented arclength element along $Q_i'$ for $Q_i\in \tilde {R^*_2}$, we obtain from
\eqref{eq:nueta} that
\beq\label{eq:boundetanu}
|(\nu^\prime_\ep \rest C^{N\ell},\eta) |(U)\le\sum_{Q\subset \tilde {R^*_2}\cap C_\ep}\pi\ell 
\cdot |d_{Q}|.
\eeq
One readily verifies, following \cite{abo}, p. 1435, that \eqref{eq:3.22} and \eqref{eq:3.22''} allow to  apply Lemma 3.10 there (which relied in turn on a fundamental estimate in \cite{je,sa}), to efficiently estimate
the sum of the degrees $|d_{Q}|$ in terms of $E_\ep(u_\ep ; C_\ep)$. Namely, for any $r>0$, and any $Q\subset R^*_2\cap\Omega$  we have
\beq\label{eq:estimatedeg0}
(1-c_r(\ep))\pi\cdot |d_Q|\le \frac{1}{\logeps}\int_Q e_\ep(u_\ep)d\mathcal H^2 \, +\frac{Kr\ell}{\logeps}\int_{\partial Q} e_\ep(u_\ep) d\mathcal H^1\, ,
\eeq
where $c_r(\ep)$ is independent of $Q$, and $c_r(\ep)\to 0$ as $\ep\to 0$ (see \cite{abo}, p. 1435).
 We may thus write 
\beq\label{eq:estimatedeg}
(1-c_r(\ep))\sum_{Q\subset\tilde R^*_2\cap C_\ep}\!\!\!\!\!\!\!\!\! \pi \cdot |d_{Q}|\le \frac{1}{\logeps}\int_{\tilde {R^*_2}\cap C_\ep}\!\!\!\!\!\!\!\!\!\!\!\!\!\! e_\ep(u_\ep) d\mathcal H^2\, +\frac{Kr\ell}{\logeps}\int_{R^*_1\cap C_\ep} \!\!\!\!\!\!\!\!\!\!\!\!\!\! e_\ep(u_\ep) d\mathcal H^1.
\eeq
 Combining \eqref{eq:boundetanu} with \eqref{eq:estimatedeg}, and taking into account \eqref{eq:meanvalueC}, \eqref{eq:meanvalueprime}, we are led to
\beq\label{eq:estimatemass}
(1-c_r(\ep))|(\frac{\logeps}{g_\ep}\cdot\nu^\prime_\ep\rest C^{N\ell},\eta)|(U)\le (1+\delta+\frac{Kr}{\delta}) \frac{E_\ep(u_\ep; C_\ep)}{g_\ep}\, .
\eeq
Passing to the limit as $\ep\to 0$, we have, in view of \eqref{eq:lsc1},
\beq\label{eq:estimatemass'}
|(J,\eta)|(U)\le (1+\delta+\frac{Kr}{\delta})\liminf_{\ep\to 0} \frac{E_\ep(u_\ep; C_\ep)}{g_\ep}\, .
\eeq
Taking $r<\delta^2$ and $\delta$ arbitrarily small yields \eqref{eq:prop:3.1-0}.

\qed

\noindent
{\bf Proof in the general case $n\ge 3$.}
The main tool used above is the algorithm from \cite{abo} for constructing a polyhedral 
approximation of the Jacobian $Ju$, and hence a measure-valued $2$-form $\nu_\ep$, with good
estimates of $\|Ju - \nu_\ep\|_{W^{-1,1}}$ and of $|(\nu_\ep, \eta)|(W)$ for suitable subsets $W\subset \Omega$. The procedure in \cite{abo} in fact is presented
in $\R^n$, $n\ge 3$, and so can be employed in the general case as for $n=3$, with purely cosmetic differences.
For example,  in $\R^n$, the analog of $Q_i'$ in \eqref{eq:nu} and elsewhere is 
now the unique  $n-2$ face of the dual grid that intersects $Q_i$. Also, different scalings make it convenient to 
choose $\ell = \logeps^{-(3n+1)}$, say, while we still take $N= \logeps^3$. Then it remains true that 
$g_\ep \ll N$, which is needed for the proof of Lemma \ref{lem:nurest}, and that $|C_\ep'|\to 0$,
which follows from the fact that $N^n \ell^2 \frac{g_\ep}{\logeps}\to 0$ as $\ep \to 0$, compare \eqref{eq:cep}.
Modulo   changes of this sort, the argument is identical in the general case.
\qed

\noindent
{\bf Proof of \eqref{eq:convJ}.} Recall that we have assumed that $g_\ep \le \logeps^2$.
Since 
\beq\label{eq:c00}
|| Ju_\ep-\nu_\ep ||_{L^1(\Lambda^2U)}\le || Ju_\ep||_{L^1(\Lambda^2U)}+|| \nu_\ep ||_{L^1(\Lambda^2U)}\le C E_\ep(u_\ep;\Omega)\le C g_\ep 
\eeq
 for any $U\Subset\Omega$, we deduce, by interpolation with \eqref{eq:flatU},
\beq\label{eq:c0alpha}
|| Ju_\ep-\nu_\ep||_{W^{-1,p}(\Lambda^2U)}\le C (\ell_\ep\cdot g_\ep)^{1-\frac{n(p-1)}{p}}g_\ep^{\frac{n(p-1)}{p}}\le C\ell_\ep^{1-\frac{n(p-1}{p}}\cdot \logeps^2\, .
\eeq

The conclusion \eqref{eq:convJ} follows by choosing  $\ell_\ep=\ell_{\ep,p}=\logeps^{-\frac{3p}{n-p(n-1)}}$, so that the r.h.s of \eqref{eq:c0alpha} vanishes.

\qed

%
%
%
%
%
%
%
%
%
%
%
%
%
%
%
%
%
%

%
%


\section{UPPER BOUND}\label{sect:upperbound}


In this section we prove statement (ii) of Theorem \ref{thm:2}.

\subsection{Strategy of proof} The proof is subdivided in various steps. First of all, we reduce in Section \ref{S3.2} to considering an appropriate dense class of the domain of the $\Gamma$-limit, using a suitable finite elements approximation. The construction of the recovery sequence will be based on a Hodge decomposition of the limiting momentum $p$, described in Section \ref{S:hodge} and a discretization of the limiting vorticity $dp$ in terms of a system of lines where the vorticity is concentrated and quantized; this, and associated estimates of the discretized vorticity and related quantities, are the main points in the proof. An argument {\it \`a la} Biot-Savart then allows us to construct $S^1$-valued maps whose Jacobian is concentrated precisely on the discretized vorticity lines, and we obtain our maps $u_\ep$ by adjusting the modulus around the vortex cores. The  proof is completed by the verification of the upper bound inequality, which relies crucially on good properties of the discretized vortex lines and estimates satisfied by associated auxiliary functions.

\subsection{Nice dense class}\label{S3.2}

We say that a $1$-form $p$ on a domain $\Omega\subset \R^3$
is rational piecewise linear if $p$
is continuous, and there exist
a family of closed polygons $\{ P_i\}$ with pairwise disjoint interiors
such that $\Omega\subset \cup P_i$ with $p$ linear
on each $P_i\cap \Omega$, and if the flux $\int_{T_j} dp$ is a rational
number
for every face $T_j$ of
every polyhedron $P_i$.

\begin{lemma}\label{lem:fem} Suppose that $\Omega\subset\R^3$ is a bounded open subset  and that
$\partial \Omega$ is of class $C^1$.
Given $p\in L^2(\Lambda^1(\Omega))$  such that $dp$ is a 
measure, and given $\delta>0$ small, there exists a polygonal set $\Omega_\delta^P$ with $\Omega\Subset\Omega_\delta^P\Subset\Omega^\delta=\{\dist(x,\Omega)<\delta\}$, and such that $\Omega\simeq\Omega_\delta^P\simeq\Omega^\delta$,  and a rational, piecewise linear $1$-form $p_\delta\in L^2(\Lambda^1 \Omega_\delta^P)$,
such that $d p_\delta\in L^1(\Lambda^2\Omega_\delta^P)$ and 
\beq
\| p - p_\delta\|_{L^2(\Omega)}  \le\delta
\label{ndc1}\eeq
\beq
\| p_\delta \|_{L^2(\Omega_\delta^P)}^2 \le 
\| p \|_{L^2(\Omega)}^2 + \delta
\label{ndc2}\eeq
\beq
\int_{\Omega_\delta^P}|d p_\delta| \ \le 
|d p|(\Omega) + \delta.
\label{ndc3}\eeq
\end{lemma}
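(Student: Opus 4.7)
The plan is the classical four-step sequence---extend, mollify, interpolate on a tetrahedral mesh, then perturb vertex values to rationalize the face fluxes. For the polyhedral set, I will take $\Omega_\delta^P$ to be any polyhedral approximation satisfying $\Omega\Subset\Omega_\delta^P\Subset\Omega^\delta$ and diffeomorphic to $\Omega$; existence of such a set follows from $\partial\Omega$ being $C^1$, and the intermediate diffeomorphism type automatically gives $\Omega\simeq\Omega_\delta^P\simeq\Omega^\delta$.

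\emph{Extension.} Using the $C^1$ regularity of $\partial\Omega$, I locally straighten the boundary via coordinate charts and reflect the components of $p$ across $\partial\Omega$ in a tubular collar of thickness $\sigma$, patching the local pieces with a partition of unity; this produces $\tilde p\in L^2(\Lambda^1\tilde\Omega)$ on a neighborhood $\tilde\Omega\Supset\Omega_\delta^P$ with $\tilde p|_\Omega = p$. As $\sigma\downarrow 0$, both $\|\tilde p\|_{L^2(\tilde\Omega\setminus\Omega)}^2\to 0$ (dominated convergence, since $p\in L^2$) and $|d\tilde p|(\tilde\Omega\setminus\Omega)\to 0$ (the reflection sends the measure $dp$ on a thin inner layer to a measure on the corresponding outer layer with comparable total variation, and $|dp|(\{x\in\Omega:\dist(x,\partial\Omega)<\sigma\})\to 0$ since $|dp|$ is a finite measure on $\Omega$). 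I then fix $\sigma$ so small that both quantities are bounded by $\delta/8$.

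\emph{Mollification and nodal interpolation.} Set $p^\eta := \rho_\eta*\tilde p$ on $\Omega_\delta^P$ with $\eta\ll\sigma$; as $\eta\to 0$, $\|p^\eta-\tilde p\|_{L^2(\Omega_\delta^P)}\to 0$ and $\int_{\Omega_\delta^P}|dp^\eta|\le|d\tilde p|((\Omega_\delta^P)^\eta)\to|d\tilde p|(\overline{\Omega_\delta^P})$. Next, fix a tetrahedral triangulation $\mathcal{T}_h$ of $\Omega_\delta^P$ with \emph{all vertices at rational coordinates} and maximum diameter $h$, and let $I_h p^\eta$ be the componentwise continuous piecewise affine nodal interpolant of $p^\eta$. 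Standard finite-element estimates give $\|I_h p^\eta - p^\eta\|_{H^1(\Omega_\delta^P)} = O(h\,\|D^2 p^\eta\|_{L^\infty})$, so for $h$ sufficiently small, $I_h p^\eta$ lies within $\delta/8$ of $p^\eta$ in the $L^2$ norm and within $\delta/8$ of $p^\eta$ in the $L^1$ norm of the exterior derivative. Combining with the extension and mollification errors, the piecewise affine form $I_h p^\eta$ already verifies analogues of \eqref{ndc1}, \eqref{ndc2}, \eqref{ndc3} with $3\delta/8$ in place of $\delta$.

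\emph{Rationalizing the fluxes, and the main obstacle.} For each triangulation face $T_j$, Stokes' theorem gives $\int_{T_j} d(I_h p^\eta) = \int_{\partial T_j} I_h p^\eta$, which is a linear functional of the vertex values of $I_h p^\eta$; because the vertices have rational coordinates, this linear map is represented by a matrix with rational entries. Density of $\Q^{3N}$ in $\R^{3N}$ (with $N$ the number of vertices) then provides an arbitrarily small perturbation of the vertex values after which every face flux becomes rational, while preserving \eqref{ndc1}--\eqref{ndc3} within the remaining $\delta/8$ budget; this perturbed form is the desired $p_\delta$. The principal difficulty is the extension step: simultaneously making $\|\tilde p\|_{L^2}$ and $|d\tilde p|$ small on the added annulus. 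A generic bounded extension operator gives only boundedness and not smallness on the thin annular region, while extension by zero introduces a singular contribution to $d\tilde p$ concentrated on $\partial\Omega$; the reflection construction, available precisely because $\partial\Omega$ is $C^1$, reduces both controls to dominated-convergence statements on thin inner layers. The remaining steps are routine.
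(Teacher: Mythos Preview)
Your overall strategy---extend to a slightly larger domain, mollify, interpolate piecewise linearly on a simplicial mesh, then perturb vertex values to rationalize the face fluxes---is exactly the paper's. Your rationalization step (each face flux is, via Stokes, a \emph{rational} linear combination of the vertex values whenever the vertices lie in $\Q^3$, so a small rational perturbation of vertex data suffices) is in fact more direct than the paper's route, which first exhibits an explicit decomposition of the unit cube into five specific ``rational simplices'' and then interpolates on dyadic refinements of that decomposition.

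The extension step, however, has a gap as written. Componentwise even reflection of the $p_i$ across the flattened boundary does \emph{not} in general keep $d\tilde p$ a measure. Take $p=d\phi$ with $\phi\in H^1\setminus H^2$: then $dp=0$, but after evenly reflecting each $p_i=\partial_i\phi$ across $\{x_n=0\}$, the $dx^1\wedge dx^n$-component of $d\tilde p$ on $\{x_n>0\}$ computes to $2\,(\partial_1\partial_n\phi)(x',-x_n)$, which for generic $\phi\in H^1$ lies only in $H^{-1}$ and is not a measure. The operation that works is reflection by \emph{pullback} under $R:(x',x_n)\mapsto(x',-x_n)$ (even on tangential components, odd on the normal one); then $d(R^*p)=R^*(dp)$ on the outer collar and no singular contribution appears on $\partial\Omega$, giving exactly the control you assert on $|d\tilde p|(\tilde\Omega\setminus\Omega)$. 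The paper bypasses all of this by quoting strict BV approximation to produce a smooth $p'$ on a larger set with $|dp'|(\Omega')\le|dp|(\Omega)+\delta/2$ directly. There is also a minor ordering issue: you fix $\Omega_\delta^P$ before $\sigma$, but the requirement $\tilde\Omega\supset\Omega_\delta^P$ then forbids $\sigma\downarrow0$; choose $\sigma$ first and fit $\Omega_\delta^P$ inside the resulting $\sigma$-collar.
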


\begin{proof}
{\bf Step 1}. 
We say that a simplex $P$ is {\em rational} if, whenever $p = \sum (a^{ij} x_j+b^i)dx_i$ 
is a linear $1$-form on $P$ with $a^{ij}, b^i$ rational for all $i,j$, the flux of $p$ through
every face of $P$ is rational.
We claim that the unit cube in $\R^n$ can be covered by closed {rational} simplices with 
pairwise disjoint interiors. Repeating the same construction in every integer
translate of the unit cube, we can cover $\R^3$ by closed rational simplices
with pairwise disjoint interiors. Note also that if we dilate the simplices by
any rational factor, the resulting simplices are still rational.

\smallskip
Let $S_0$ denote the standard simplex  co$\,\{ 0, e_1, e_2, e_3\}$ in $\R^3$,
where co$\,\{ \cdots\}$ denotes the convex hull. 
If $p$ is linear on $P$ with rational coefficients, 
then the flux $\int_T dp$ is a rational number  when $T = \mbox{co}\,(\{0, e_i, e_j\})$ (with either orientation),
for any choice of $i,j\in \{1,2,3\}$ with $i\ne j$. Since $\int_{\partial P} dp  =0$,
it follows that the flux through the fourth face is rational as well.
Thus $S_0$ is rational

Similarly, for any $i,j\in \{1,2,3\}$ with $i\ne j$, let
$S_{ij} = $ co$\,\{e_i+e_j, e_i, e_j, e_1+e_2+e_3\}$.
The same argument as above shows that $S_{ij}$ is rational.
We next claim that
\[
[0,1]^3 \setminus (S_0\cup S_{12}\cup S_{13}\cup S_{23}) =  \mbox{co}\,\{ e_1, e_2, e_3, e_1+e_2+e_3\}.
\]
This follows by noting  that $[0,1]^3\setminus (S_0\cup( \cup_{i,j} S_{ij}))$ is convex,
and that its extreme points are exactly $\{ e_1, e_2, e_3, e_1+e_2+e_3\}$.
Every face of $\mbox{co}\,\{ e_1, e_2, e_3, e_1+e_2+e_3\}$
is also a face of either $S_0$ or of $S_{ij}$ for some $i,j$, so it follows from what
we have
already said that $\mbox{co}\,\{ e_1, e_2, e_3, e_1+e_2+e_3\}$ is rational.

\medskip\noindent
{\bf Step 2}. 
By adapting standard approximation techniques for $BV$ functions as in \cite{gi}, we can
find a set $\Omega'$ such that $\Omega\Subset \Omega'$,
and a $1$-form $p'\in C^\infty(\Lambda^1(\Omega'))$, such that $\| p - p'\|_{L^2(\Omega)}  \le\delta/2$, $\|p'\|^2_{L^2(\Omega')}\le\|p\|^2_{L^2(\Omega)}+\delta/2$ and $|dp'|(\Omega')\le |dp|(\Omega)+\delta/2$.

Choose now a domain $\Omega_\delta$ such that $\Omega\Subset\Omega_\delta\Subset\Omega'$, and $\Omega_\delta$ is the union of a finite number
of cubes with pairwise disjoint interiors and rational edges.

By the discussion in Step 1 above, we can triangulate $\Omega_\delta$ with rational simplices. Performing dyadic subdivisions of each cube, 
we may also obtain rational triangulations with arbitrarily small mesh size 
(and with fixed geometry, since  the angles appearing in the triangulation 
will be precisely those in our original decomposition of the unit cube). 

By standard interpolation theory from the finite elements method (see for instance \cite{C}, Chapter 3), we can find piecewise linear $1$-forms which are arbitrarily close to $p'$ in $W^{1,2}(\Omega_\delta)$: it suffices to choose a sufficiently fine triangulation constructed as above, 
and to take the (unique) piecewise linear
form $p_\delta$ which interpolates $p'$ in the vertices of the triangulation. Moreover, an arbitrarily small change of $p_\delta$ in the vertices makes it rational. 
\end{proof}

We will also need the following variant of the above.

\begin{lemma2prime} Suppose that $\Omega\subset\R^n$ is a bounded open subset  and that
$\partial \Omega$ is of class $C^1$.
Given an exact measure-valued $2$-form $J$,
and given $\delta>0$ small, there exists a polygonal set $\Omega_\delta^P$ such that $\Omega\Subset\Omega_\delta^P\Subset\Omega^\delta=\{\text{dist}(x,\Omega)<\delta\}$, and such that $\Omega\simeq\Omega_\delta^P\simeq\Omega^\delta$,  and a rational, piecewise linear $1$-form $p_\delta'\in L^2(\Lambda^1 \Omega_\delta^P)$,
such that $d p_\delta'\in L^1(\Lambda^2\Omega_\delta^P)$ and such that
\beq
\| p - p_\delta\|_{W^{-1,1}(\Omega)} \le \delta, 
\quad\quad\quad
\int_{\Omega_\delta^P}|d p_\delta'| \ \le 
|J|(\Omega) + \delta.
\label{ndc3a}\eeq
\end{lemma2prime}

The proof is a straightforward modification of the proof of Lemma \ref{lem:fem}, once we note from 
Corollary \ref{cor:poincare} in the Appendix that any exact measure-valued $2$-form $J$ in $\Omega$ can be written in the form
$J=dp'$ for some $p'\in \cap_{1\le q< \frac n{n-1}} L^q(\Lambda^1\Omega)$ 

\subsection{Hodge decomposition of $p_\delta$} \label{S:hodge}
Here we refer for  notations and basic theory to section \ref{sect:hodge} of the Appendix. 
We henceforth  write $p$ instead of $p_\delta$. 

Since basic results on Hodge theory to which we appeal require some smoothness of the domain, 
we  fix an open set $\Omega_\delta$ with smooth boundary, such that
$\Omega\Subset\Omega_\delta\Subset\Omega_\delta^P$, and such that 
$\Omega\simeq\Omega_\delta\simeq\Omega_\delta^P$. In particular we assume that 
if $\partial \Omega_\delta^P$ has connected components $(\partial \Omega_\delta^P)_i, i=1,\ldots, b$, then
there exist disjoint connected open sets $W_1,\ldots, W_b$ such that 
\beq\label{eq:Od-OdP}
\Omega_\delta^P \setminus \bar \Omega_\delta = \cup_{i=1}^b W_i,\quad\quad
\partial W_i = (\partial \Omega_\delta^P)_i \cup (\partial \Omega_\delta)_i \ \ \quad \forall i.
\eeq

Consider the Hodge decomposition $p =\gamma+ d\alpha+d^*\beta$ on $\Omega_\delta$ satisfying the boundary conditions \eqref{eq:hodgen}. Thanks to Corollary \ref{cor:poincare} in the Appendix, we know that $\beta=-\Delta^{-1}_N(dp)$, so that in particular $||\beta||_q\le C_q || dp||_1$ $\forall\, q<3/2$.
Recall that by $L^2$-orthogonality of the Hodge decomposition we have
\beq\label{eq:parseval}
\int_{\Omega_\delta}|p|^2=\int_{\Omega_\delta}|\gamma|^2+|d^*\beta|^2+|d\alpha|^2\, .
\eeq

We emphasize that in what follows, we will carry out most geometric arguments on the polygonal set $\Omega_\delta^P$,
but the Hodge decomposition always refers to the smooth set $\Omega_\delta \subset \Omega_\delta^P$.



\subsection{Discretization of $dp = dd^*\beta$.}\label{S3.4} 

We will use different arguments to approximate the different terms in the Hodge decomposition of $p$. Most of our effort will be devoted to $d^*\beta$.
As noted above, the first step in our construction is to discretize $dp = dd^*\beta$,
which one can think of as the vorticity.

\begin{proposition}\label{prop:1} Let $p$ be a rational 1-form supported on $\Omega_\delta^P\subset\R^3$, and fix $\eta\in(0,1)$. For any $h\le \eta^2$ there exists an exact measure-valued 2-form $q_h$ in $\Omega_\delta^P$ such that:

\smallskip
\smallskip
\noindent
{\rm (i)}  \qquad\qquad\qquad\quad $q_h=dd^*\beta_h$, where $\beta_h=-\Delta^{-1}_Nq_h$\ in $\Omega_\delta$.
$$
|| q_h - dp ||_{W^{-1,1}(\Omega_\delta^P)}\le C\eta ,\leqno{\rm(ii)}
$$
$$
|q_h|(\Omega_\delta^P) \le  ||dp||_{L^1(\Omega_\delta^P)} +C\eta\, ,\leqno{\rm(iii)}
$$
$$
\begin{aligned}
 ||d^*\beta_h||_{L^p(\Omega_\delta)} \le C_p | q_h|(\Omega_\delta)\, , &\quad d^*\beta_h\rightharpoonup d^*\beta^\eta\ \text{in }L^p(\Omega_\delta)\ \forall\, p<3/2,\\
 || d^*\beta^\eta-d^*\beta &||^2_{L^2(\Omega_\delta)} \le C\eta\, ,
 \end{aligned}
 \leqno{\rm(iv)}
$$
where $C>0$ is independent of $h,\eta, U$.
 For any $\varphi\in C^0(\Lambda^2\Omega_\delta^P)$ we have the integral representation
$$
\langle q_h\, ,\varphi\rangle=h\int_{\Gamma_h}\star\varphi={h}\sum_{\ell=1}^{m(h)}\int_{\Gamma_h^\ell}\star\varphi\, ,\leqno{\rm(v)} 
$$
where $\Gamma_h=\cup_{s=1}^{n(h)}L_h^s\subset\Omega_\delta^P$, $L_h^s$ an oriented line segment $\forall\, s,h$, $m(h)<n(h)\le Kh^{-1}$, and for any $\ell,h$, $\Gamma^\ell_h$ is an oriented simple piecewise linear curve in $\Omega_\delta^P$ such that  $\bd\Gamma_h^\ell\cap U=\emptyset$ $\forall\, U\subset\Omega_\delta^P$. In particular, we have $|q_h|(U)=h|\Gamma_h\cap U|$ for any $U\subset\Omega_\delta^P$. Moreover
$$
\dist (L_1\, , L_2)> c_0\eta h^{1/2}\qquad  \mbox{ if $L_1, L_2$ are disjoint closed line segments of 
$\Gamma_h$,}
\leqno{\rm(vi)}
$$
with $c_0>0$ independent of $h,\eta$. 

Finally, if $L_1, L_2$ be two line segments of $\Gamma_h^\ell$ with exactly one endpoint in common, and
$\tau_1,\tau_2$ are their respective unit tangents, then 
$$
\tau_1\cdot \tau_2>-1+C\eta^2\, ,\leqno{\rm(vii)}
$$
for some $C>0$ independent of $h,\eta$.

\end{proposition}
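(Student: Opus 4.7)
Since $p$ is piecewise linear on a polyhedral decomposition $\{P_i\}$ of $\Omega_\delta^P$, the 2-form $dp$ restricts to a constant 2-form $\omega_i=\star \vec V_i$ on each $P_i$; continuity of $p$ across a face $F=P_i\cap P_j$ gives $\vec V_i\cdot\hat n_F=\vec V_j\cdot\hat n_F$, and rationality of $p$ forces every face-flux $\int_F dp$ to lie in $\tfrac1N\Z$ for a common denominator $N$. My strategy is to build $q_h$ as a sum of line segments of weight $h$: inside each $P_i$ the segments are (nearly) parallel to $\vec V_i$, and across faces they concatenate at prescribed ``transit points''. Concretely, on every face $F$ I fix a square lattice $L_F$ of spacing $c_0\eta\sqrt h$ (with $c_0$ depending only on $p$), chosen generically so as to avoid lower-dimensional cells, and I designate signed transit points on $L_F$ whose weighted sum $h\sum\pm\delta_{x_\alpha}$ approximates $dp|_F$ in $W^{-1,1}(F)$ with error $O(\sqrt h)$. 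Choosing $h$ to be a suitable integer multiple of $(c_0\eta)^2/N$ makes the $\pm$ transit counts on each $F$ exact integers and fluxes exactly matched. Then in each $P_i$, because $\int_{\partial P_i}\omega_i=0$ the entering and exiting transit counts agree, and I connect them by a matching of nearly straight segments (each tilted from $\hat V_i$ by $O(\eta)$ in order to reach the prescribed lattice point), obtained by a standard combinatorial transportation argument. Gluing across faces then yields a polygonal graph $\Gamma_h$ with all interior vertices of degree two, which decomposes canonically into simple curves $\Gamma_h^\ell$ with endpoints on $\partial\Omega_\delta^P$, proving (v).

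\textbf{Verification.} Claims (ii) and (iii) reduce, on each $P_i$, to comparing $\omega_i\,dx$ with its discretization as parallel lines of spacing $O(\sqrt h)$: testing against a $1$-Lipschitz form gives $W^{-1,1}$-error $O(\sqrt h)\le O(\eta)$, and boundary-strip overflow contributes $O(\sqrt h)\le O(\eta)$ in mass. Claim (vi) is immediate from the lattice spacing $c_0\eta\sqrt h$. For (vii), two segments sharing an endpoint are either co-polyhedral, hence parallel with $\tau_1\cdot\tau_2=1$, or trans-facial, in which case $\tau_1\cdot\tau_2\approx \pm\hat V_i\cdot\hat V_j$; the configuration $\hat V_i=-\hat V_j$ is excluded because the continuity condition $\vec V_i\cdot\hat n_F=\vec V_j\cdot\hat n_F$ then forces $\vec V_i\cdot \hat n_F=0$, so that no transit points exist on $F$, and consequently $\tau_1\cdot\tau_2$ stays bounded below by a constant depending only on $p$, the $O(\eta)$ tilt being absorbed into the $C\eta^2$ margin. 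Finally, (iv) follows from standard elliptic theory for $\Delta_N^{-1}$: the uniform $L^1$ bound on $q_h$ gives $\beta_h$ bounded in $W^{1,p}$ for $p<3/2$, hence weak convergence $d^*\beta_h\rightharpoonup d^*\beta^\eta$; the bound $\|q^\eta-dp\|_{W^{-1,1}}\le C\eta$ inherited from (ii) combined with Hodge-theoretic estimates for the coexact projection $d^*\Delta_N^{-1}$ yields the $L^2$ closeness of $d^*\beta^\eta$ to $d^*\beta$.

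\textbf{Main obstacle.} The hardest step, I expect, is the simultaneous control inside each $P_i$ of direction, disjointness, and exact mass conservation through every face: transit points are forced to sit on a prescribed lattice on faces, while segments inside $P_i$ must be nearly parallel to $\vec V_i$, pairwise disjoint, and of quantized multiplicity $h$. Reconciling these constraints requires a transportation-type argument that leverages the exact integer counts provided by rationality of $p$. A secondary subtlety is to arrange that $\Gamma_h$ has no interior self-intersections, so that the resulting decomposition is into genuinely \emph{simple}, pairwise disjoint curves; this is where the genericity built into the choice of face lattices at the outset is essential.
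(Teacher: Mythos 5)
Your high-level strategy (discretize $dp$ into weight-$h$ line segments adapted to the polyhedral decomposition, with rationality guaranteeing exact integer flux counts) matches the spirit of the paper's construction, but there is a genuine gap in the step you yourself flag as the main obstacle, and the paper resolves it with a geometric device that your scheme lacks. Inside each simplex $S_i$ you propose segments that are \emph{nearly} parallel to the flux direction $\vec V_i$, ``tilted by $O(\eta)$'' so as to hit prescribed lattice points on the faces. This cannot yield the disjointness and separation asserted in (vi): two segments of length $O(1)$ whose directions differ by $O(\eta)$ can approach each other by $O(\eta)$, which dwarfs the intended separation scale $\eta\sqrt h$ as soon as $h\ll 1$. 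The paper's construction avoids this by a two-level geometry: a homothetic inner copy $\tilde S_i = (1-\eta)S_i + \eta b_i$ in which segments are \emph{exactly} parallel to $\vec V_i$ (so they never converge), plus thin pyramidal frustums $P_{ij}$ of thickness $\sim\eta$ between $\tilde S_i$ and $\partial S_i$, in which segments are chosen as a Brezis--Coron--Lieb minimal connection between the transit points on $T_{ij}$ and on $\tilde T_{ij}$. The minimality is then exploited systematically: it gives the mass bound (iii) by comparison with a competitor produced via Lemma \ref{lem:mincon}, it gives (vi) via the endpoint-to-segment estimate \eqref{eq:claim1} (where the $\eta$-factor comes from the angle $\ge c\eta$ that segments in $P_{ij}$ make with the faces) combined with the geometric Lemma \ref{lem:claim2}, and it identifies the weak limit $q^\eta$. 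A generic ``transportation-type argument'' does not supply this; the two-level structure is precisely the missing idea.

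There is also a gap in your treatment of (iv). From the $W^{-1,1}$ closeness of $q^\eta$ to $dp$ and the mapping properties of $d^*\Delta_N^{-1}$ on measures (Proposition \ref{prop:2.10}), one only gets estimates in $L^q$ for $q<3/2$, not the stated $L^2$ bound $\|d^*\beta^\eta - d^*\beta\|_{L^2}^2\le C\eta$. The paper closes this by computing $q^\eta$ explicitly: it equals $(1-\eta)^{-2}dp$ on $\tilde S_i$, and on each frustum $P_{ij}$ it is the bounded, calibrated radial flow in \eqref{eq:qetaformula}. This yields a pointwise bound and hence $\|q^\eta-dp\|_{L^2}^2 \le C\eta$ directly, after which standard $L^2$ elliptic estimates for $d^*\Delta_N^{-1}$ give the conclusion. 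Without an explicit description of the weak-$*$ limit $q^\eta$, the $L^2$ estimate does not follow from the $W^{-1,1}$ estimate of part (ii).
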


\begin{remark}The discretized vorticity $q_h$ has a $1$-dimensional character, in that it is supported on a union of line segments, so that in realizing it as a (measure-valued) $2$-form, rather than a $1$-form or vector field, we are departing both from the convention discussed in \eqref{eq:hatt} and from standard practice in geometric measure theory. However, this departure is natural in that $q_h$ is an approximation of the $2$-form $dp$, and it is very useful when we want to appeal to  Hodge Theory to solve
elliptic equations with $q_h$ on the right-hand side, as in conclusion $(i)$ above.
\end{remark}

\begin{remark}
The role of the parameter $\eta$ is to guarantee that  $q_h$
enjoys certain properties such as a  good lower bound on distance between
distinct piecewise linear curves in the support of $q_h$, see conclusion (vi) above.
These are essential for the verification of the upper bound inequality.
\end{remark}

\begin{remark}Our arguments (in particular the proof of (iv)) show that there exists $2$-form $q^\eta$ such that $q_h\rightharpoonup q^\eta$ weakly as measures. as $h\to 0$. In fact our
construction is designed to yield an explicit description of $q^\eta$, see \eqref{eq:qetaformula}. This complicates the construction of $q_h$ but immediately yields uniform estimates of $q^\eta$, needed for (iv), that would otherwise require some work to obtain.
\end{remark} 

\smallskip

\begin{proof}
The proof of Proposition \ref{prop:1} will be divided in several steps.

\medskip
\noindent
{\bf  Proof of (v).}
We start by constructing $q_h$, which amounts to constructing a collection $\Gamma_h$ of line segments, see (v).
Let $\eta\in (0,1)$ be fixed, and
let $p$ be a piecewise linear rational 1-form with respect to the triangulation $\{ S_i\}$ of $\Omega_\delta^P$ as fixed in the proof of Lemma \ref{lem:fem}. 
In particular, for each $i$ there exists a vector $v_i=(v_i^1,v_i^2,v_i^3)$ such that
$dp\rest S_i=\sum_j v^j_i\star dx_j$. 
For any  simplex $S_i$, let $b_i$ its barycenter, and let 
\beq\label{eq:homo}
\tilde S_i=(1-\eta)\cdot S_i +\eta\cdot b_i\subset S_i
\eeq
be a homothetic copy of $S_i$, and let $T_{ij},\tilde T_{ij}$, $j=1,\ldots,4$ be the 2-faces of $S_i,\tilde S_i$ respectively, with the induced orientations.

We will arrange that within each $\tilde S_i$, our discretization of $dp$ is supported on a finite union of  line segments exactly parallel to $v_i$. In order to to this and to match fluxes across the faces of each $S_i$, we discretize the flux through the faces of each $S_i$ and each $\tilde S_i$ in related though different ways.

For every $i$ and for $j\ne k\in \{1,\ldots, 4\}$, define $T_{ijk}\equiv\pi^{-1}(\pi(T_{ij})\cap\pi (T_{ik}))\cap T_{ij}$ (with the orientation of $T_{ij}$), where $\pi\equiv\pi_i$ is  the projection on the 2-plane $(v_i)^\perp$. One may think of $T_{ijk}$ as the portion of $T_{ij}$ connected to $T_{ik}$ by flux lines of $dp$. Further define
\[
\phi_{ij} = \int_{T_{ij}} dp \in \Q, \quad\quad
\phi_{ijk}=\int_{T_{ijk}}dp \ = \ \frac{|T_{ij}| } {|T_{ijk}|} \phi_{ij}.
\]
Clearly $\phi_{ij}=\sum_{j\neq k}\phi_{ijk}$, and hence $\phi_{ijk}\in\Q$, as solutions of a linear systems with rational data. Let $\phi^{-1}$ be the least common denominator of $\, \{ |\phi_{ijk}|\}\in\N$, so that $\phi_{ijk}\phi^{-1}\in\Z$. 


For $N\in \N$, we define $h_N := \frac{\phi}{N}$, so that $\frac{\phi_{ijk}}{h_N}\in\Z$ for all $i,j,k$, and similarly $\frac{\phi_{ij}}{h_N} \in \Z $ for every $i,j$. We will prove the proposition for every $h_N$ such that  $h_N<\eta^2$; for arbitrary $h<\eta^2$, the conclusions of the proposition then hold if we define
$q_h := q_{h_N}$, $\beta_h := \beta_{h_N}$, for $N$ such that $h_N\le h < h_{N-1}$.

We henceforth fix an arbitrary $N$ such that $h_N<\eta^2$, and we drop the subscript and write simply $h$.

We first discretize $dp$ on every $T_{ij}$. In order to avoid discretizing any $2$-face twice in inconsistent ways, we define
\[
\mathcal T := \{ T_{ij}
 \ : \ \phi_{ij}>0 \mbox{ or }T_{ij}\subset \partial \Omega_\delta^P \}.
\]
For $T_{ij}\in \mathcal T$, let $m = m_{ij} := \frac{\phi_{ij}}h \in \Z$, 
and let $\ell = \ell_{ij}$ verify
$(\ell_{ij}-1)^2< m \le \ell_{ij}^2$. Now   partition $T_{ij}$ into $\ell_{ij}^2$ closed
triangular pieces $\{  T^a_{ij} \}_{a=1}^{\ell^2}$ with pairwise disjoint interiors, each one isometric to $\ell_{ij}^{-1}T_{ij}$. Select $m$ of these triangles, and let
$\{ s^a_{ij}\}_{a=1}^m$ be the barycentres of the chosen triangles.

If $T_{ij}\not\in \mathcal T$,  then $T_{ij} = -T_{i'j'}$ for some $T_{i'j'}\in \mathcal T$, we set 
$m = m_{ij} := m_{i'j'}$,  and $s^a_{ij} = s^a_{i'j'}$ for $a=1\ldots m_{ij}$.

Next we consider $\{\tilde T_{ij}\}$. For $i,j,k$, let 
$\tilde T_{ijk}\equiv (1-\eta)\cdot T_{ijk} + \eta \cdot b$ (with the orientation of $
T_{ijk}$) and define
\[
\tilde {\mathcal T}:= \{ \tilde T_{ijk}
 \ : \ \phi_{ijk}>0\}.
\]
Now proceed as above: for each $\tilde T_{ijk}\in \tilde {\mathcal T}$, let $ m =  m_{ijk} :=  \frac{\phi_{ijk}}h \in \Z$ and  $\ell_{ijk} := \lceil \sqrt{m}\,\rceil$,
and partition $\tilde T_{ijk}$ into $\ell_{ijk}^2$ closed
triangular pieces $\{ \tilde T_{ijk}^a\}_{a=1}^{\ell_{ijk}^2}$ with pairwise disjoint interiors, each one isometric to $\ell_{ijk}^{-1}\tilde T_{ijk}$. Select $m$ of these triangles, and let
$\{ \tilde s^a_{ijk}\}_{a=1}^m$ be the barycentres of the chosen triangles.

If $T_{ijk}\not\in \tilde {\mathcal T}$, then  $\phi_{ijk}\le 0$. If $\phi_{ijk}=0$  (which in particular happens if $T_{ijk}=\emptyset$) we do nothing.
If $\phi_{ijk}<0$, then noting that our orientation conventions imply that $\phi_{ijk} = -\phi_{ikj}$, we see that
$\tilde T_{ikj}\in \tilde {\mathcal T}$, and we define
$\tilde s^a_{ijk} = \pi^{-1}_i\pi_i (\tilde s^a_{ikj}) \cap T_{ijk}$.

We now define piecewise linear curves as follows. First, for every $T_{ijk}\in \tilde {\mathcal T}$, we define
\[
\tilde \Gamma^a_{ijk} := [\pi_i^{-1}(\pi(\tilde s^a_{ijk}))]\cap\tilde S_i,\quad\mbox{oriented so that }
\partial \tilde \Gamma^a_{ijk} = \tilde s^a_{ijk} - \tilde s^a_{ikj}.
\]
Here and below, if $c$ is an oriented piecewise smooth curve, we write $\partial c = p-q$ to mean
that $\int_c df = f(p) - f(q)$ whenever $f$ is a smooth function. We define $ \Gamma_i = \sum_{j,k,a} \tilde \Gamma^a_{ijk}$, so that $\Gamma_i \subset \tilde S_i$, and
\beq
\partial  \Gamma_i = \sum_{j,k,a} \sign(\phi_{ijk}) \tilde s^a_{ijk}.
\label{eq:minconbdy1}\eeq
Moreover that $\Gamma_i$ is the collection of segments with the smallest total arclength
satisfying this condition (as the segments of $ \Gamma_i$ are all parallel to each other.)

Now for each $i,j$, let $P_{ij} := \{ (1- \lambda) x + \lambda b_i : x\in T_{ij}, 0<\lambda <\eta\}$
be the pyramidal frustum having bases  $T_{ij}$ and $\tilde T_{ij}$, and let $\Gamma_{ij}$
be a collection of (oriented) line segments such that
\beq
\partial \Gamma_{ij} = \sum_{a} \sign(\phi_{ij}) s^a_{ij} - \sum_{k,a} \sign(\phi_{ijk}) \tilde s^a_{ijk},
\label{eq:minconbdy2}\eeq
and that minimizes the total arclength among the set of all collections of line segments satisfying the constraint \eqref{eq:minconbdy2}.  Such collections exist, since $\sign(\phi_{ij}) =  \sign(\phi_{ijk})$ and $m_{ij} = \sum_{k=\ne j} m_{ijk}$, so that  $\sum_{j,a} \sign(\phi_{ij}) - \sum_{j,k,a} \sign(\phi_{ijk}) =0$.
Hence $\Gamma_{ij}$ is well-defined, and clearly $\Gamma_{ij}\subset P_{ij}$.

We define $\Gamma_h$ to be the union $\cup \Gamma_i \cup \Gamma_{ij}$ of the families of segments constructed above, and $n(h)$ to be the total number of segments comprising $\Gamma_h$.
We also define $\Gamma^\ell_h$, for $\ell=1,...,m(h)$, where $m(h)\le N(h)$, to be the polyhedral curves realizing the connected components of $\Gamma_h$. It follows from \eqref{eq:qh.bdy1}, proved below, that
$\bd\Gamma^\ell_h=0$ in $\Omega_\delta^P$

Finally, we  define the measure-valued 2-form $q_h$ to satisfy statement (v).

In the following we will write `` a region'' to refer either to one of the $\tilde S_i$ or one of the $P_{ij}$.
We remark that the definition of $\Gamma_h$ states that,
in the language of Brezis, Coron, and Lieb \cite{bcl}, its restriction to
any region is a {\em minimal connection}, subject to the condition \eqref{eq:minconbdy1} 
in $\tilde S_i$ and \eqref{eq:minconbdy2} in $P_{ij}$.





\medskip
\noindent{\bf Proof of (i).} 
By Lemma \ref{lem:dp} and Corollary \ref{cor:poincare} in the Appendix, it suffices to check that $dq_h=0$
in 
$\Omega_\delta$ and that $\int_{(\partial\Omega)_i} (q_h)_\top = 0$
for every connected component $(\partial \Omega_\delta)_i$ of $\partial \Omega_\delta$.

To do this, fix any $f\in C^\infty_c(\R^3)$, and note that 
(v),  \eqref{eq:minconbdy1}, \eqref{eq:minconbdy2}
imply that
\[
\begin{aligned}
\langle dq_h, \star f\rangle &= 
\langle q_h, d^* \star f\rangle = 
\langle q_h, \star d f\rangle = h \sum_i \int_{\Gamma_i} df + \sum_{i,j}\int_{\Gamma_{ij}} df  \\
&= \
 h \sum_{i, j, a} (\sign \phi_{ij} ) f(s^a_{ij}).
\end{aligned}
\]
Here all terms of the form $f(\tilde s^a_{ijk})$ have cancelled, since they occur twice, 
with opposite signs, in \eqref{eq:minconbdy1} and \eqref{eq:minconbdy2}. 
If $s^a_{ij} \in \Omega_\delta^P$, then our construction implies that there exists exactly
one $(i', j', a') \ne (i,j,a)$ such that $s^a_{ij} = s^{a'}_{i'j'}$, and moreover that 
$\sign \phi_{ij} = - \sign \phi_{i'j'}$. Thus all contributions from $\Omega_\delta^P$ vanish, and the above reduces to
\beq
\langle dq_h, \star f\rangle = 
h \sum_{\{ i, j, a: s^a_{ij}\in \partial \Omega_\delta^P\} } (\sign \phi_{ij} )\  f(s^a_{ij}).
\label{eq:qh.bdy1}\eeq
In particular, by considering $f\in C^\infty_c(\Omega_\delta)$ we see that $dq_h=0$ in $\Omega_\delta$.

Now fix some component $(\partial\Omega_\delta)_k$ of $\partial\Omega_\delta$. Then \eqref{eq:Od-OdP}
implies that 
\[
0 = \int_{W_k} d1 = \int_{\partial W_k} 1
= \int_{(\partial \Omega_\delta^P)_k} (q_h)_\top -
\int_{(\partial \Omega_\delta)_k} (q_h)_\top.
\]
Moreover, it follows from \eqref{eq:minconbdy2}, \eqref{eq:qh.bdy1}, and the definition of $(q_h)_\top$  
(see \eqref{qtop.def1} in the Appendix) 	that
\[
\int_{(\partial \Omega_\delta^P)_k} (q_h)_\top =
\sum_{ (i,j): T_{ij}\subset (\partial \Omega_\delta)_k}
h (\sign \phi_{ij}) m_{ij} .
\]
However, the definitions of $m_{ij}$ and  $\phi_{ij}$ imply that the above quantity 
equals
\[
\sum_{ (i,j): T_{ij}\subset (\partial \Omega_\delta^P)_k} \phi_{ij}
=
\sum_{ (i,j): T_{ij}\subset (\partial \Omega_\delta^P)_k} \int_{T_{ij}} dp 
=
\int_{(\partial\Omega_\delta^P)_k} dp = 0.
\]
Then, as remarked above, (i) follows from Lemma \ref{lem:dp} and Corollary \ref{cor:poincare}..

\smallskip
\noindent{\bf Proof of (iii).} 
We next estimate the mass of $q_h$. We will bound the mass on each region $R$,
and then sum up the estimates.
We begin by comparing the fluxes of $q_h$ and $dp$ across $\bd R$.

\begin{lemma}\label{lem:h1/4} Let $R$ be a region, and let $(dp)_\top$ and $(q_h)_\top$ be the tangential parts of $dp$ and $q_h$, respectively, on $\partial R$, ie, the measures in $\R^3$, supported in $\partial R$, defined as discussed in the Appendix, see \eqref{qtop.def1}.
Then there exists a constant $C = C(dp, \Omega_\delta^P)$, independent of $\eta$ and $h$,
such that 
\beq\label{eq:h1/4}
||(q_h - dp)_\top||_{W^{-1,1}(\R^3)}\le  C (\eta + h^{1/2}) \le C\eta\, ,
\eeq
\end{lemma}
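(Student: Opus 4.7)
The plan is to reduce the claim to a face-by-face analysis of $\partial R$, exploiting two elementary facts: any finite signed measure $\mu$ on $\R^3$ satisfies $\|\mu\|_{W^{-1,1}(\R^3)}\le |\mu|(\R^3)$; and if $\mu$ has total mass zero and compact support, then $\|\mu\|_{W^{-1,1}(\R^3)}$ is bounded by the Wasserstein-$1$ cost of any admissible transport plan between $\mu^+$ and $\mu^-$, which in turn is no larger than (mass moved) $\times$ (distance moved). Both types of region ($R=\tilde S_i$ or $R=P_{ij}$) have finite, $p$-dependent combinatorial complexity, so I will bound each $2$-face of $\partial R$ separately and sum.

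The faces of $\partial R$ come in two flavors: \emph{discretized} faces (a $T_{ij}$ coming from $\partial P_{ij}$, or a $\tilde T_{ijk}$ coming from $\partial \tilde S_i$ or $\partial P_{ij}$), on which both measures are nonzero, and \emph{lateral} frustum faces of $P_{ij}$, on which $(q_h)_\top$ vanishes by construction of $\Gamma_{ij}$. On a discretized face $F$ with signed flux $\phi$, $(dp)_\top$ is a uniform surface density of total mass $\phi$ on $T_{ij}$, and of mass $(1-\eta)^2\phi$ on $\tilde T_{ijk}$ (reflecting the homothety in \eqref{eq:homo}), while $(q_h)_\top$ is a sum of $m=|\phi|/h$ signed Diracs of weight $h$ placed at the barycenters of $m$ out of $\ell^2$ congruent sub-triangles, $\ell=\lceil\sqrt m\,\rceil$. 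I will split $(q_h)_\top$ into a part that exactly balances $(dp)_\top$ in total mass, and a residual whose total mass is at most $(1-(1-\eta)^2)|\phi|=O(\eta|\phi|)$ in the $\tilde T_{ijk}$ case, and zero in the $T_{ij}$ case. For the balanced part, I construct an explicit transport plan that sends the uniform mass inside each chosen sub-triangle to its own barycenter, and redistributes the uniform mass from the unchosen sub-triangles (total mass $O(|\phi|/\ell)=O(\sqrt{h|\phi|})$) to the nearest chosen barycenter; each mass element travels at most the diameter of $F$, so the total Wasserstein cost is $O(|F|^{1/2}\sqrt{h|\phi|})=O(h^{1/2})$. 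The residual is controlled in $W^{-1,1}$ by its total variation, contributing $O(\eta)$.

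On a lateral face $F$ of $P_{ij}$, $(q_h)_\top=0$, while $(dp)_\top$ is a constant $2$-form whose total flux is $O(\eta|\phi_{ij}|)$: this follows from applying $d(dp)=0$ and Stokes on $P_{ij}$, since the fluxes through $T_{ij}$ and $\tilde T_{ij}$ differ by $(1-(1-\eta)^2)\phi_{ij}$. This contributes $O(\eta)$ in $W^{-1,1}$ by the total-variation bound. Summing the face-by-face estimates over the finitely many faces of $\partial R$, whose number and geometry depend only on the fixed triangulation underlying $p$, gives $\|(q_h-dp)_\top\|_{W^{-1,1}(\R^3)}\le C(\eta+h^{1/2})$, and the hypothesis $h\le\eta^2$ collapses this to $C\eta$. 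I expect the principal technical obstacle to be ensuring that the Wasserstein constant on a discretized face is independent of \emph{which} $m$ sub-triangles are chosen by the algorithm; this is precisely why I use the blunt nearest-chosen-barycenter reassignment of the unchosen mass, rather than a more refined pairing.
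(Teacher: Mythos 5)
Your proposal is correct and is essentially the paper's own argument recast in optimal-transport language: the paper's term-by-term expansion of $\int_{\partial R} f\,(dp-q_h)_\top$ in \eqref{eq:bdyflatest} is exactly the transport plan you describe, with the scaling-mismatch term and the lateral-face area term supplying the $O(\eta)$ contribution, and the within-sub-triangle displacement, the rounding term $\bigl|\frac{\phi_{ijk}}{\ell_{ijk}^2}-h\bigr|$, and the unchosen-sub-triangle mass supplying the $O(\sqrt h)$ contribution. One small imprecision: your Stokes argument on $P_{ij}$ only controls the \emph{net signed} flux of $(dp)_\top$ through the lateral faces, not its total variation (there could be cancellation between faces), so to get the TV bound you should argue as the paper does, namely that $dp$ is a fixed constant form on $S_i$ and the lateral faces have combined area $O(\eta)$ — this yields the same $O(\eta)$ but bounds the right quantity.
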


\begin{proof}
First consider the case of  a pyramidal frustrum $P_{ij}$.

Arguing as in the proof of (i), we find from \eqref{eq:minconbdy2}  that $(q_h)_\top =  h \sum_{a} \sign(\phi_{ij}) \delta_{s^a_{ij}} -h \sum_{k,a} \sign(\phi_{ijk}) \delta_{\tilde s^a_{ijk}}$.
Similarly, the definition of $\phi_{ij}$ and the fact that $T_{ij}$ and $\tilde T_{ij}$ are parallel
implies that 
\[
\int_{\partial P_{ij}} f (dp)_\top = \
 \frac{\phi_{ij}} {|T_{ij}|} \int_{T_{ij}} f \,  d\calH^2 
 -
  \frac{\phi_{ij}} {|T_{ij}|} \int_{\tilde T_{ij}} f \,  d\calH^2 
  + O(\|f\|_\infty \eta)
\]
where the error term comes from neglecting $\partial P_{ij}\setminus (T_{ij}\cup \tilde T_{ij})$,
which has area bounded by $C\eta$.

Thus for any continuous $f$,
\begin{align*}
\int_{\partial P_{ij}} f (dp-q_h)_\top
&= 
\left[ \frac{\phi_{ij}} {|T_{ij}|} \int_{T_{ij}} f \,  d\calH^2 
 - h \sum_{a} \sign(\phi_{ij}) f(s^a_{ij})\right]\\
&\quad-
\left[ \frac{\phi_{ij}} {|T_{ij}|} \int_{\tilde T_{ij}} f \,  d\calH^2 
 - h \sum_{a,k} \sign(\phi_{ij}) f(\tilde s^a_{ijk})\right] +O(\|f\|_\infty \eta) .
 \end{align*}
We will consider only the second term on the right-hand side (which is slightly harder). We assume for simplicity
that $\phi_{ij}>0$; the case $\phi_{ij}<0$ is essentially identical.
Noting that $\frac {\phi_{ij}} {|\tilde T_{ij}|} = \frac {\phi_{ijk}} {|\tilde T_{ijk}|}$ and that $|\tilde T^a_{ijk}| = \ell_{ijk}^{-2}|\tilde T_{ijk}|$,
and using notation from the first step above, we have
\begin{align}
\int_{\tilde T_{ij}} f(dp-q_h)_\top 
\ &= \  \frac{\phi_{ij}} {|T_{ij}|} \int_{\tilde T_{ij}} f \,  d\calH^2  
- h \sum_{a,k} f(\tilde s^a_{ijk}) \nonumber\\
&= \ 
(\frac {\phi_{ij}} {|T_{ij}|} - \frac {\phi_{ij}}  {|\tilde T_{ij}|}) \int_{\tilde T_{ij}} f \,  d\calH^2  
 \ + \  
\sum_{k,a}
 \frac{\phi_{ijk}} {|\tilde T_{ijk}|} \int_{\tilde T^a_{ijk}} f  - f(\tilde s^a_{ijk})
  d\calH^2
\label{eq:bdyflatest} \\
&\ + \ 
\sum_{a,k } \left[ \frac{|\phi_{ijk}|}{\ell_{ijk}^2}  -  h \right]  f(\tilde s^a_{ijk})
+ 
\sum_{k }
 \frac {\phi_{ij}} {|T_{ij} |} \sum_k  \int_{\tilde T_{ijk}\setminus \cup_a  \tilde T^a_{ijk}} f \, \calH^2.\nonumber
\end{align}
It is clear from the definition of $\phi_{ij}$ that that $|\phi_{ij}| \le    \|dp\|_\infty | T_{ij}| \le C$,
and since by definition $ (\ell_{ijk}- 1)^2 <m_{ijk} = h^{-1}\phi_{ijk} \le \ell_{ijk}^2$,
\[
 | \frac { \phi_{ijk}}{\ell_{ijk}^2}  -  h|  \le \frac 2 {m_{ijk}} \frac {\phi_{ijk}}{\ell_{ijk}} \le \frac C{m_{ijk}}( h\phi_{ijk})^{1/2} \le C \frac{\sqrt{h}}{m_{ijk}}.
 \]
Similarly one checks that $|T_{ijk}\setminus \cup_a T^a_{ijk}| = |T_{ijk}| |1 - \frac{m_{ijk}}{\ell_{ijk}^2}| \le  
 C |T_{ijk}| \sqrt{h}.
$
Note also that $|f(x) - f(\tilde s^a_{ijk})| \le \| df\|_\infty \mbox{diam} (\tilde T^a_{ijk}) \le C\| df\|_\infty  \sqrt h$ for $x\in  \tilde T^a_{ijk}$. Taking these into account, elementary calculations yield
\[
\left|\int_{ \tilde T_{ij}} f(dp-q_h)_\top 
 \right|
\le
C( \eta   +\sqrt{h})\|f \|_{W^{1.\infty}}.
\]
Since similar computations apply to $T_{ij}$, we deduce that
$|\int_{\partial P_{ij}}f (dp-q)_\top | \le C \eta\| f\|_{W^{1,\infty}}$ for every $P_{ij}$.
If the region $R$ is a simplex $\tilde S_i$,
then $\int_{\partial S_i} f(dp-_h)_\top$ is a sum of terms
of exactly the form $\int_{\tilde T_{ij}} f(dp-q_h)_\top$
already estimated (now with the opposite orientation) and so the conclusion follows
in this case as well.
\end{proof}

For future reference, we remark that the above proof shows that
that
\beq
\int_{T_{ij}} f(dp - q_h)_\top \le C \sqrt h\|f \|_{W^{1.\infty}},
\quad\quad\quad
\int_{\tilde T_{ij}} f(\frac{dp}{(1-\eta)^2} - q_h)_\top \le C \sqrt h\|f \|_{W^{1.\infty}}.
\label{eq:flatref1}\eeq
Indeed, every term on the right-hand side of 
\eqref{eq:bdyflatest} can be bounded by $Ch^{1/2}$ except for 
the term $(\frac {\phi_{ij}} {|T_{ij}|} - \frac {\phi_{ij}}  {|\tilde T_{ij}|}) \int_{\tilde T_{ij}} f \,  d\calH^2$.
This term is not present when one considers $T_{ij}$ rather than $\tilde T_{ij}$, 
and it is also not present if one considers $\tilde T_{ij}$, but
weighting the integrand as shown, since $(1-\eta)^2 = |\tilde T_{ij}|/|T_{ij}|$,
so that \eqref{eq:flatref1} follows from our earlier arguments.

We will need the following result about continuous dependence of the minimal connection upon its boundary datum.
\begin{lemma}\label{lem:mincon} Let $K$ be a compact convex  domain in $\R^3$, $\zeta$ a measure supported on $\partial K$ such that $\int_{\bd K}\zeta =0$. Then we have
$$
\min \{ ||\alpha||\equiv|\alpha |(K)\, , \ d\alpha=0\text{ in } K\, ,\ \alpha_\top=\zeta \ \text{on } \bd K\, \}\le C \, || \zeta ||_{W^{-1,1}(\R^3)}\, .
$$
\end{lemma}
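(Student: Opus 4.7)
The plan is to interpret the minimization as a Beckmann transport problem and apply Kantorovich--Rubinstein duality. Via the Hodge star in $\R^3$, a measure-valued $2$-form $\alpha$ in $K$ with $d\alpha=0$ and $\alpha_\top=\zeta$ on $\partial K$ corresponds to an $\R^3$-valued measure $A$ in $K$ satisfying $-\operatorname{div} A=\zeta$ as distributions on $\R^3$ (extending $A$ by zero outside $K$), and $|\alpha|(K)=\|A\|$ is preserved.

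The first step is to establish the Beckmann/Kantorovich--Rubinstein duality
\[
\min\bigl\{\|A\| : \operatorname{spt} A\subset K,\ -\operatorname{div} A=\zeta\bigr\}
\;=\;\sup\bigl\{\langle\zeta,\varphi\rangle : \varphi\in\operatorname{Lip}(K),\ \operatorname{Lip}(\varphi)\le 1\bigr\}.
\]
The ``$\ge$'' direction is an immediate integration by parts: from $d(\varphi\alpha)=d\varphi\wedge\alpha$ (using $d\alpha=0$),
\[
\langle\zeta,\varphi\rangle=\int_K d\varphi\wedge\alpha\le\operatorname{Lip}(\varphi)\cdot|\alpha|(K).
\]
For atomic $\zeta=h\sum_i(\delta_{a_i}-\delta_{b_i})$, the reverse inequality is the minimal connection theorem of Brezis--Coron--Lieb \cite{bcl}: the optimal $A$ is a weighted sum of oriented line segments $[b_{\sigma(i)},a_i]$ (contained in $K$ by convexity) along an optimal matching $\sigma$, realizing the $1$-Wasserstein cost. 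For general $\zeta$, one approximates weakly-$*$ by atomic measures $\zeta_n$, applies BCL to each, and passes to the limit using lower semicontinuity of $\|\cdot\|$ under weak-$*$ convergence together with equicontinuity and uniform boundedness of the $1$-Lipschitz test functions on the compact set $K$.

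The second step is an extension argument relating the right-hand side to $\|\zeta\|_{W^{-1,1}(\R^3)}$. Given $\varphi\in\operatorname{Lip}(K)$ with $\operatorname{Lip}(\varphi)\le 1$, subtract $\varphi(x_0)$ for any fixed $x_0\in K$ (permissible since $\int\zeta=0$) so that $\|\varphi\|_{L^\infty(K)}\le\operatorname{diam}(K)$. Extend by McShane to $\tilde\varphi$ on $\R^3$ with the same Lipschitz constant and truncate at $\pm\operatorname{diam}(K)$; the resulting $f\in W^{1,\infty}(\R^3)$ satisfies $f|_K=\varphi$ and $\|f\|_{W^{1,\infty}(\R^3)}\le\operatorname{diam}(K)+1=:C$. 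Hence
\[
\langle\zeta,\varphi\rangle=\langle\zeta,f\rangle\le C\,\|\zeta\|_{W^{-1,1}(\R^3)},
\]
and taking the supremum over $\varphi$ completes the proof.

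The main obstacle I anticipate is the reverse inequality in Step~1 for general (non-atomic) $\zeta$. The ``$\ge$'' direction is a one-line integration by parts; but the ``$\le$'' direction requires either a clean Hahn--Banach argument on the convex cone of admissible vector measures (which is non-empty by the compatibility condition $\int\zeta=0$), or the approximation scheme sketched above with careful control of both sides in the limit. In the intended application $\zeta$ is already a finite sum of atoms, so BCL applies directly and this subtlety can be bypassed; but stating the lemma for general $\zeta$ is natural and requires at least one of these two routes.
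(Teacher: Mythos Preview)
Your proposal is correct and follows exactly the same two-step strategy as the paper: first identify the minimum with the Lipschitz-dual quantity $\sup\{\langle\zeta,\varphi\rangle : \|d\varphi\|_{L^\infty(K)}\le 1\}$, then bound this by $C\|\zeta\|_{W^{-1,1}(\R^3)}$ via the subtract-a-constant and Lipschitz-extension argument, which is essentially identical to the paper's Step~2. The only difference is in how the duality of Step~1 is established: the paper uses the Hahn--Banach route you mention as an alternative (phrased as Federer's characterization of the homogeneous flat norm), which handles general $\zeta$ in one stroke and bypasses the BCL-plus-approximation scheme and the limiting issue you anticipated.
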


The proof of this lemma is postponed to Section \ref{sect:mincon} in the Appendix. Let us apply
 Lemma \ref{lem:mincon} first with $K=P_{ij}$, $\zeta=(q_h-dp)_\top$ and let $\alpha_h$ be the measure 2-form that realizes the minimum.  By \eqref{eq:h1/4} and Lemma \ref{lem:mincon} we deduce $|\alpha_h|(P_{ij})\le C\eta$.
 
As remarked above, the restriction of $\Gamma_h$ to any region $R$ is a minimal connection, and as a consequence,
it follows from results proved in Brezis, Coron and Lieb  \cite{bcl} that 
$q_h\rest R$ has minimal mass among all $2$-form-valued measures $q'$  in
$R$ such that $(q')_\top = (q_h)_\top$ on $\partial R$ (not merely those
corresponding to a union of oriented line segments). We thus have
\beq\label{eq:qhequibd}
|q_h|(P_{ij})\le ||\alpha_h+dp||\le |\alpha_h|(P_{ij})+\int_{P_{ij}}|dp|\le \int_{P_{ij}}|dp|+C\eta\, .
\eeq
Next, applying Lemma \ref{lem:mincon} with $K=\tilde S_{i}$, $\zeta=(q_h-dp)_\top$
and arguing exactly as above, we obtain
 \beq\label{eq:qhequibdbis}
 |q_h|(\tilde S_i)\le  \int_{\tilde S_i}|dp|+C\eta\, .
 \eeq
 Statement (iii) follows by summing over all regions.

\smallskip

\noindent{\bf Proof of (ii).}
It suffices to show that for every region $R$,
\beq
\langle \varphi,   (dp - q_h)\rest R  \rangle\ = \ 
\int_R (\varphi, dp)  - \langle \varphi, q_h \rest R\rangle  \  \le \  C \eta \| \varphi \|_{W^{1,\infty}}
\label{eq:flatS}\eeq
for every $\varphi\in C^\infty_c(\Lambda^2\R^3)$. This is clear if $R = P_{ij}$, since $|P_{ij}| \le C\eta$ for all $i,j$, so that $\| dp\|_{L^1(P_{ij})} \le C \eta$, and hence
$|q_h|(P_{ij}) \le C\eta$ by \eqref{eq:qhequibd}.

If $R= \tilde S_i$ then we assume, after changing coordinates, that $dp = \lambda dx^2\wedge dx^3$ on $\tilde S_i$ for some $\lambda\in \R$. 
Now fix $\varphi \in C^\infty_c (\Lambda^2\R^2)$ 
and let $\Phi\in C^\infty_c(\R^3)$ be a function such that $(\star d \Phi, dx^2\wedge dx^3) = (\varphi ,dx^2\wedge dx^3)$ in $S_i$, and such that $\| \Phi\|_{W^{1,\infty}} \le C \| \varphi\|_{W^{1,\infty}}$. 
Indeed,  $(\star d\Phi,  dx^2\wedge dx^3) = \Phi_{x_1}$,
so we can take
\[
\Phi(x) := \chi(x) \int_{-\infty}^{x_1} \chi(x) \left(\varphi(s, x_2, x_3), dx^2\wedge dx^3\right) \,ds
\]
where $\chi\in C^\infty_c(\R^3)$ satisfies $\chi\equiv 1$ on $S_i$. 
Then clearly  $\langle dp \rest \tilde S_i, \varphi \rangle = 
\langle dp \rest \tilde S_i, \star d\Phi \rangle$  and it follows from the form of $dp$ and the definition (ie statement (v)) of $q_h$ that 
$\langle q_h \rest \tilde S_i, \varphi \rangle = 
\langle q_h \rest \tilde S_i, \star d\Phi \rangle$. Thus Lemma \ref{lem:h1/4} implies that
\begin{align*}
\langle \varphi,   (dp - q_h)\rest \tilde S_i \rangle
 = 
\langle \star d\Phi,   (dp - q_h)\rest \tilde S_i \rangle
 = 
\int_{\partial \tilde S_i} \Phi ( dp - q_h)_\top
\le C \eta \| \phi \|_{W^{1,\infty}} .
\end{align*}
Thus $\| (dp - q_h)\rest S_i \|_{W^{-1,1}(\R^3)} \le C  \eta$.

\medskip
\noindent{\bf Proof of (iv).}
The estimate $\| d^*\beta_h\|_{L^p(\Omega_\delta)} \le C_p |q_h|(\Omega_\delta) \le C$, $1\le p < 3/2$,
follows immediately from Corollary \ref{cor:poincare} in the Appendix.
Thus $d^*\beta_h$ is weakly precompact in these $L^p$ spaces, and we only need to identify the limit,
prove that it is unique,  and estimate its $L^2$ distance from $d^*\beta$.

To do this we will show that  $q_h\to q^\eta$ in $W^{-1,1}(\Omega_\delta)$, 
where $q^\eta=(1-\eta)^{-2}dp$ on $\tilde S_i$, while on $P_{ij}$, $q^\eta$ is defined to be the unique minimizer 
of the problem 
\beq
\min \{ |\alpha |(P_{ij})\, , \ d\alpha=0\text{ in } P_{ij}\, ,\ \alpha_\top=\zeta\ \text{on } \bd P_{ij}\, \}\, ,
\label{eq:qhmin}\eeq
where $\zeta=(dp)_\top$ on $T_{ij}$, $\zeta=(1-\eta)^{-2}(dp)_\top$ on $\tilde T_{ij}$ and $\zeta=0$ on the remaining faces of $\bd P_{ij}$. Since then $\beta^\eta = -\Delta^{-1}q^\eta$, the uniqueness of $\beta^\eta$ will follow,
and we will deduce the estimates of $\beta^\eta$ from the explicit form of $q^\eta$, which we find below.

We consider first a truncated pyramidal region $P_{ij}$, which is the harder case.
The uniform mass bounds \eqref{eq:qhequibdbis} imply that
$q_h\rest P_{ij}$ is precompact in $W^{-1,1}(\R^3)$. 
Let $q$ denote a limit of a convergent subsequence.
It follows from \eqref{eq:flatref1} that $(q_h)_\top$ on $\partial P_{ij}$
converges to  $\zeta$ as defined above,
and hence that $q_\top = \zeta$ on $\partial P_{ij}$. Next, if $q$ did not solve the minimization
problem \eqref{eq:qhmin},  we could use the estimate
$\| (q_h)_\top - \zeta \|_{W^{-1,1}} \le C \sqrt h$
(which is \eqref{eq:flatref1}) together with Lemma \ref{lem:mincon} to create a sequence  $q_h'$ such that 
$(q_h')_\top = (q_h)_\top$, and with $|q_h'|( P_{ij}) < |q_h|( P_{ij})$ for all small enough $h$, contradicting the minimality 
of $q_h$. Thus $q = q^\eta$, a minimizer of \eqref{eq:qhmin}.

We now argue that the  unique minimizer \eqref{eq:qhmin}  is given by
\beq
q^*(x) = a \frac{(x - b_i)_\ell}{( (x - b_i) \cdot \nu_{ij} )^3} \star dx^\ell
\label{eq:qetaformula}\eeq
where $b_i$ denotes the barycenter of $S_i$, $\nu_{ij}$ is the unit normal to $T_{ij}$, and
$a\in \R$ is adjusted so that $q^*_\top = \zeta$.  (A  calculation
shows that such a number $a$ exists and also that $dq^* = 0$.) 
The (unique) minimality of  $q^*$
now follows from  a calibration argument. We briefly recall the idea:
Let $f(x) = |x-b_i|$, so that $df = \sum\frac  {(x- b_i)_\ell}{|x-b_i|}dx^\ell$, and
$(\star df, q^*) = |q^*|$ in $P_{ij}$. For any other $2$-form valued measure $q'$ 
supported in $P_{ij}$
such that $dq'=0$ in $P_{ij}$ and $q'_\top =  \zeta $ on
$\partial P_{ij}$, we have
\[
|q^*|(P_{ij})  \ = \  \langle q^* \rest P_{ij}, \star d f\rangle \ = \ 
\int_{\bd P_{ij}} f \zeta
 \ = \  \langle q' , \star d f\rangle \ \le  \ |q'|(P_{ij}),
\]
since $|\star df|\le 1$ everywhere. Hence $q^* $ is a minimizer. Furthermore,
if equality holds then, heuristically, $q'$ is parallel to $\star df$, or more precisely, $q'$ has the form
$\langle q', \psi\rangle = \int_{P_{ij}} (\frac{ (x-b_i)_\ell \star dx^\ell}{|x-b_i|}, \psi) d\mu'$ for some measure $\mu'$.
Then one can check that $q^*$ is the only measure-valued $2$-form  of this form such that $dq'=0$ in $P_{ij}$,
$q'_\top = \zeta$ on $\bd P_{ij}$. Hence $q^\eta = q^*$ as asserted. 

The proof that $q_h\rest \tilde S_i$ converges in $W^{-1,1}$ to $(1-\eta)^{-2}dp \rest  \tilde S_i$
can be carried out on exactly the same lines, except that the limit has a simpler form. It can also be proved
by arguing as in the proof of (ii), but using \eqref{eq:flatref1} instead of (iii).
Thus we have proved that $q_h \to q^\eta$ in $W^{-1,1}(\Omega_\delta^P)$.
 
From the explicit form of $q^\eta$, noting that $\sum_{i,j}|P_{ij}| \le C\eta$,
we see that
\beq\label{eq:qvsdp}
||q^\eta-dp||^2_{L^2(\Omega_\delta^P)}\le C\eta\, .
\eeq
Thus$\|d^*\beta^\eta-d^*\beta \|^2_2
= \|d^* \Delta_N^{-1} (q^\eta - dp) \|^2_2\le C\eta$, by \eqref{eq:qvsdp} and standard elliptic estimates. This concludes the proof of
statement (iv).

\medskip
\noindent{\bf Proof of (vi).} 
We prove now the separation properties of the polyhedral curves $\Gamma_h^\ell$. Let $L_1$ and $L_2$ be closed line segments of $\Gamma_h$,
with endpoints $s_1^\pm$ and $s_2^\pm$, and assume that $L_1$ and $L_2$ are disjoint, so that in particular $\{ s_1^\pm\} \cap \{s_2^\pm\} = \emptyset.$

If $L_1, L_2$ belongs to  non-adjacent regions of the family $\{ \tilde S_i, P_{ij} \}$ then
the conclusion is obvious, so we assume that this is not the case,
and we claim that 
\beq\label{eq:claim1}
\text{dist}\, (s_m^\pm, L_n)\ge c_2\eta h^{1/2}\, \ \quad\mbox{ for }m\ne n,  m,n\in \{1,2\}.
\eeq
To see this, let $F$ denote the face (some $T_{ij}$ or $\tilde T_{ij}$) containing $s_1^+$ say. If $F$ also contains an endpoint of $L_2$ (for example $s_2^+$) then by construction $L_2$ forms an angle of at least $c \eta$ with
$F$, and $|s_1^+ - s_2^+| \ge c h^{1/2}$, and so \eqref{eq:claim1} follows from elementary geometry.
The claim is still clearer if neither endpoint of $L_2$ is contained in
$F$.

It is evident that  \eqref{eq:claim1} implies (vi) if $L_1$ and $L_2$ belong to distinct but adjacent regions. If $L_1$ and $L_2$ belong to the same region, then in view of the minimality property of $q_h$, we obtain statement (vi) from \eqref{eq:claim1} and the following Lemma:

\begin{lemma}\label{lem:claim2} Let $\{ s_m^\pm\}_{m=1,2}$ 
satisfy  $|s^+_1 - s^-_1| +|s^+_2 - s^-_2| \le |s^+_1 - s^-_2|+|s^+_2 - s^-_1|$.
Also, let $L_m$ be the segment joining $s^+_m$ and $s^-_m$, for $m=1,2$. Then
\beq\label{eq:claim2}
\dist (L_1,L_2)\ge \frac 1 {\sqrt{2}}
 \min_{m\ne n}\ 
\dist (s_m^\pm, L_n).
\eeq
 \end{lemma}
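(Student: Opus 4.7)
The hypothesis $|s_1^+-s_1^-|+|s_2^+-s_2^-|\le |s_1^+-s_2^-|+|s_2^+-s_1^-|$ is the statement that the pairing realized by $L_1$ and $L_2$ is length-minimizing among the two possible pairings of the four endpoints, i.e., that $L_1$ and $L_2$ do not cross. The plan is to reduce, by two elementary simplifications, to a genuinely skew non-crossing configuration, and then to exploit the hypothesis quantitatively.

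First, if $d:=\dist(L_1,L_2)$ is attained at a pair $(q_1,q_2)\in L_1\times L_2$ with $q_1$ an endpoint of $L_1$ (or symmetrically $q_2$ an endpoint of $L_2$), then $d=\dist(q_1,L_2)\ge \min_{m\ne n}\dist(s_m^\pm,L_n)$ and the inequality holds with the sharper constant $1$; so I may assume $q_1,q_2$ lie in the relative interiors, which forces $q_2-q_1\perp L_1, L_2$. Choose coordinates so that $q_1=0$, $q_2=d\,e_3$, $L_1$ lies along $e_1$, and $L_2$ lies along $v_2:=\cos\theta\,e_1+\sin\theta\,e_2$ in the plane $\{z=d\}$, and write $s_1^+=b_1 e_1$, $s_1^-=-a_1 e_1$, $s_2^+=b_2 v_2+d e_3$, $s_2^-=-a_2 v_2+d e_3$ with $a_1,b_1,a_2,b_2>0$.

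Second, if $\min(a_1,b_1,a_2,b_2)<d$, say $a_1<d$, then since $q_2\in L_2$ one has
\[
\dist(s_1^-,L_2)\le |s_1^--q_2|=\sqrt{a_1^2+d^2}<\sqrt 2\,d=\sqrt 2\,\dist(L_1,L_2),
\]
which is exactly the lemma. Thus I may further assume $\min(a_1,b_1,a_2,b_2)\ge d$. Under these normalizations the hypothesis becomes
\[
(a_1+b_1)+(a_2+b_2)\le \sqrt{P^2+d^2}+\sqrt{Q^2+d^2},
\]
with $P^2:=b_1^2+a_2^2+2a_2 b_1\cos\theta$ and $Q^2:=a_1^2+b_2^2+2a_1 b_2\cos\theta$. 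Using the identities $a_2+b_1-P = 2a_2 b_1(1-\cos\theta)/(a_2+b_1+P)$ and $\sqrt{P^2+d^2}-P = d^2/(\sqrt{P^2+d^2}+P)$ (and their analogues with $1\leftrightarrow 2$), this becomes a quantitative bound relating $1-\cos\theta$, $d^2$, and the lengths $a_i,b_j$; together with $\min(a_i,b_j)\ge d$ it should force $\ell\sin\theta\le d$ for some $\ell\in\{a_1,b_1,a_2,b_2\}$. (As a sanity check: in the symmetric case $a_i=b_j=L$ the hypothesis reduces to $d\ge 2L\sin(\theta/2)$, which gives $L\sin\theta=2L\sin(\theta/2)\cos(\theta/2)\le d$, as needed.)

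Finally I would select an endpoint $s_m^\pm$ whose perpendicular foot onto the line carrying $L_n$ lies inside $L_n$, so that
\[
\dist(s_m^\pm,L_n)=\sqrt{\ell^2\sin^2\theta+d^2}\le\sqrt{d^2+d^2}=\sqrt 2\,d,
\]
yielding the conclusion. I expect the main obstacle to be the case analysis in this final step: one must distinguish the sign of $\cos\theta$ and the inside/outside status of each perpendicular foot, and then choose the right endpoint for each configuration. The constant $1/\sqrt 2$ in the statement is transparent once one reaches this point: it is the leg-over-hypotenuse ratio in the isosceles right triangle with legs of length $d$.
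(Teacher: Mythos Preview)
Your proposal shares the opening reduction with the paper (assume the nearest pair $(q_1,q_2)$ lies in the relative interiors and set up coordinates with $L_1,L_2$ in parallel planes at distance $d$), but then diverges and, as you yourself flag, is left incomplete. The gap is real: you have not established the general inequality $\ell\sin\theta\le d$ for some $\ell\in\{a_1,b_1,a_2,b_2\}$, nor the existence of an endpoint whose perpendicular foot lands inside the other segment. Your symmetric sanity check is correct, but the general case analysis you anticipate would be genuinely messy.

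The paper bypasses that case analysis with a symmetrization trick that reduces directly to your symmetric sanity check. Let $\lambda=\min(a_1,b_1,a_2,b_2)$ and define $\tilde s_m^\pm$ to be the points on the line through $L_m$ at signed distance $\pm\lambda$ from $q_m$; one of these coincides with an actual endpoint, say $\tilde s_1^+=s_1^+$. The hypothesis is preserved under this shrinking: moving any one endpoint inward by $\delta$ along its line decreases $|s_1^+-s_1^-|+|s_2^+-s_2^-|$ by exactly $\delta$ (collinearity) and changes $|s_1^+-s_2^-|+|s_2^+-s_1^-|$ by at most $\delta$ (triangle inequality), so the inequality RHS $\ge$ LHS persists. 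Your own symmetric computation then gives $4\lambda^2\sin^2\theta\le d^2$. Finally, and this also sidesteps your foot-inside worry, since $\lambda\le b_2$ the point $\tilde s_2^+$ lies in $L_2$, so
\[
\dist(s_1^+,L_2)\le|\tilde s_1^+-\tilde s_2^+|=\sqrt{4\lambda^2\sin^2\theta+d^2}\le\sqrt 2\,d.
\]
In short, the missing idea is to symmetrize rather than to case-split; once you do, the computation you already carried out finishes the proof.
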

\begin{proof}
Let $Q_m\in L_m, m=1,2$ be such that  dist$\,(L_1,L_2)=|Q_1-Q_2|=d$. If either $Q_m$ is an endpoint then the conclusion is clear, so we assume that both are interior points, in which case the segment from $Q_1$ to $Q_2$ is orthogonal to both $L_1,L_2$. We may then assume without loss of generality that
the midpoint $\frac{Q_1+Q_2}{2}$ is the origin,  and that $Q_1=(0,0,\frac{d}{2})$, $Q_2=(0,0,-\frac{d}{2})$, and  moreover that $L_1$ and $L_2$ are parallel to the directions $(\cos\theta,\sin\theta, 0)$, $(\cos\theta,-\sin\theta,0)$ respectively, for some $\theta$
. 
Define $\tilde s_1^\pm = (\pm \lambda\cos\theta, \pm\lambda\sin\theta,\frac{d}{2}), 
\tilde s_2^\pm=(\pm\lambda\cos\theta, \mp\lambda\sin\theta,-\frac{d}{2})$,
for $\lambda>0$ chosen so that one of the $\tilde s_m^\pm$ coincides with the closest point
to $0$ among the original endpoints. 

Our hypothesis and the triangle inequality imply that
$ |\tilde s^+_1 - \tilde s^-_2|+|\tilde s^+_2 - \tilde s^-_1|
\ge |\tilde s^+_1 - \tilde s^-_1| +|\tilde s^+_2 - \tilde s^-_2|$, which reduces to
$$
2\sqrt{ 4\lambda^2 \cos^2\theta + d^2} \ge 4 \lambda =  
2\sqrt{ 4\lambda^2( \cos^2\theta + \sin^2\theta)},\quad\quad\mbox{ so that }
d^2\ge  4 \lambda \sin^2\theta.$$
On the other hand, assuming for concreteness that $\tilde s^+_1$ agrees with the original endpoint
$s_1^+$, then since $\tilde s_2^+ \in L_2$, we use the above inequality to find that
we 
\[
\dist( s_1^+, L_2)\le | \tilde s_1^+ -  \tilde s_2^+|   \ = \ \sqrt {4 \lambda^2 \sin^2\theta + d^2}
\le   \sqrt{2} d.
\]
 \end{proof}
 
\noindent{\bf Proof of (vii).} Finally, suppose that $L_1$ and $L_2$ are adjacent, and that $L_1$ precedes $L_2$ in the ordering induced by their respective orienting unit tangents $\tau_1$, $\tau_2$. 
Decompose $\tau_i$ as $\tau_i^\perp + \tau_i^\parallel$, where for $i=1,2$, $\tau_i^\perp$
is orthogonal to the face $T_{ij}$ that contains the common endpoint of $L_1$ and $L_2$.
The orientation conventions imply that $\tau_1^\perp \cdot \tau_2^\perp >0$,
and, as noted above, each segment forms an angle of at least $c\eta$ with $T_{ij}$, which
implies that $|\tau_i^\perp|\ge c\eta$ for $i=1,2$. Statement (vii) follows directly.
 
\medskip
The proof of Proposition \ref{prop:1} is now complete.
\end{proof}


\subsection{Pointwise estimates for $d^*\beta_h$}\label{S3.5}


Let $G(x)=(4\pi)^{-1}|x|^{-1}$ be the Poisson kernel in $\R^3$. We may write 
\beq\label{eq:decompdbeta}
d^*\beta_h=d^*(G*q_h)+\Psi_h\, \qquad \Psi_h=d^*(-\Delta^{-1}_Nq_h-G*q_h)\, .
\eeq
In view of statement (i), we deduce that $d\Psi_h=d^*\Psi_h=0$ in $\Omega_\delta$, i.e. $-\Delta\Psi=0$ in $\Omega_\delta$ and $\Psi_N=-d^*(G*q_h)_N$ on $\bd\Omega_\delta$.  From the decomposition \eqref{eq:decompdbeta} we will deduce pointwise and integral estimates for $d^*\beta_h$. 

We begin with the term $d^*(G*q_h)=G*d^*q_h$.
The integral representation of $d^*(G*q_h)$ through the Biot-Savart law takes the form
\beq\label{eq:biotsavart}
d^*(G*q_h)(x) =
h\sum_{\ell=1}^{m(h)} \sum_{i,j,k =1}^3 \frac1 {4\pi} dx^i 
\ep_{ijk} \int_{\Gamma_h^\ell}\frac{(x_j-y_j)dy^k}{|x-y|^3}\, ,
\eeq
where $\ep_{ijk}$ is the usual totally antisymmetric tensor. This can be justified for example
by noting that
 $\langle d^*(G*q_h), \varphi\rangle = \langle q_h, G * d\varphi\rangle$, since $G$ is even,
and then using  statement (v) of Proposition \ref{prop:1} to explicitly write out the right-hand side. 
From \eqref{eq:biotsavart} we readily deduce 
\begin{lemma}\label{lem:pointwise1} Let $l_1,l_2>0$, $L=\{(0,0,z)\, ,\ -l_1\le z\le l_2\}\subset\R^3$, $q$ the associated measure 2-form, i.e. $\langle q\, ,\varphi\rangle=\int_{L}\star\varphi$ for $\varphi\in C^0(\Lambda^2\R^3)$. Then
\beq\label{eq:pointwise1}
d^*(G*q)=\frac{xdy-ydx}{4\pi (x^2+y^2)}\left(\frac{l_2-z}{\sqrt{x^2+y^2+(l_2-z)^2}} +\frac{l_1+z}{\sqrt{x^2+y^2+(l_1+z)^2}}\right)\, .
\eeq
As a result, 
\beq\label{eq:pointwise1bis}
|d^*(G*q)(p_0)|\le  
\frac{1}{2\pi\cdot\text{\rm dist}\,(p_0,L)}\,  \quad\quad\quad \mbox{ for every $p_0\in \R^3$.}
\eeq
\end{lemma}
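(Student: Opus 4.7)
The proof of Lemma \ref{lem:pointwise1} is a direct calculation followed by an elementary case analysis. My plan is to first apply the Biot--Savart formula \eqref{eq:biotsavart} with $h=1$, $m(h)=1$ and the single curve $\Gamma_h^1 = L$, parametrized by $y(s) = (0,0,s)$ for $s \in [-l_1, l_2]$. Since $dy^1 = dy^2 = 0$ along $L$, only the $k=3$ terms survive, and the totally antisymmetric $\epsilon_{ij3}$ is nonzero only for $(i,j)=(1,2)$ and $(2,1)$. This reduces \eqref{eq:biotsavart} to
\[
d^*(G*q)(p_0) \ = \ \frac{1}{4\pi}\,(y\,dx^1 - x\,dx^2)\int_{-l_1}^{l_2}\frac{ds}{(\rho^2 + (z-s)^2)^{3/2}},
\]
where $\rho := \sqrt{x^2+y^2}$.

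Next I would compute the one-dimensional integral by the substitution $u = s-z$ and the elementary antiderivative
\[
\int\frac{du}{(\rho^2+u^2)^{3/2}} \ = \ \frac{u}{\rho^2\sqrt{\rho^2+u^2}}\, ,
\]
yielding
\[
\int_{-l_1}^{l_2}\frac{ds}{(\rho^2+(z-s)^2)^{3/2}} \ = \ \frac{1}{\rho^2}\left[\frac{l_2-z}{\sqrt{\rho^2+(l_2-z)^2}} + \frac{l_1+z}{\sqrt{\rho^2+(l_1+z)^2}}\right].
\]
Combining with the previous display (and using the identification $x\,dy - y\,dx = -(y\,dx^1 - x\,dx^2)$ to match signs) gives \eqref{eq:pointwise1}.

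For the pointwise bound \eqref{eq:pointwise1bis}, I would start from $|y\,dx^1 - x\,dx^2| = \rho$, so
\[
|d^*(G*q)(p_0)| \ = \ \frac{1}{4\pi\rho}\left|\frac{l_2-z}{\sqrt{\rho^2+(l_2-z)^2}} + \frac{l_1+z}{\sqrt{\rho^2+(l_1+z)^2}}\right|\, ,
\]
and then split cases by the location of the orthogonal projection of $p_0$ onto the line through $L$. In the easy case $-l_1 \le z \le l_2$, one has $\dist(p_0,L) = \rho$, both fractions lie in $[0,1]$, their sum is at most $2$, and \eqref{eq:pointwise1bis} follows immediately.

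The case $z>l_2$ (and symmetrically $z<-l_1$) is where the argument requires care: setting $a=z-l_2>0$ and $b=z+l_1>a$, we have $d:=\dist(p_0,L) = \sqrt{\rho^2+a^2} > \rho$, so the naive bound by $2$ on the bracket gives only $1/(2\pi\rho)$, which is too weak. The bracket in this case equals $f(b)-f(a)$ with $f(t)=t/\sqrt{\rho^2+t^2}$ increasing on $[0,\infty)$, hence nonnegative and bounded above by $1-f(a)$. The elementary inequality $1-s \le 2\sqrt{1-s^2}$ for $s\in[0,1]$ (immediate upon squaring and dividing by $1-s$), applied to $s=f(a)=a/d$, gives $1-f(a) \le 2\rho/d$, and therefore $|d^*(G*q)(p_0)|\le \frac{1}{4\pi\rho}\cdot\frac{2\rho}{d}=\frac{1}{2\pi d}$. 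This inequality, which quantifies the fact that the bracket must vanish linearly in $\rho/d$ as $p_0$ moves along the $z$-axis past an endpoint of $L$, is the only nontrivial point of the proof.
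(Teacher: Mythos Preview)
Your proof is correct and follows essentially the same approach as the paper: specialize the Biot--Savart formula \eqref{eq:biotsavart} to the segment $L$, evaluate the resulting one-dimensional integral explicitly, and then split into the cases $-l_1\le z\le l_2$ and $z>l_2$ (with $z<-l_1$ symmetric), bounding the bracket in the latter case by $1-f(a)$ with $f(t)=t/\sqrt{\rho^2+t^2}$.

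The only minor difference is in the final elementary inequality for the endpoint case. You use $1-s\le 2\sqrt{1-s^2}$ to obtain $1-f(a)\le 2\rho/d$ and hence the stated bound $\tfrac{1}{2\pi d}$. The paper instead rewrites $\tfrac{1}{4\pi\rho}(1-f(a))=\tfrac{d-a}{\rho}\cdot\tfrac{1}{4\pi d}$ and uses $\sqrt{\rho^2+a^2}\le \rho+a$ to get $d-a\le \rho$, which in fact gives the sharper bound $\tfrac{1}{4\pi d}$ in this regime. Either inequality suffices for \eqref{eq:pointwise1bis}.
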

\begin{proof}
We obtain \eqref{eq:pointwise1} by particularizing \eqref{eq:biotsavart} to the case $\Gamma_h=L$. 
We easily deduce \eqref{eq:pointwise1bis} from \eqref{eq:pointwise1} if $p_0 = (x_0, y_0, z_0)$ with
$-l_1 \le z_0 \le l_2$, in which case $\dist(p_0, L ) = \sqrt{x_0^2 + y_0^2}$.
If $z_0>l_2$ then, writing $r_0 = (x_0^2+y_0^2)^{1/2}$,
since $\lambda \mapsto \frac \lambda{\sqrt{r_0^2 +\lambda^2}} < 1$ is an increasing function
and $0 < z_0-l_2 < z_0+l_1$,
we find from \eqref{eq:pointwise1} that
\begin{align*}
|d^*(G*q)(p_0)|
&\le
\frac 1 {4\pi r_0} \left( 1 - \frac{z_0- l_2}{\sqrt{r_0^2+(l_2-z_0)^2}}\right)\\
&=
\left(  \frac {\sqrt{r_0^2+(l_2-z_0)^2} - (z_0- l_2)}{r_0}\right)\left( \frac 1 {4 \pi \dist(p_0, L)}\right),
%
\end{align*}
and \eqref{eq:pointwise1bis} follows, since $\sqrt{a^2+b^2}  \le a+b$ for $a,b\ge 0$.
The same reasoning of course holds if $z_0<-l_1$.
\end{proof}
\begin{lemma}\label{lemma:pointwise2} Let $x\in\Omega_\delta$ be such that $\text{\rm dist}\,(x,\Gamma_h)\le \frac{c_0}{2}\eta h^{1/2}$, where $c_0>0$ is defined in statement (vi) of Proposition \ref{prop:1}. Then there exists a constant $K>0$ independent of $\eta,h$ such that  if $\eta\le 1$, then
\beq\label{eq:pointwise3}
|d^*\beta_h(x)|\le \frac{h}{2\pi\cdot\text{\rm dist}\,(x,\Gamma_h)}+\frac{K}{\eta^2}\quad\text{\rm if  }\ \dist(x,\cup_{i,j}\bd \tilde S_i\cup \bd P_{ij})\ge \frac{c_0}{2}\eta h^{1/2}\, ,
\eeq
\beq\label{eq:pointwise4}
|d^*\beta_h(x)|\le \frac{h}{\pi\cdot\text{\rm dist}\,(x,\Gamma_h)}+\frac{K}{\eta^2}
\quad\text{\rm if  }\ \dist(x,\cup_{i,j}\bd \tilde S_i\cup \bd P_{ij})< \frac{c_0}{2}\eta h^{1/2}\, .
\eeq
\end{lemma}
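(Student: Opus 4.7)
The plan is to use the decomposition \eqref{eq:decompdbeta}, $d^*\beta_h = d^*(G*q_h) + \Psi_h$, and bound the two terms separately. For $d^*(G*q_h)$, the Biot--Savart formula \eqref{eq:biotsavart} writes it as a sum over the line segments composing $\Gamma_h$ of explicit integrals, each of which is controlled by Lemma \ref{lem:pointwise1}; I would split that sum into the segment(s) closest to $x$ and the rest. For $\Psi_h$, I will use that it is harmonic on $\Omega_\delta$ together with uniform $L^p$ bounds ($p<3/2$) inherited from statement (iv) of Proposition \ref{prop:1} and standard potential-theoretic estimates.

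Let $\mathcal L_x$ denote the family of line segments $L_h^s\subset\Gamma_h$ at distance $\le c_0\eta h^{1/2}/2$ from $x$. By the separation property (vi), any two elements of $\mathcal L_x$ cannot be disjoint, so they share a common endpoint; since at most two segments of $\Gamma_h$ meet at any vertex (by the construction at the $\tilde s^a_{ijk}$ and $s^a_{ij}$), this forces $|\mathcal L_x|\le 2$. Every such shared vertex lies in $\bigcup_{i,j}(\partial \tilde S_i \cup \partial P_{ij})$, and a short argument combining the triangle inequality with the angle bound (vii) (which rules out two adjacent segments staying simultaneously close to a point that is not close to their common vertex) shows that in case \eqref{eq:pointwise3} one has $|\mathcal L_x|=1$, while in case \eqref{eq:pointwise4} we just retain $|\mathcal L_x|\le 2$. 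Applying Lemma \ref{lem:pointwise1} to each $L\in\mathcal L_x$ (with density $h$) and using $\dist(x,L)\ge \dist(x,\Gamma_h)$ delivers the main term: at most $h/(2\pi\dist(x,\Gamma_h))$ in \eqref{eq:pointwise3}, at most $h/(\pi\dist(x,\Gamma_h))$ in \eqref{eq:pointwise4}.

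For the contribution of $\Gamma_h\setminus \mathcal L_x$, property (vi) gives $\dist(x,\Gamma_h\setminus \mathcal L_x)\ge r_0:=c_0\eta h^{1/2}/2$, and I would bound the corresponding piece of Biot--Savart by the absolute-value estimate $(4\pi)^{-1}\int \frac{d\mu(y)}{|x-y|^2}$ with $\mu := h\mathcal H^1\rest(\Gamma_h\setminus \mathcal L_x)$. A tube-packing argument based on (vi), together with the fact that the segments inside each region are straight and of bounded length, yields the density estimate $\mu(B_r(x))\le C\min(1,r^3/\eta^2)$ for $r\ge r_0$. A layer-cake integration then gives
\begin{equation*}
\int_{|x-y|\ge r_0}\frac{d\mu(y)}{|x-y|^2}\ \le\ C\int_0^{r_0^{-2}}\min\Bigl(1,\tfrac{t^{-3/2}}{\eta^2}\Bigr)\,dt\ \le\ C\eta^{-4/3}\ \le\ K/\eta^2,
\end{equation*}
the last inequality using $\eta\le 1$.

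To control $\Psi_h$ I would use that it is harmonic (as a $1$-form) on $\Omega_\delta$. Statement (iv) of Proposition \ref{prop:1} gives $\|d^*\beta_h\|_{L^p(\Omega_\delta)}\le C$ for $p<3/2$, while the Hardy--Littlewood--Sobolev estimate $\|d^*(G*q_h)\|_{L^p(\R^3)}\le C_p |q_h|(\Omega_\delta^P)$ ($p<3/2$) combined with (iii) provides the same bound for the second term of \eqref{eq:decompdbeta}; hence $\|\Psi_h\|_{L^p(\Omega_\delta)}\le C$ uniformly. Interior elliptic regularity for harmonic forms then yields $\|\Psi_h\|_{L^\infty(\Omega_\delta')}\le C$ on any $\Omega_\delta'\Subset\Omega_\delta$, and since the lemma is applied for $x\in\Omega\Subset\Omega_\delta$ this is comfortably $\le K/\eta^2$. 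The main obstacle is the geometric dichotomy \eqref{eq:pointwise3}/\eqref{eq:pointwise4}: one must exploit (vii) carefully to ensure that in case \eqref{eq:pointwise3} only one segment is close to $x$, since a priori two adjacent segments meeting at a nearby vertex with a small angle could both lie close to $x$ even while $x$ stays away from region boundaries. Once (vii) settles this, the rest is routine bookkeeping.
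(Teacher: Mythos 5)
Your overall strategy matches the paper's: split $d^*\beta_h = d^*(G*q_h) + \Psi_h$, isolate the nearby segment(s) to extract the leading $h/(2\pi\dist)$ or $h/(\pi\dist)$ term via Lemma~\ref{lem:pointwise1}, bound the far part of $\Gamma_h$ by a density/packing estimate at scale $r_0=c_0\eta h^{1/2}/2$, and control $\Psi_h$ by harmonicity, $L^q$ bounds from Proposition~\ref{prop:1}(iv), and interior elliptic regularity on $\Omega\Subset\Omega_\delta$. Your layer-cake argument for the far part, using the tube-packing density bound $\mu(B_\rho)\le C\min(1,\rho^3/\eta^2)$ coming from Proposition~\ref{prop:1}(vi), is a legitimate variant of the paper's slice-by-slice estimate on $\{y_k=t\}$ (it even gives the slightly sharper $C\eta^{-4/3}$ before being absorbed into $K/\eta^2$), and the $\Psi_h$ part is essentially identical.

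The gap is exactly where you flagged the ``main obstacle,'' and (vii) does not close it. Statement~(vii) says $\tau_1\cdot\tau_2 > -1+C\eta^2$ for adjacent segments $L_1,L_2$ meeting at $v$; equivalently, the angle $\psi$ between the two rays from $v$ satisfies $1-\cos\psi\gtrsim\eta^2$, so $\psi\gtrsim\eta$. If both segments come within $r=c_0\eta h^{1/2}/2$ of $x$, picking $p_i\in L_i$ with $|x-p_i|\le r$ gives $|p_1-p_2|^2\ge 2d_1d_2(1-\cos\psi)\gtrsim d_1d_2\eta^2$ (with $d_i=|p_i-v|$), so $\min(d_1,d_2)\lesssim r/\eta$ and $\dist(x,v)\lesssim r/\eta$. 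For $\eta\le1$ this is a \emph{weaker} constraint than $\dist(x,v)<r$, so your argument does not place $x$ within $r$ of $\bigcup_{i,j}(\bd\tilde S_i\cup\bd P_{ij})$ and cannot establish the dichotomy at the threshold $r$ appearing in \eqref{eq:pointwise3}--\eqref{eq:pointwise4}. The paper's argument is simpler and does not use (vii) at all: if $\dist\bigl(x,\bigcup_{i,j}(\bd\tilde S_i\cup\bd P_{ij})\bigr)\ge r$, then $B_r(x)$ lies in the interior of a \emph{single} region $R\in\{\tilde S_i,P_{ij}\}$; any segment meeting $B_r(x)$ then lies entirely in $R$ (segments never cross region boundaries), and within one region the segments of $\Gamma_h$ are pairwise disjoint (parallel in each $\tilde S_i$, non-crossing minimal-connection segments in each $P_{ij}$), hence by (vi) at mutual distance $>2r$, so at most one can meet $B_r(x)$. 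You should replace the (vii)-based step by this region argument; the rest of your proof then goes through.
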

\begin{proof}The definition \eqref{eq:decompdbeta} of $\Psi_h$ implies that for any measure-valued $2$-form $q$, 
\beq
|d^* \beta_h| \le  |d^*(G*q)|+|d^*(G*q_h- G*q)| + |\Psi_h|\, .
\label{eq:pointwise5}\eeq
Fix $x\in \Omega_\delta \setminus \Gamma_h$ and let $r=\frac{c_0}{2}\eta h^{1/2}$. Define
a measure-valued $2$-form by
$\langle q, \varphi \rangle = h \sum_{ \{s : B_r(x)\cap L^s_h \ne \emptyset \}} \int_{L^s_h} \star\varphi $,
where $\{L^s_h\}$ is the collection of line segments whose union gives $\Gamma_h$,  see
Proposition \ref{prop:1} (v). 
By  Proposition \ref{prop:1} (vi), there is at most $1$ term in the sum that defines $q$ if $\dist(x, \cup_{i,j}\bd P_{ji}\cup \bd \tilde S_i) \ge r$, and otherwise at most $2$ terms. Then
$|d^*(G*q)|$ is estimated via Lemma \ref{lem:pointwise1} to give the first term on the right-hand sides of \eqref{eq:pointwise3} and \eqref{eq:pointwise4} respectively, and we must show that the other two terms in \eqref{eq:pointwise5} can be bounded by $K/\eta^2$.

Interior regularity for harmonic functions, 
together with Proposition \ref{prop:1}, statements (iii), (iv) 
allow us to fix some $q\in (1, 3/2)$ and argue as follows:
\beq\label{eq:intregpsi}
\begin{aligned}
||\Psi_h||_{L^\infty(\Omega)} &\le C|| \Psi_h ||_{W^{2,2}(\Omega)}
\\
&\le C||\Psi_h||_{L^q(\Omega_{\delta})}\\
&=C ||d^*\beta_h-d^*(G*q_h)||_{L^q(\Omega_{\delta})}\\
&\le C(1+C\eta)||dp||_{L^1(\Omega_{\delta})}\le C\, .
\end{aligned}
\eeq
To estimate the remaining term in \eqref{eq:pointwise5}, observe that
\beq\label{eq:linfty}
\begin{aligned}
|d^*(G*q_h-  G*q)(x)| &\le \frac{6}{4\pi}h\sum_{k=1}^3\sum_{\ell=1}^{m(h)}\int_{\Gamma_h^\ell\cap B_r(x)^c}\frac{dy^{k}}{|x-y|^2}\, \\
&\le C\sum_{k=1}^3\int^{M}_{-M}\left(\sum_{\ell'=1}^{m'(h)}\frac{h}{|x-y^t_{\ell'}|^2}\right) dt\, \\
\end{aligned}
\eeq
where $M>0$ is such that $\Omega_\delta\subset B_M(0)$ and $\{y_{\ell'}^t\}_{\ell'}=\cup_\ell\Gamma_h^\ell\cap \{y_k=t, |y-x|>r\}$, for $|t|\le M$.
For every $k$ and $t$ , 
\begin{align*}
\sum_{\ell'=1}^{m'(h)}\frac{h}{|x-y^t_{\ell'}|^2}
\ &\le \ 
\sum_{j=1}^{M/r}  \frac h { r^2j^2} \ \# \{ \ell' : jr \le  |x-y^t_{\ell'}| < (j+1)r \}.
\end{align*}
Consider the collection of (2 dimensional) balls
\[
\{ z : z^k = t, |z - y^t_{\ell'}| < r \},\quad\mbox{ for }y^t_{\ell'}\mbox{ such that }jr \le |x-y^t_{\ell'}| < (j+1)r.
\]
These balls are pairwise disjoint by Proposition \ref{prop:1} (vi), and are contained in the annulus
$\{ z : z^k = t, (j-1)r \le |x-z| < (j+2)r\}$, which has area $(6j+3) \pi r^2$. Thus
$\# \{ \ell' : jr \le  |x-y^t_{\ell'}| < (j+1)r \}\le 6j+3$ for all $j$. In addition, if we write 
$x^t$ for the projection of $x$ onto the plane $\{ z^k=t\}$, then
$\# \{ \ell' : jr \le  |x-y^t_{\ell'}| < (j+1)r \}=0$ if $(j+1)r < |x - x^t|$.
Then elementary estimates lead to the conclusion
\[
\sum_{\ell'=1}^{m'(h)}\frac{h}{|x-y^t_{\ell'}|^2}
\le C \frac h{r^2} \log(\frac M{|x-x^t|}).
\]
Substituting this into \eqref{eq:linfty},
we see that $|d^*(G*q_h-  G*q)(x)|  \le C\frac h {r^2} = C (c_0\eta)^{-2}$, completing the proof of the lemma
\end{proof}

The next lemma shows that we get uniform estimates of certain quantities  if we mollify on a scale comparable to the minimum distance between the discretized vortex lines.

\begin{lemma}\label{lem:convolbeta} 
Let $0<\mu<1$ and $r = \mu c_0\eta h^{1/2}$, for $c_0$ as in statement (vi) of Proposition \ref{prop:1}.
Then there exists a nonnegative radial function  $\phi$ supported in the unit ball, with 
$\int \phi = 1$, 
and such that in addition $\phi_r(x) := r^{-3}\phi(x/r)$ satisfies
\beq\label{eq:convolbeta}
||\phi_r*d^*\beta_h||_{W^{1,p}(\Lambda^1\Omega)}\le K
\eeq
for any $p<\infty$, where  $K=K(\mu,\eta, \|\phi \|_\infty,p )$ is
independent of $h$.
\end{lemma}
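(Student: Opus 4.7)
\medskip

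The strategy is to reduce the $W^{1,p}$ bound on the $1$-form $\phi_r * d^*\beta_h$ to a $W^{2,p}$ bound on the $2$-form $\phi_r * \beta_h$, and then to use interior elliptic regularity for the Poisson equation satisfied by $\phi_r * \beta_h$, with right-hand side controlled uniformly in $L^\infty$ by the separation property (vi) of Proposition \ref{prop:1}.

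First, since $d^*$ is a constant-coefficient linear differential operator, it commutes with convolution: $\phi_r * d^*\beta_h = d^*(\phi_r * \beta_h)$, and therefore
\[
\|\phi_r * d^*\beta_h\|_{W^{1,p}(\Lambda^1\Omega)} \le \|\phi_r * \beta_h\|_{W^{2,p}(\Lambda^2\Omega)},
\]
so it is enough to bound the right-hand side uniformly in $h$. From Proposition \ref{prop:1}(i), $\beta_h = -\Delta_N^{-1}q_h$ in $\Omega_\delta$, hence $-\Delta\beta_h = q_h$ there. For $h$ small enough that $r = \mu c_0\eta h^{1/2} < \tfrac12\dist(\Omega,\partial\Omega_\delta)$, convolving yields $-\Delta(\phi_r * \beta_h) = \tilde q_h$ on an open set $\Omega'$ with $\Omega\Subset\Omega'\Subset\Omega_\delta$, where $\tilde q_h := \phi_r * q_h$.

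The heart of the proof is a uniform $L^\infty$ bound on $\tilde q_h$. Choose any smooth radial $\phi\in C_c^\infty(B_1)$. Pointwise,
\[
|\tilde q_h(x)| \le \|\phi_r\|_\infty \cdot |q_h|(B_r(x)) = \|\phi\|_\infty\, r^{-3}\cdot h\,\mathcal H^1(\Gamma_h \cap B_r(x)).
\]
By property (vi) of Proposition \ref{prop:1}, disjoint closed line segments of $\Gamma_h$ are separated by at least $c_0\eta h^{1/2} = r/\mu$, so a tube-packing argument bounds the number of disjoint segments meeting $B_r(x)$ by $C(\mu)$, each contributing arclength at most $2r$. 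Junctions between adjacent segments of a single polyhedral curve $\Gamma_h^\ell$ sit at vertices of the discretization grid (the sampling points $s^a_{ij}$, $\tilde s^a_{ijk}$), whose pairwise spacing is comparable to $\ell h^{1/2}\gg r$; thus at most $O(1)$ such junctions can lie in $B_r(x)$. Combining,
\[
\mathcal H^1(\Gamma_h \cap B_r(x)) \le C(\mu)\,r,
\qquad\text{so}\qquad
\|\tilde q_h\|_{L^\infty(\Omega')} \le C(\mu)\|\phi\|_\infty r^{-3}\cdot h r = \frac{C(\mu)\|\phi\|_\infty}{\eta^2},
\]
uniformly in $h$ (using $r^2 = \mu^2c_0^2\eta^2 h$).

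To conclude, recall that by Corollary \ref{cor:poincare}, $\|\beta_h\|_{L^q(\Omega_\delta)} \le C_q|q_h|(\Omega_\delta) \le C_q$ for $1\le q<3$; Young's inequality then gives $\|\phi_r * \beta_h\|_{L^q(\Omega')} \le C_q$ for the same range. Interior $W^{2,p}$ elliptic regularity applied to $-\Delta(\phi_r*\beta_h) = \tilde q_h$ on $\Omega'$ then yields, for any $p<\infty$,
\[
\|\phi_r * \beta_h\|_{W^{2,p}(\Omega)} \le C_p\bigl(\|\tilde q_h\|_{L^p(\Omega')} + \|\phi_r*\beta_h\|_{L^q(\Omega')}\bigr) \le K(\mu,\eta,\|\phi\|_\infty,p),
\]
bootstrapping through Sobolev embedding ($W^{2,q}\hookrightarrow L^\infty$ for $q>3/2$) to handle large $p$. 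This yields the desired conclusion.

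\medskip

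The main obstacle is the packing estimate in Step 3. Property (vi) controls only pairs of \emph{disjoint} segments; adjacent segments within a single curve $\Gamma_h^\ell$ share an endpoint and are not captured. The resolution is to use the coarser scale built into the construction of Section \ref{S3.4}: the curve vertices sit on a grid of spacing $\gtrsim \ell h^{1/2}$, much larger than the mollification scale $r\sim \eta h^{1/2}$, so only finitely many junctions (hence finitely many groups of adjacent segments) meet $B_r(x)$ for small $h$. Beyond this combinatorial point, the argument is a direct application of Newton/elliptic regularity.
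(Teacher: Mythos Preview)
Your argument is correct and takes a route that differs from the paper's. Both proofs rest on the same core observation: the separation property~(vi) forces $\|\phi_r * q_h\|_{L^\infty} \le C(\mu)\|\phi\|_\infty/\eta^2$ uniformly in $h$. From there they diverge.

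You lift to the $2$-form $\beta_h$, use $-\Delta\beta_h = q_h$, mollify to obtain $-\Delta(\phi_r*\beta_h) = \phi_r*q_h \in L^\infty$ on an interior set, and then invoke interior Calder\'on--Zygmund regularity (with a Sobolev bootstrap through nested domains) to obtain $\phi_r*\beta_h \in W^{2,p}$ for all $p$. This works for any smooth radial mollifier. The paper instead stays at the level of the $1$-form $\zeta_0 = d^*\beta_h$ and uses the Gaffney--G\aa rding inequality. It constructs $\phi$ as a triple self-convolution $\psi^1*\psi^1*\psi^1$ so that successive mollifications $\zeta_1,\zeta_2,\zeta_3$ of $\zeta_0$ (localized by cutoffs) can each be estimated in $W^{1,p}$ in terms of the previous one in $L^p$; the bootstrap runs $L^p$ for $p<3/2 \to L^p$ for $p<3 \to L^p$ for all $p$. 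That approach uses only the $L^p$ bounds on $d^*\beta_h$ from Proposition~\ref{prop:1}(iv), avoiding any appeal to bounds on $\beta_h$ itself. Your route is more direct and yields the conclusion for an arbitrary mollifier rather than a specifically engineered one.

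Two small points on your write-up. First, the packing justification you give for adjacent segments (``vertex spacing $\gtrsim \ell h^{1/2}$'') is not the right mechanism: the spacing of the sampling points $s^a_{ij},\tilde s^a_{ijk}$ is of order $C h^{1/2}$ with $C$ depending only on the fixed triangulation and $dp$, not on any parameter named $\ell$. What actually bounds the number of segments meeting $B_r(x)$ is that any two \emph{non-adjacent} segments of $\Gamma_h$ --- including non-adjacent segments along a single curve $\Gamma_h^\ell$ --- are disjoint closed segments, hence at distance $> c_0\eta h^{1/2} = r/\mu$ by~(vi); a standard packing argument in $B_r(x)$ then gives at most $C(\mu)$ segments. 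Second, Corollary~\ref{cor:poincare} bounds $d^*\beta_h$ in $L^q$, not $\beta_h$; for $\|\beta_h\|_{L^q(\Omega_\delta)} \le C_q$ with $q<3$ you should cite Proposition~\ref{prop:2.10} (which gives $\beta_h \in W^{1,q}$ for $q<3/2$) together with Sobolev embedding. Both are easy to patch.
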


\begin{proof} 
First, let $\psi$ be any radial mollifier with support in the unit ball, such that $\psi \ge 0$ and  $\int\psi = 1$,
and let $\psi_r(x) := r^{-3}\psi(x/r)$.
Then for $x\in\Omega_\delta$, in view of statement (vi) of Proposition \ref{prop:1}, either $B_r(x)\cap\Gamma_h=\emptyset$ or $B_r(x)\cap\Gamma_h=B_r(x)\cap\{L_1\}$, or $B_r(x)\cap\Gamma_h=B_r(x)\cap\{L_1, L_2\}$, where $L_i$ are segments of $\Gamma_h$. Hence we have
\beq\label{eq:boundconvolq}
|\psi_r*q_h(x)|\le r^{-3}||\psi||_{\infty} \sum_i  h|L_i\cap B_r(x)|\le 4hr^{-2}||\psi||_{\infty}\le \frac{4}{(c_0\mu\eta)^2}|| \psi ||_\infty\, .
\eeq
Now fix open sets
$\Omega =\Omega_3 \Subset \Omega_2\Subset \Omega_1 \Subset \Omega_0=  \Omega_\delta$
and functions $\chi_m$ for  $m=1,2,3$ such that 
$\chi_m\in C^\infty_c(\Omega_{m-1})$
and $\chi_m\equiv 1$ on an open neighborhood of  $\bar \Omega_m$.
Fix a mollifier $\psi^1$ as above, but such that spt$(\psi^1)\subset B_{1/3}$,
and define $\psi^2 = \psi^1*\psi^1$ and $\psi^3 = \psi^1*\psi^2$. Thus $\psi^m$ is radial
with support in $B_1$ for $m=1,2,3$, so that \eqref{eq:boundconvolq} applies to $\psi^m_r$.
Now write $\zeta_0 = d^*\beta$, and
for $m=1,2,3$  define $\zeta_m = \psi^1_r* (\chi_m \zeta_{m-1})$.

If $h$, and thus $r$, is small enough (which we will henceforth take to be the case), then 
\beq
\mbox{$\zeta_m= \psi^1_r * \zeta_{m-1} = \psi^m_r*d^*\beta$ on $\Omega_m$, and
$\zeta_m$ has support in $\Omega_{m-1}$.}
\label{eq:zetam}\eeq
We claim that
\beq\label{eq:gaffney0}
\begin{aligned}
\|d\zeta_m\|_{L^p(\Omega_{m-1})}
&\le C_m \|  \zeta_{m-1}\|_{L^p(\Omega_{m-1})} + C(p, \mu, \psi^1, \Omega_\delta)\ , \\
\|d^*\zeta_m\|_{L^p(\Omega_{m-1})}&\le C_m \|  \zeta_{m-1}\|_{L^p(\Omega_{m-1})} .
\end{aligned}\eeq
To see these, note first that
$d\zeta_m =\psi^1_r* (d\chi_m \wedge \zeta_{m-1})+\psi^1_r* (\chi_m d\zeta_{m-1})$.
Then Jensen's inequality implies that
\[
\|\psi_1* (d\chi_m \wedge \zeta_{m-1})
\|_{L^p(\Omega_{m-1})} 
\le
\| d\chi_m \wedge \zeta_{m-1} \|_{L^p(\Omega_{m-1})} \le 
 C_m \|  \zeta_{m-1}  \|_{L^p(\Omega_{m-1})} .
\]
We estimate $\psi^1_r* (\chi_m d\zeta_{m-1})$ first in the case $m=1$, when it follows from statement (i) of Proposition \ref{prop:1} that $\psi^1_r* (\chi_1 d\zeta_{0})= \psi^1_r * (\chi_1 q_h)$.
Then arguing as in \eqref{eq:boundconvolq} we find that for any $p<\infty$,
\[
\| \psi^1_r* (\chi_1 q_h) \|_{L^p(\Omega)} \le 
C(p, \Omega_\delta)\| \psi^1_r* (\chi_1 q_h) \|_{L^\infty(\Omega)}  \le C(p,\psi^1, \Omega_\delta)(c_0\mu\eta)^{-2}.
\]
proving the first part of \eqref{eq:gaffney0} for $m=1$. 
For $m=2,3$, 
\[
\| \psi^1_r* (\chi_m d\zeta_{m-1}) \|_{L^p(\Omega_{m-1})} \le 
\| d\zeta_{m-1}   \|_{L^p(\Omega_{m-1})} \overset{\eqref{eq:zetam}}=
\| \psi_{m-1}*q_h\|_{L^p(\Omega_{m-1})}
\]
and we conclude  \eqref{eq:gaffney0} much as in the case $m=1$. 
The second claim of \eqref{eq:gaffney0} is similar but easier, since \eqref{eq:zetam}
implies that $d^*\zeta_m = \psi^1_r* [\star d \chi_m \wedge \star \zeta_{m-1}]$,
so that $\| d^* \zeta_m \|_p \le \| |d\chi_m| \, |\zeta_{m-1}|\|_{L^p(\Omega_{m-1})} 
\le C_m  \|  \zeta_{m-1}\|_{L^p(\Omega_{m-1})} $.

Now recall  the 
Gaffney-G\aa rding inequality 
\beq\label{eq:gaffney}
\| \zeta  \|_{W^{1,p}(U)}\le C_p(U)\left(\| \zeta\|_{L^p(U)}+\|d\zeta\|_{L^p(U)}+\|d^*\zeta\|_{L^p(U)}\right)\, ,\ 1<p<+\infty\, ,
\eeq 
valid for a differential form $\zeta$ with compact support in $U\subset\R^n$.
Applying this to $\zeta_m$,  taking into account \eqref{eq:gaffney0} and
noting that $\| \zeta_m\|_{L^p} \le \|\zeta_{m-1}\|_{L^p}$, we find that
\begin{align}
\| \zeta_m \|_{W^{1,p}(\Omega_{m-1} )}
& \le
C \| \zeta_{m-1} \|_{L^p(\Omega_{m-1} )} +C.
\label{eq:gaffney1}
\end{align}
Recall that  Proposition \ref{prop:1}, statement (iv),
provides uniform
estimates of $\zeta_0 = d^*\beta$ in $L^p(\Omega_0)$ for every $p<3/2$,
so \eqref{eq:gaffney1} implies uniform estimates of $ \|\zeta_1\|_{W^{1.p}(\Omega_0)}$
for every $p<3/2$, and hence of $ \|\zeta_1\|_{L^p(\Omega_0)}$ for ever $p<3$.
Iterating this argument twice more and recalling \eqref{eq:zetam},
we find that \eqref{eq:convolbeta} holds with $\phi  = \psi^3$.
\end{proof}

%
%

%
%

\subsection{Construction of the sequence $u_\ep$ in case $g_\ep\ge\logeps^2$}\label{sect:3.6}

Assume that the sequence $g_\ep$ satisfies either $g_\ep = \logeps^2$ or $\logeps^2\ll g_\ep \ll \ep^{-2}$.
Suppose that we are given $(J,v)\in \calA_0$ as defined in \eqref{eq:calA0def}, and moreover
that $J = \frac 12 dv$ if $g_\ep = \logeps^2$, and that $J = 0$ if $\logeps^2\ll g_\ep \ll \ep^{-2}$.

Set $p=\frac{1}{2\pi}v$. Fix $\delta>0$ and let $p_\delta$ be the piecewise linear approximation provided by Lemma \ref{lem:fem}, and recall  the Hodge decomposition  $p_\delta=\gamma+d\alpha+d^*\beta$ in $\Omega_\delta$ introduced in Section \ref{S:hodge}.
Fix $\eta>0$, and $h = h_\ep  = (g_\ep)^{-1/2}$, and let $q_h$ be the discretized vorticity, with support $\Gamma_h$, and $\beta_h=-\Delta^{-1}_Nq_h$ the approximation to $\beta$ constructed in  Proposition \ref{prop:1}. 

As we discuss in Remark \ref{rem:green}, if $c$ is any cycle in $\Omega_\delta\setminus\Gamma_h$, then  $h^{-1}\int_c d^*\beta_h$ is an integer for every $h$.
Thus, if we fix $\bar x\in\Omega$  and let $c_{\bar x,x}$ denote a path in
$\Omega_\delta\setminus \Gamma_h$
from $\bar x$ to $x$, it follows that
\beq\label{eq:rmodz}
\phi_h(x)
:=
\frac{1}{h}\int_{c_{\bar x,x}}d^*\beta_h\quad\text{is well-defined function }
\Omega_\delta\setminus \Gamma_h\to \R/\Z\, ,
\eeq
independent of the choice of $c_{\bar x,x}$, and is hence well-defined a.e. in $\Omega$. 

Moreover, according to Lemma \ref{L:H1N}, we may write
$\gamma=\sum_{j=1}^\kappa a_j\cdot d\phi_j$, where $\phi_{j}$ is well-defined in $\R/\Z$ for $j=1,...,\kappa$. For any $j$ let $n_j=[h^{-1}a_j]\in\Z$ be the integer part of $h^{-1}a_j$, and consider $h^{-1}\gamma_h\equiv d\psi_h=\sum_{j=1}^\kappa n_j d\phi_j$, so that $\psi_h$ is well-defined in $\R/\Z$.
Let finally $\alpha_h=h^{-1}\alpha$.
 The map
\beq\label{eq:vh}
v_h=\exp(i2\pi(\phi_h+\psi_h+\alpha_h))\, 
\eeq
is  thus a well-defined map $\Omega_\delta \to S^1$, with
\beq\label{eq:jvh}
jv_h=2\pi (d\phi_h+d\psi_h+d\alpha_h)=\frac{2\pi}{h}(d^*\beta_h+\gamma_h+d\alpha)
\eeq
and $Jv_h=\frac \pi h dd^*\beta_h= \frac \pi h \cdot q_h$. Let now
\beq\label{eq:roep}
\rho_\ep(x)\equiv\rho_{\ep,h}(x)=\min\{\frac{\text{dist}\,(x,\Gamma_h)}{\ep}\, ,\ 1\}\, ,
\eeq
for $\Gamma_h$ as in Proposition \ref{prop:1}, statement (v)
and set finally
\beq\label{eq:uep}
u_\ep\equiv u_{\ep,h}=\rho_\ep\cdot v_h\, .
\eeq


\subsection{Completion of proof of \eqref{eq:gammalimsup} in case $g_\ep\ge\logeps^2$}\label{sect:completion1}

We first claim that
\beq
\frac{ju_\ep}{\sqrt{g_\ep}} \rightharpoonup 2\pi ( d\alpha + d^*\beta^\eta +\gamma) \quad\mbox{weakly in }L^q\mbox{ for every } q\in (1,  3/2).
\label{eq:convju2}\eeq 
for $\beta^\eta$ as in statement (iv) of Proposition \ref{prop:1}.
To see this we write
 \beq
\frac{ ju_\ep}{\sqrt {g_\ep}}  = 2\pi ( d^*\beta_h +  \gamma_h +\alpha) + 
2\pi  ( \rho_\ep^2 - 1)( d^*\beta_h + \gamma_h + d\alpha) .
\label{eq:tst}\eeq
It is clear from the definition of $\gamma_h$ that $\gamma_h \to \gamma$
uniformly  as $\ep$ (and thus $h$) tend to $0$, and we know from Proposition \ref{prop:1}
that $d^*\beta_h \rightharpoonup d^*\beta^\eta$ in the relevant $L^q$ spaces. So we only need to show that the last term in \eqref{eq:tst} vanishes.
For this, we use statements (vi), (v), and (iii) of Proposition \ref{prop:1} to see that 
\beq
|\{\text{dist}\,(x,\Gamma_h)\le\ep\}| \le C \ep^2 |\Gamma_h|
=  C \frac{\ep^2}h |q_h|(\Omega_\delta) \le C  \frac{\ep^2}h . 
\label{eq:tube}\eeq
It easily follows from this and from the definition of $\rho_\ep$ that $(\rho_\ep^2-1)\to 0$ in $L^r$ for every $r<\infty$. 
Thus, fixing $q\in (1,3/2)$ and $r$ such that $\frac 1q + \frac 1r =1$, in view of uniform  estimates
of $\| d^*\beta_h\|_q$ in Proposition \ref{prop:1} (iv), we find from H\"older's inequality that  $( \rho_\ep^2 - 1)( d^*\beta_h + \gamma_h + d\alpha) \to 0$
in $L^1$ as $\ep \to 0$, proving \eqref{eq:convju2}.

We now turn to the proof of the upper bound. Since $h = g_\ep^{-1/2}$,
we have 
\beq\label{eq:euep}
\frac{E_\ep(u_\ep;\Omega)}{g_\ep}=\frac{h^2}{2}\int_{\Omega}|\nabla \rho_\ep|^2+\rho_\ep^{2}|jv_h|^2+\frac{W(\rho_\ep)}{\ep^2}\, .
\eeq
Let us estimate the various terms contributing to ${g_\ep}^{-1}E_\ep(u_\ep;\Omega)$. 
First note that
\[
\frac{h^2}{2}\int_{\Omega}|\nabla\rho_\ep|^2+\frac{W(\rho_\ep)}{\ep^2}
\le \frac{Ch^2}{\ep^2}|\{\text{dist}\,(x,\Gamma_h)\le\ep\}|
\]
for
$C = \frac 12(1+ \| W\|_{L^\infty(B_1)})$.
It follows from this and \eqref{eq:tube} that
\beq\label{eq;boundro}
\frac{h^2}{2}\int_{\Omega}|\nabla\rho_\ep|^2+\frac{W(\rho_\ep)}{\ep^2}
\ \le \ 
Ch\, .
\eeq
Moreover,
\beq\label{eq:boundjvh}
\frac{h^2}{2}\int_{\Omega} \rho_\ep^2|jv_h|^2=2\pi^2\int_{\Omega}\rho_\ep^2(|d^*\beta_h|^2+|d\alpha+\gamma_h|^2+2\, d^*\beta_h\cdot (d\alpha+\gamma_h)),
\eeq
We have just shown in the proof of \eqref{eq:convju2} that 
$\rho_\ep^2(d\alpha+\gamma_h)\to d\alpha+\gamma$ in $L^p$ $\forall\, p<+\infty$ and that $d^*\beta_h\rightharpoonup d^*\beta^\eta$ weakly in $L^q$ $\forall\, q<3/2$. Thus, recalling 
the estimate 
$\| d^*\beta^\eta- d^*\beta\|^2_2 \le C\eta$ from statement (iv) in Proposition \ref{prop:1}, we obtain
\beq\label{eq:convdalpha}
\lim_{\ep\to 0}\int_\Omega\rho_\ep^2(d\alpha+\gamma_h)\cdot d^*\beta_h
\ = \ 
\int_{\Omega}d^*\beta^\eta\cdot(d\alpha+\gamma)
\ = \ 
C\sqrt{\eta}+ 
\int_{\Omega}d^*\beta\cdot(d\alpha+\gamma)\, ,
\eeq
\beq\label{eq:convdalpha0}
\lim_{\ep\to 0}\int_{\Omega}\rho^2_\ep|d\alpha+\gamma_h|^2\le\int_{\Omega_\delta}|d\alpha+\gamma|^2\, .
\eeq


For the remaining term, fix $0<\mu<1$ and set $r=c_0\mu\eta h^{1/2}$. Denote  $G_h^\lambda=\{ \text{dist}\,(x,\Gamma_h)\le\lambda\}\cap\Omega$. We have
\beq\label{eq:convbeta1}
2\pi^2\int_{\R^3}\rho^2_\ep|d^*\beta_h|^2=A_\ep+B_\ep+C_\ep\, ,
\eeq
where
\beq\label{eq:convbeta1bis}
A_\ep=2\pi^2\int_{G_h^\ep}\rho^2_\ep|d^*\beta_h|^2\, ,\ \  B_\ep=2\pi^2\int_{G_h^r\setminus G_h^\ep}|d^*\beta_h|^2\, ,\ \  C_\ep=2\pi^2\int_{\Omega\setminus G_h^r}|d^*\beta_h|^2\, .
\eeq
Let us estimate $A_\ep$. By \eqref{eq:pointwise3}, \eqref{eq:pointwise4}, and \eqref{eq:roep},
$
\rho_\ep^2|d^*\beta_h|^2 \le  \frac {h^2}{\ep^2} + \frac {2K^2}{\eta^4}
$  in $G^\ep_h$,
so \eqref{eq:tube} implies that
\beq\label{eq:convaep}
A_\ep\le |G_h^\ep|(\frac {h^2}{\ep^2} + \frac {2K^2}{\eta^4}) \le C(h + K \frac {\ep^2}{\eta^4 h})
\eeq
so that, since $h = g_\ep^{-1/2}$ and $\logeps^2 \le g_\ep \ll \ep^{-2}$,
we have
\beq\label{eq:convaepbis}
\limsup_{\ep\to 0} A_\ep=0\, .
\eeq
Let us turn to $C_\ep$. Let $\phi_r$ be the radial mollifier found in Lemma \ref{lem:convolbeta}.
Observe that $d^*\beta_h$ is harmonic on $\Omega\setminus G_h^r$, and hence coincides there with $\phi_r*d^*\beta_h$, by the mean-value property of harmonic functions. By \eqref{eq:convolbeta} and Rellich's Theorem we deduce that $\phi_r*d^*\beta_h$ is strongly compact in $L^2(\Omega)$, and hence by Proposition \ref{prop:1}, statement (iv) that  $\phi_r* d^*\beta_h\to d^*\beta^\eta$ in $L^2(\Omega)$
as $\ep\to 0$. We deduce that
\beq\label{eq:convcep}
\begin{aligned}
\limsup_{\ep\to 0}C_\ep \ = \ \limsup_{\ep\to 0}2\pi^2\int_{\Omega\setminus G^r_h}|\phi_r*d^*\beta_h|^2
&\le \lim_{\ep\to 0}2\pi^2\int_{\Omega}|\phi_r*d^*\beta_h|^2\\
&=2\pi^2\int_{\Omega}|d^*\beta^\eta|^2\\
&\le 2\pi^2\int_{\Omega}|d^*\beta|^2 + C\eta.
\end{aligned}
\eeq

\smallskip

To estimate $B_\ep$ we proceed as follows:  let $V_1=(G^r_h\setminus G^\ep_h)\setminus U_{r_0}$, where $U_{r_0}=\{\text{dist}\, (x, \cup_{i,j}\bd\tilde S_i\cup\bd P_{ij}) < r_0\}\cap\Omega$ and $r_0 = \frac{c_0}2 \eta h^{1/2}$, and set $V_2=(G^r_h\setminus G^\ep_h)\cap U_{r_0}$. For any $\sigma>0$ we have, using for $d^*\beta_h$ the bound \eqref{eq:pointwise3} on $V_1$ and \eqref{eq:pointwise4} on $V_2$,
\beq\label{eq:convbep1}
\begin{aligned}
2\pi^2\int_{V_1}|d^*\beta_h|^2&\le (1+\sigma)\frac{h^2}{2}\int_{V_1}\frac{dx}{|\text{dist}\,(x,\Gamma_h)|^2}\, +(1+\frac{1}{\sigma})\frac{2\pi^2K^2}{\eta^4}|V_1|\\
&\le (1+\sigma)h^2\pi\log\left(\frac{r}{\ep}\right)|\Gamma_h\setminus U_{r_0}|+(1+\frac{1}{\sigma})\frac{C\mu^2}{\eta^2}h|\Gamma_h\setminus U_{r_0}|\, ,
\end{aligned}
\eeq
\beq\label{eq:convbep2}
\begin{aligned}
2\pi^2\int_{V_2}|d^*\beta_h|^2 &\le 4(1+\sigma)\frac{h^2}{2}\int_{V_2}\frac{dx}{|\text{dist}\,(x,\Gamma_h)|^2}\, +(1+\frac{1}{\sigma})\frac{2\pi^2K^2}{\eta^4}|V_2|\, ,\\
&\le 4(1+\sigma)h^2\pi\log\left(\frac{r}{\ep}\right)|\Gamma_h\cap U_{r_0}|+(1+\frac{1}{\sigma})\frac{C\mu^2}{\eta^2}h|\Gamma_h\cap U_{r_0}|\, ,
\end{aligned}
\eeq
so that
\beq\label{eq:convbep3}
B_\ep\le (1+\sigma)h^2\pi\log\left(\frac{r}{\ep}\right)\left(|\Gamma_h|+3|\Gamma_h\cap U_{r_0}|\right)+(1+\frac{1}{\sigma})\frac{C\mu^2}{\eta^2}h|\Gamma_h|\, .
\eeq
If $g_\ep=h^{-2}=\logeps^2$ then
statements (iii), (v) of Proposition \ref{prop:1} and \eqref{eq:convbep3} give
\beq\label{eq:convbepcrit}
\limsup_{\ep\to 0} B_\ep\le \left[ (1+\sigma)\pi+(1+\frac{1}{\sigma})\frac{C\mu^2}{\eta^2}\right]\cdot (C\eta +||dp_\delta||_{L^1(\Omega_\delta)})\, ,
\eeq
while if $\logeps^2\ll g_\ep \ll \ep^{-2}$ (i.e. $\ep\ll h\ll \logeps^{-1}$), we have
\beq\label{eq:convbepnoncrit}
\limsup_{\ep\to 0} B_\ep\le (1+\frac{1}{\sigma})\frac{C\mu^2}{\eta^2}\cdot (C\eta + ||dp_\delta||_{L^1(\Omega_\delta)})\, .
\eeq
 We sum up all the contributions \eqref{eq;boundro}, \eqref{eq:convdalpha}, \eqref{eq:convdalpha0}, \eqref{eq:convaepbis},  \eqref{eq:convcep}, 
\eqref{eq:convbepcrit} and \eqref{eq:convbepnoncrit}, noting that the terms estimated  in 
\eqref{eq:convdalpha}, \eqref{eq:convdalpha0}, and \eqref{eq:convcep} add up to
$2\pi^2 \int_\Omega | d\alpha + \gamma + d^*\beta|^2 + C\sqrt{\eta} = 2\pi^2\int_\Omega |p_\delta|^2
+C\sqrt{\eta}$.
Thus, letting first $\mu\to 0$, then $\sigma\to 0$,  in \eqref{eq:convbepcrit} and \eqref{eq:convbepnoncrit}, we obtain
\beq\label{eq:glsupcrit}
\limsup_{\ep\to 0}\frac{E_\ep(u_\ep,\Omega)}{g_\ep}
\le 
\pi  \int_{\Omega_\delta}| dp_\delta|\, +2\pi^2\int_{\Omega}|p_\delta|^2
+ C \sqrt{\eta}
\eeq
if $g_\ep = \logeps^2$, and 
\beq\label{eq:glsupnoncrit}
\limsup_{\ep\to 0}\frac{E_\ep(u_\ep,\Omega)}{g_\ep}
\le 
2\pi^2\int_{\Omega_\delta}|p_\delta|^2
+ C \sqrt{ \eta}
\eeq
if $\logeps^2 \ll g_\ep \ll \ep^{-2}$. In these estimates  $C$ is independent of $\eta$.
Thus, since $p = 2\pi v$, and recalling \eqref{eq:parseval},  \eqref{ndc2}, \eqref{ndc3}, and
statement (iv) of Proposition \ref{prop:1},
we see that as first $\eta$ and then $\delta$ tend to $0$, the right-hand sides above converge to
$\frac 12 |dv|(\Omega) + \frac 12 \| v\|_{L^2(\Omega)}^2$ in the case $g_\ep = \logeps^2$, and
$ \frac 12 \| v\|_{L^2(\Omega)}^2$ in the case $\logeps^2\ll g_\ep \ll \ep^{-2}$. 
Thus, we can find sequences $\eta = \eta_\ep$ and
 $\delta = \delta_\ep$ tending to zero slowly enough that, if we define
$U_\ep := u_{\ep}$ with parameters $\delta_\ep$ in the piecewise linear approximation
(Lemma \ref{lem:fem}) and $\eta_\ep$ in the discretization of the vorticity (Proposition \ref{prop:1}) ,
then
\begin{align}
\label{eq:glsupcrit2}
\limsup_{\ep\to 0}\frac{E_\ep(U_\ep,\Omega)}{g_\ep}
& \le 
\frac 12 |dv|(\Omega) +  \frac 12 \| v\|_{L^2(\Omega)}^2
\quad\quad
&\mbox{ if }
g_\ep = \logeps^2\\
\limsup_{\ep\to 0}\frac{E_\ep(U_\ep,\Omega)}{g_\ep}
& \le 
 \frac 12 \| v\|_{L^2(\Omega)}^2
\quad\quad
&\mbox{ if }
 \logeps^2 \ll g_\ep \ll \ep^{-2}.
\label{eq:glsupnoncrit2}\end{align}
This finally proves the upper bound \eqref{eq:gammalimsup}, recalling that 
$J = \frac 12 dv$ for $g_\ep = \logeps^2$ and
$J= 0$ when $\logeps^2 \ll g_\ep \ll \ep^{-2}$.

Finally, having established the energy upper bound for $U_\ep$, the compactness assertions \eqref{eq:convju}, \eqref{eq:convu}, \eqref{eq:convJ} imply that $\frac 1{\sqrt{g_\ep}}jU_\ep, \frac 1{\sqrt{g_\ep}|U_\ep|}jU_\ep$ and 
$JU_\ep$ converge to limits in the required spaces, so it suffices only to idenfity the limits.
In fact,  it suffices to show for example
that $\frac 1{\sqrt{g_\ep}}jU_\ep \to v$ in the sense of distributions, and this follows (after taking $\eta_\ep$ in the definition of $U_\ep$  to converge to zero more slowly, if necessary) from \eqref{eq:convju2}.
\qed

\subsection{Construction of the sequence $u_\ep$ in case $g_\ep\ll \logeps^2$} 

Let  $J$ be an exact measure-valued 2-form in $\Omega$ and $v\in L^2(\Lambda^1\Omega)$ such that $dv=0$.
Fix $\delta>0$, and let  $p_\delta$ be the rational piecewise linear approximation of
$p := \frac v{2\pi}$ from Lemma \ref{lem:fem}. Furthermore, let $p_\delta'$
be the rational piecewise linear function from Lemma 2', so that $dp'$ approximates $J$.
Our Hodge decomposition gives
respectively $p_\delta=\gamma + d\alpha+d^*\beta'$, and $p'_\delta=\gamma'+d\alpha'+d^*\beta$. Let $h=\frac{1}{\sqrt{g_\ep}}$ and $h'=\frac{\logeps}{g_\ep}$, so that $h=h'\frac{\sqrt{g_\ep}}{\logeps}\ll h'$. Fix $\eta>0$, and for $h'<\eta^2$ let $d^*\beta_{h'}$ be the discretization of $d^*\beta$ via  Proposition \ref{prop:1}.
Let $\phi_{h'}$ be defined as in \eqref{eq:rmodz},
so that $d\phi_{h'} =\frac 1{ h'} d^*\beta_{h'}$, let $h^{-1}\gamma_h=d\psi_h$ be as in section \ref{sect:3.6},
and set $\alpha_h=h^{-1}\alpha$.
Finally, let $\rho_\ep$ be as in \eqref{eq:roep} and define
\beq\label{eq:vepsubcrit}
u_\ep=\rho_\ep  \exp(i2\pi\cdot(\phi_{h'}+\psi_h+\alpha_h))\, .
\eeq


\subsection{Completion of proof of \eqref{eq:gammalimsup} in case $g_\ep\ll\logeps^2$}\label{sect:completion3}

We have to estimate
\beq\label{eq:euep2}
\frac{E_\ep(u_\ep;\Omega)}{g_\ep}=\frac{h^2}{2}\int_{\Omega}|\nabla\rho_\ep|^2+\frac{W(\rho_\ep)}{\ep^2}+4\pi^2\rho_\ep^2\left|\frac 1{h'}d^*\beta_{h'}+ \frac 1h(\gamma_h+d\alpha)\right|^2\, .
\eeq
Then 
$ |\dist (x,\Gamma_h)\le\ep\}| \le \frac{\ep^2}{h'}$ as in \eqref{eq:tube}, so we find
as in \eqref{eq;boundro} that
\beq\label{eq:boundro2}
\frac{h^2}{2}\int_{\Omega}|\nabla\rho_\ep|^2+\frac{W(\rho_\ep)}{\ep^2}\le C\frac{h^2}{h'}
\longrightarrow 0
\eeq
For the remaining terms we have 
\beq\label{eq:boundjvh2}
2\pi^2 \int_{\Omega}\rho_\ep|d\alpha +\gamma_h|^2\to 2\pi^2\int_{\Omega}|d\alpha+\gamma|^2\le 2\pi^2\int_{\Omega_\delta}|p_\delta|^2\, ,
\eeq
\beq\label{eq:mixedterm}
2\pi^2\frac{h}{h'}\int_{\Omega}\rho_\ep^2d^*\beta_{h'}\cdot (d\alpha+\gamma_h)\to 0\, ,
\eeq
\beq\label{eq:convbeta2}
2\pi^2\frac{h^2}{{h'}^2}\int_{\Omega}\rho_\ep^2|d^*\beta_{h'}|^2=A'_\ep+B'_\ep + C'_\ep\, ,
\eeq
where, in the notation corresponding to \eqref{eq:convbeta2},
\beq\label{eq:aepbep}
\begin{aligned}
A'_\ep&=2\pi^2\frac{h^2}{{h'}^2}\int_{G^\ep_{h'}}\rho_\ep^2|d^*\beta_{h'}|^2\, , \ \  \\
B'_\ep&=2\pi^2\frac{h^2}{{h'}^2}\int_{G_{h'}^r \setminus G_{h'}^\ep}|d^*\beta_{h'}|^2\, , \\
C'_\ep&=2\pi^2\frac{h^2}{{h'}^2}\int_{\Omega\setminus G_{h'}^r}|d^*\beta_{h'}|^2\, \,
\end{aligned}
\eeq
for $r = c_0 \eta (h')^{1/2}$.
Reasoning as in \eqref{eq:convaep} and \eqref{eq:convcep} we deduce {\it a fortiori} that $\limsup A_\ep=\limsup_{\ep\to 0}C_\ep = 0$, while  following 
\eqref{eq:convbep1} and \eqref{eq:convbep2} we deduce
\beq\label{eq:convbep4}
B'_\ep\le (1+\sigma) h^2\pi \log( \frac r \ep) \left( |\Gamma_{h'}|+ C |\Gamma_{h'} \cap U_r| \right)+(1+\frac{1}{\sigma})\frac{h^2}{{h'}}|\Gamma_{h'}|\, ,
\eeq
so that $\limsup B'_\ep\le (1+\sigma)\pi \int_{\Omega_\delta}|dp_\delta'| + C\eta$ by Proposition \ref{prop:1} (iii). Summing up the various contributions and then letting  $\sigma\to 0$, we obtain
\beq\label{eq:glsupsubcrit}
\limsup_{\ep\to 0}\frac{E_\ep(u_\ep)}{g_\ep}\le \pi\int_{\Omega_\delta}|dp'_\delta|\, +2\pi^2\int_{\Omega_\delta}|p_\delta|^2\,  + C \eta.
\eeq
We conclude the proof as in the previous cases,
by defining 
$U_\ep := u_{(\ep, \eta_\ep, \delta_\ep)}$ (that is, defining $u_\ep$ as above, but 
with parameters $\delta_\ep$ in the piecewise linear approximation
of Lemma \ref{lem:fem}, and $\eta_\ep$ in the discretization of the vorticity of Proposition \ref{prop:1})
for $\eta_\ep$ and $\delta_\ep$ converging to zero
sufficiently slowly, so that $U_\ep$ satisfies 
the Gamma-limsup inequality \eqref{eq:gammalimsup}, and
then verifying the convergence as before.
\qed





\section{APPLICATIONS TO SUPERCONDUCTIVITY}\label{sect:supercond}


In this section we prove Theorem \ref{thm:3} and begin the analysis of the limiting functional $\mathcal F$, deriving the curvature equation for the vortex filaments. We use a good deal of notation that was introduced in Section \ref{S:superc}.

In the companion paper \cite{bjos} we analyze in more detail the properties of $\mathcal F$ and derive further applications such as a general expression for the first critical field $H_{c_1}$.

\subsection{Proof of Theorem \ref{thm:3}} 
First, recalling that $h_{ex} = dA_{ex, \ep}$, we see immediately from the definition of $\calF_\ep$ and
of the $\dot H^1_*(\Lambda^1\R^3)$ norm  that
\[
\| A_\ep - A_{ex, \ep}\|_{\dot H^1_*}^2 \le 2 \calF_\ep(u_\ep, A_\ep) \le K \logeps^2.
\]
It immediately follows that $\frac 1 \logeps (A_\ep - A_{ex, \ep})$ is weakly precompact in $\dot H^1_*(\Lambda^1\R^3)$,
and since $\logeps^{-1} A_{ex, \ep} \to A_{ex,0}$ in $\dot H^1_*(\Lambda^1\R^3)$, we deduce \eqref{eq:convA}.

The above bounds on $A_\ep$ and the Sobolev embedding $\dot H^1_* \hookrightarrow L^6$ implies that
\beq
\|  \logeps^{-1} A_\ep \|_{L^6(\Lambda^1\Omega)} \le K\, .
\label{eq:AL6}\eeq

In order to establish the remaining
compactness assertions, we use the decomposition \eqref{eq:br}, which
implies that
\[
E_\ep(u_\ep) \le  \calF_\ep(u_\ep, A_\ep) + |\int_\Omega A_\ep \cdot ju_\ep| \le K \logeps^2 + |\int_\Omega A_\ep \cdot ju_\ep| \, ,
\]
using the fact that $\calM(A; dA_{ex, \ep}) + \calR(u_\ep, A_\ep) \ge 0$. To estimate the right-hand
side, note that in general
\begin{align*}
|ju \cdot A| 
&\le 
|u| \, |Du| \, |A| \le \frac 14 |Du|^2 + |u|^2|A|^2 
\le \frac14 |Du|^2 + 2 |A|^2 + 2(|u|-1)^2|A|^2\\
&  
\le \frac14 |Du|^2 + 2 |A|^2 +  \frac c {\ep^2} \left| \, |u|-1\,\right|^3 + C \ep^2 |A|^6.
\end{align*}
And hypothesis $(H_q)$ with $q\ge 3$ implies that $c\left| \, |u|-1\,\right|^3 \le \frac 12 W(u)$ if $c$ is small enough, so that
\[
|\int_\Omega A_\ep \cdot ju_\ep| \le \frac 12 E_\ep(u_\ep) + C \int_\Omega |A_\ep|^2 +  \ep^2|A_\ep|^6 \ dx.
\]
By combining the above inequalities and using \eqref{eq:AL6}, we find that $E_\ep(u_\ep) \le K' \logeps^2$, which in view of
Theorem \ref{thm:2} implies that 
 \eqref{eq:convju}, \eqref{eq:convu}, \eqref{eq:convJ} hold with $g_\ep = \logeps$.

 To prove statement (ii), consider the decomposition of $\mathcal F_\ep$ given by \eqref{eq:br}, \eqref{eq:br1}, which may be rewritten 
\beq\label{eq:br3}
\frac{\mathcal F_\ep (u_\ep,A_\ep)}{\logeps^2}=\frac{E_\ep(u_\ep)}{\logeps^2}+\mathcal M(\frac{A_\ep}{\logeps},\frac{h_{ex}}{\logeps})+\mathcal I(\frac{ju_\ep}{\logeps},\frac{A_\ep}{\logeps})+\frac{\mathcal R(u_\ep,A_\ep)}{\logeps^2}\, .
\eeq
 Recall that \eqref{eq:gammalimE} asserts
 $$
 \frac{1}{\logeps^2}E_\ep(u_\ep)\xrightarrow{\Gamma} E(v)\, ,
 $$
 with $E(v)$ defined in \eqref{eq:ev}. Note further that $\mathcal M$ is lower semicontinuous with respect to the weak $\dot H^1_*$ convergence of $\tfrac{A_\ep}{\logeps}$, and hence, taking into account \eqref{eq:convA}, we readily deduce
\beq\label{eq:convM}
\mathcal M(\frac{A_\ep}{\logeps},\frac{h_{ex}}{\logeps})\xrightarrow{\Gamma}\mathcal M (A,h)\, .
\eeq
Moreover, by Sobolev embedding, \eqref{eq:convA} implies $\frac{A_\ep}{\logeps}\to A$  strongly in $L^p(\Omega)$, for any $1\le p<6$, whereas  \eqref{eq:convu} gives $\frac{ju_\ep}{\logeps}\rightharpoonup v$ weakly in $L^{2q/(q+2)}(\Omega)$. For $q\ge 3$ we have $2q/(q+2)\ge  6/5$, so that for any admissible sequence $(u_\ep,A_\ep)$ we have 
\beq\label{eq:convI}
\mathcal I(\frac{ju_\ep}{\logeps},\frac{A_\ep}{\logeps})\to \mathcal I (v,A)\, .
\eeq
Note finally that for the remainder term $\mathcal R(u_\ep,A_\ep)$, since $| 1- |u|^2|^{3/2} \le C W(u)$, 
$$
\begin{aligned}
\left|\mathcal R(u_\ep, A_\ep)\right|
&\le\int_\Omega \left|1-|u|^2\right| \, |A_\ep|^2dx\\
&\le C \ep^{4/3} \left(\int_\Omega \frac{W(u_\ep)}{\ep^2}dx\right)^{2/3}\left(\int_\Omega|A_\ep|^6dx\right)^{1/3}\\
&\le C\ep^{4/3} E_\ep(u_\ep)^{2/3}\| A_\ep\|_{L^6(\Omega)}^2  \\
&\le C\ep^{4/3}\logeps^{10/3} ,
\end{aligned}
$$
so that $\frac{1}{\logeps^2}\mathcal R(u_\ep,A_\ep)\le C(\ep \logeps)^{4/3}$ converges uniformly to $0$.

From the above considerations it follows immediately that
\beq\label{eq:gammaF}
\frac{\mathcal F_\ep(u_\ep,A_\ep)}{\logeps^2}\xrightarrow{\Gamma} E(v)+\mathcal I(v,A)+\mathcal M(A,h)\, ,
\eeq
which is formula \eqref{eq:F}.

\qed

\subsection{Some properties of the $\Gamma$-limit $\mathcal F$}\label{sect:curvature} In this section we derive the Euler-Lagrange equations for the functional $\mathcal F$ and deduce a curvature equation for the limiting vortex filaments. First of all notice that $\mathcal F$ is strictly convex and hence admits a unique minimizer $(v,A)$. We first make variations of $\mathcal F$ with respect to $A$. Standard computations yield
\beq\label{eq:E-L-F1}
\begin{cases}
d^*(dA-h)=\mathbf{1}_\Omega \cdot(v-A) &\text{in }\R^3\\
[(\star(dA-h))_\top]=[(dA-h)_N]=0 &\text{on }\bd\Omega\, ,
\end{cases}
\eeq 
where $\mathbf{1}_\Omega$ denotes the characteristic function of $\Omega$ and $[(dA-h)_N]$ denotes the jump across $\bd \Omega$ of the normal component of $(dA-h)$. Denoting  $j=\mathbf{1}_\Omega\cdot(v-A)$ the gauge-invariant supercurrent in $\Omega$ and $H=dA-h$, we recover from \eqref{eq:E-L-F1} Amp\`ere law  $d^*H=j$ in $\R^3$ for the magnetic field $H$, which has to be coupled with Gauss law for electromagnetism $dH=d(dA-h)=0$ in $\R^3$, and with the continuity condition $[H]=0$ on $\bd\Omega$, which is a consequence of $[H_N]=0$ (by \eqref{eq:E-L-F1}) and $[H_\top]=0$ on $\bd\Omega$ (by Gauss law $dH=0$).

\medskip
Let now $J(v)$ denote the convex and positively 1-homogeneous function $J(v):=||dv||$, and let $\bd J$ be its subdifferential. Making variations of $\mathcal F$ with respect to $v$ yields the differential inclusion
\beq\label{eq:E-L-F2}
0\in \frac{1}{2}\, \bd J(v) +v-A \, .
\eeq
Assume the minimizer $v$ is regular and spt$\, |dv|=\bar U$, with $U$ an open subset of $\Omega$. In particular, if $U$ is a proper subset of $\Omega$, then one may view $\Omega\cap\bd U$ as a kind of  free boundary. This situation has a counterpart in the 2-d case (see \cite{ss}, \cite{js2}).
Then \eqref{eq:E-L-F2} corresponds to
\beq\label{eq:variationv}
\frac{1}{2}\int_U\frac{dv}{|dv|}\wedge\star d\phi+\int_\Omega(v-A)\wedge\star\phi=0
\eeq
for any $\phi\in C^\infty(\Lambda^1\Omega)$ such that spt$\, \phi\subset \Omega\setminus\bd U$.
Testing \eqref{eq:variationv} with $\phi\in C^\infty_c(\Lambda^1(\Omega\setminus\bar U))$ we deduce $v=A$ in $\Omega\setminus \bar U$.
Testing now with those $\phi\in C^\infty(\Lambda^1(\Omega))$ such that spt$\, \phi\subset \bar U\setminus(\Omega\cap\bd U)$ and integrating by parts \eqref{eq:variationv} we further deduce
\beq\label{eq:variationvbis}
\int_{U}\left[\frac{1}{2}d^*\left(\frac{dv}{|dv|}\right)+v-A\right]\wedge\star\phi\, +\int_{\bd\Omega\cap \bar U}(\phi\wedge\star\frac{dv}{|dv|})_\top=0\, ,
\eeq 
whence
\beq\label{eq:curvature0}
\begin{cases}
d^*\left(\frac{dv}{|dv|}\right)=2(A-v) &\text{in } U\, ,\\
(\star\frac{dv}{|dv|})_\top=0 &\text{on }\bar U\cap\bd\Omega\, .
\end{cases}
\eeq
Notice that $\tau=\star\frac{dv}{|dv|}$ is the unit tangent covector field to the streamlines of the covector distribution $\star dv$, which correspond to the limiting vorticity. From \eqref{eq:curvature0} we obtain in particular
\beq\label{eq:curvature1}
\begin{cases}
\tau\wedge \star d\tau=2\tau\wedge(v-A)=2\tau\wedge j &\text{in }U\, ,\\
\tau_\top=0 &\text{on }\bar U\cap\bd\Omega\, .
\end{cases}
\eeq
Denoting respectively by $\vec\tau$ and $\vec\jmath$ the vector fields correpsonding to $\tau$ and $j$, we notice that $\star(\tau\wedge j)$ corresponds to $\vec\tau\times\vec\jmath$,  and  $\star d\tau$ corresponds to the vector field $\nabla\times\vec\tau$, so that $\star(\tau\wedge\star d\tau)$ corresponds to the curvature vector $\vec\kappa=\vec\tau\times (\nabla\times\vec\tau)$. We thus deduce the curvature equation \eqref{eq:curvature}.

\begin{remark} Notice that $d^*\tau=\star d(\frac{dv}{|dv|})=0$ (or equivalently $\nabla\cdot\vec\tau=0$) in $\Omega$. From \eqref{eq:curvature0} we deduce that $\tau$ satisfies the Hodge system
\beq\label{eq:tau}
\begin{cases}
d\tau=\star 2j &\text{in }\Omega\\
d^*\tau=0 &\text{in }\Omega\\
\tau_\top=0 &\text{on }\bd\Omega,
\end{cases}
\eeq
or respectively
\beq\label{eq:vectau}
\begin{cases}
\nabla\times\vec\tau=2\vec\jmath &\text{in }\Omega\\
\nabla\cdot\vec\tau=0 &\text{in }\Omega\\
\vec\tau_\top=0 &\text{on }\bd\Omega,
\end{cases}
\eeq
under the pointwise constraint $|\tau|=1$ (resp. $|\vec\tau|=1$) in spt$\, j$.
\end{remark}
\begin{remark} From \eqref{eq:E-L-F1}, \eqref{eq:curvature0} we recover in particular the continuity equation $d^*j=d^*(v-A)=0$ (or equivalently, $\nabla\cdot\vec\jmath=0$). If $A$ is in the Coulomb gauge $d^*A=0$ (which happens in particular if $A_{ex} = c x^1dx^2 - x^2 dx_1)$ and $A\in H^1_*$, so that $d^*(A-A_{ex})=0$), then it follows that $v$ satisfies 
\beq\label{eq:dstarv}
\begin{cases}
d^*v=0 &\text{in }\Omega \\
 v_N=0 & \text{on }\bd\Omega.
 \end{cases}
 \eeq
\end{remark}




\section{APPENDIX}\label{sect:appendix}

In this Appendix we recollect basic facts and notation that we use throughout the paper, as well as background on differential forms, Hodge decompositions, minimal connections. We also provide the proofs of Lemma \ref{lem:nurest} and Lemma \ref{lem:mincon}.

\subsection{Differential forms} For $0\le k\le n$, let $\Lambda^k\R^n$ be the space of $k$-covectors in $\R^n$, i.e. $\theta\in\Lambda^k\R^n$ if $\theta=\sum\theta_I dx^I$, where $dx^I:=dx^{i_1}\wedge ... \wedge dx^{i_k}$, $1\le i_1<...<i_k\le n$. For $\theta,\beta\in\Lambda^k\R^n$, their inner product is given by $(\theta,\beta):=\sum \theta_I\cdot\beta_I$.
 
 \smallskip
 
 Let $\Omega\subset\R^n$ be a smooth bounded open set. We will denote  by $C^{\infty}(\Lambda^k\Omega):=C^\infty(\Omega;\Lambda^k\R^n)$ the space of smooth $k$-forms on $\Omega$. Similarly we denote by $L^p(\Lambda^k\Omega)$, $W^{1,p}(\Lambda^k\Omega)$ the spaces of $k$-forms of class $L^p$ and $W^{1,p}$ respectively. For $\omega\in C^\infty(\Lambda^k\Omega)$, denote by $\omega_\top\in C^\infty (\Lambda^k\partial\Omega)$ its tangential component\footnote{i.e. $\omega_\top:=i^*\omega$, where $i:\partial\Omega\to\Omega$ is the inclusion map} on $\bd\Omega$, and by $\omega_N:=\omega_{|\partial\Omega}-\omega_\top$ its normal component on $\bd\Omega$. The operators $\omega\mapsto\omega_\top$ and $\omega\mapsto\omega_N$ extend to bounded linear operators $W^{1,p}(\Lambda^k\Omega)\to L^p(\partial\Omega; \Lambda^k\R^n)$.
 The Hodge star operator $\star:\Lambda^k\R^n\to\Lambda^{n-k}\R^n$ is defined in such a way that $\theta\wedge\star\varphi=(\theta\, ,\varphi)dx^1\wedge...\wedge dx^n$. The $L^2$ inner product of $\omega\, ,\eta\in C^{\infty}(\Lambda^k\Omega)$ is defined by 
$$\langle \omega\, ,\eta\rangle:=\int_\Omega(\omega\, ,\eta) d\mathcal L^n=\int_\Omega\omega\wedge\star\eta.$$

 Let $T\subset\Omega$ be a piecewise smooth $m$-dimensional submanifold with boundary. Integration of (the tangential component of) a smooth $m$-form $\omega$ on $T$  will be denoted by 
 $\int_T\omega\equiv\int_T\omega_\top = \int_T i^*\omega,$ with $i:T\to\Omega$ the inclusion map. 

The adjoint with respect to $\langle\cdot\, , \cdot\rangle$ of the $\star$ operator on $k$-forms is $(-1)^{k(n-k)}\star$.

\subsubsection{measure-valued forms}\label{sect:mvf}
A distribution-valued $k$-form $\mu$ is an element of the dual space\footnote{One can thus identify a distribution-valued $k$-form with a $k$-current, see \cite{F}, although we generally choose not to do so.} of $C^\infty(\Lambda^k\Omega)$, and we express the duality pairing through the notation $\langle\cdot\, ,\cdot\rangle$. In particular, we will say  that $\mu$ is a measure-valued $k$-form  (cf. \cite{bo}, Definition 2.1) if
 \beq\label{eq:measureform}
 \langle\mu\, ,\varphi\rangle\le C|| \varphi ||_\infty\qquad\forall\, \varphi\in C^\infty_c(\Lambda^k\Omega)\, .
 \eeq
 
 A measure-valued $k$-form $\mu$ can be represented by integration (cf. \cite{bo}, Proposition 2.2) as follows:
 \beq\label{eq:integration}
 \langle\mu\, ,\varphi\rangle=\int_\Omega (\nu,\varphi) \, d|\mu|\, ,
 \eeq
 where $|\mu|$ is the total variation measure of (the vector measure) $\mu$ and $\nu$ is a $|\mu|$-measurable $k$-form such that $(\nu,\nu)^{1/2}=: |\nu|=1$ $|\mu|$-a.e. in $\Omega$. We denote by $||\mu||:=|\mu|(\Omega)$ the total variation norm of $|\mu|$. It coincides with the $L^1$ norm $||\mu||_1=\int_\Omega |\mu|$ if $\mu\in L^1(\Lambda^k\Omega)$. 
We denote by $\mu\rest U$ the restriction of $\mu$ to $U\subset\Omega$, defined by
\beq
 \langle\mu\rest U\, ,\varphi\rangle=\int_U (\nu,\varphi) \, d|\mu|\, .
 \eeq
 Moreover, for $\eta$ a unit $k$-covector and $\mu$ a measure $k$-form in $\Omega$, the component along $\eta$ of $\mu$ is a signed measure denoted $(\mu,\eta)$ defined by
 \beq\label{eq:mueta}
 (\mu,\eta)(U):=(\mu(U),\eta)=\int_U(\nu,\eta)d|\mu|\qquad\forall U\Subset\Omega,
 \eeq
 with variation measure $|(\mu,\eta)|$ given by
 \beq\label{eq:varmueta}
 |(\mu,\eta)|(U)=\int_U|(\nu,\eta)|d|\mu| \qquad\forall U\Subset\Omega.
 \eeq
 \medskip

Notice that an oriented piecewise smooth $k$-dimensional submanifold $T\subset\Omega$ can be identified with a measure $k$-form $\widehat T$, whose action on smooth $k$-forms $\varphi$ is given by
 \beq\label{eq:hatt}
 \langle \widehat {T}\, , \varphi\rangle = \int_T \varphi\, .
 \eeq
 
 Let $d$ be the exterior differentiation operator, and $d^*=(-1)^{n(k+1)+1}\star d\star$ its adjoint with respect to $\langle\cdot\, , \cdot\rangle$, i.e. $\langle d\omega\, ,\eta\rangle=\langle\omega\, , d^*\eta\rangle$ for $\omega$ a $k$-form, and $\eta$ an $(n-k-1)$-form. We define the action of $d$ and $d^*$ on a measure-valued distribution $\mu$ by duality, so that $\langle d\mu, \eta\rangle := \langle \mu, d^*\eta\rangle$ and $\langle d^*\mu, \eta\rangle := \langle \mu, d\eta\rangle$ for $\eta$ with compact support.

Stokes' Theorem reads $\int_T d\varphi=\int_{\bd T}\varphi_\top$, for $\varphi$ a smooth $(k-1)$-form and $T$ as above.  Notice that by \eqref{eq:hatt}
 we have
\beq\label{eq:bddstar}
\langle \widehat {T}\, , d\varphi\rangle=\langle d^*\widehat {T}\, ,\varphi\rangle=\langle \widehat{\bd T}\, , \varphi\rangle\, ,
\quad\text{ so that  }\ \widehat{\bd T}=d^*{\widehat {T}}\, .
\eeq

A measure-valued $k$-form $\mu$ is said to be 
{\em closed} if $d\mu=0$, and it is {\em exact} if there exists a measure-valued  $k-1$-form $\psi$ such that 
$\mu = d\psi$.

 \smallskip
 
\subsubsection{the tangential part of measure-valued forms}\label{sect:mvftop}
Suppose that $\omega$ is a closed measure-valued $n-1$-form defined on an open subset $\Omega\subset \R^n$. If we fix an open $U\subset \Omega$ with piecewise smooth boundary $\bd U$, we will use the notation
$\omega_\top$ to denote the distribution defined by
\beq
 \int f \omega_\top  \ := \ \int_U df \wedge \omega \quad\mbox{ for all }
f\in C^\infty (U)\cap C(\bar U).
\label{qtop.def1}\eeq
Thus our definition states that $\omega_\top :=\star d(\chi_U \omega)$ in the sense of distributions, where $\chi_U$ is the characteristic
function of $U$. Although the notation $\omega_\top$ does not explicitly indicate the set $U$,
it will normally be clear from the context, and when it is  not, we will write for example ``$\omega_\top$ on $\bd U$".

In general $\omega_\top$ is a distribution supported on $\bd U$. We claim that 
\beq
\mbox{$\int f \omega_T$ depends only on $f|_{\partial U}$, for smooth $f$.
}
\label{omegatopclaim}\eeq To verify this, it suffices to check
that 
$\int_U df \wedge \omega = 0$ for $\omega$ as above, whenever $f=0$ on $\partial U$. Toward this end, let $\chi_\ep$ denote a smooth function with compact support in $U$, such that $0\le \chi_\ep\le 1$, $|\nabla \chi_\ep|\le C/\ep$,
$\chi_\ep (x) = 1$ if  $\dist(x,\partial U) \ge \ep$, and $\chi_\ep = 0$ if $\dist(x,\partial U)\le \ep/2$. Then
\[
\int_U df \wedge \omega \ = \  \lim_{\ep\to 0}\int_U \chi_\ep df \wedge \omega
\ = \ \lim_{\ep\to 0}\int_U f d\chi_\ep\wedge \omega 
\]
since $\omega$ is closed. Since $f$ is smooth and $f=0$ on $\partial U$, $|f d\chi_\ep| \le (C\ep)(C/\ep) \le C$
when $\dist(x,\partial U)< \ep$, 
so the right-hand side is bounded by $|\omega|( \mbox{supp} \, d\chi_\ep)$. Since $|\omega|$ has finite total mass by assumption, we easily conclude that there exists a sequence $\ep_k\searrow 0$ such that 
$\lim_{k\to \infty}\int_U \chi_{\ep_k} df \wedge \omega=0$, proving \eqref{omegatopclaim}.

It follows from \eqref{omegatopclaim} that expressions such as $\int_{\partial U}\omega_\top$ are well-defined.

In this paper it will often be the case that $\omega_\top$ is
a measure supported on $\partial U$, and when this holds,
we may also think of $\omega_\top$ as a measure-valued $(n-1)$-form on $\partial U$. In particular,
if $\omega$ is smooth enough, then $\int f \omega_\top$ agrees with the
classical expression discussed above, $\int_{\bd U} f(x)i^* \omega(x)$, 
where  $i:\partial U\to \Omega$ is the inclusion map.
 
 \smallskip
 
\subsubsection{harmonic forms}
 If $d\omega = d^*\omega = 0$ then $\omega$ is said to be harmonic. Denote by 
 $$\mathcal H^k\equiv\mathcal H^k(\Omega):=\{ \omega\in L^2\cap C^\infty(\Lambda^k\Omega)\, ,\ \ d\omega=0\, ,\ d^*\omega=0\}$$ 
 the space of harmonic $k$-forms on $\Omega$, and by
 $$
 \mathcal H^k_\top=\{ \omega\in\mathcal H^k\, ,\ \omega_\top=0\}\, ,\qquad \mathcal H^k_N=\{ \omega\in\mathcal H^k\, ,\ \omega_N=0\},
 $$
 the spaces of harmonic forms with vanishing tangential and normal components on $\bd\Omega$. Since $\star \omega_N=(\star\omega)_\top$ and $\star\star=(-1)^{k(n-k)}$, we have the bijections
 $$
 \star:\mathcal H^k_\top\to\mathcal H^{n-k}_N\, ,\qquad \star:\mathcal H^k_N\to\mathcal H^{n-k}_\top\, .
 $$ 
 Harmonic forms in $\mathcal H^k_\top\cup H^k_N$ are smooth up to $\partial\Omega$. Denote by $H(\omega)$ (resp. $H_\top(\omega)$, $H_N(\omega)$) the orthogonal projection of a $k$-form $\omega$ on $\mathcal H^k$ (resp. $\mathcal H^k_\top$, $\mathcal H^k_N$). With respect to an orthonormal basis $\{\gamma_i\}_{i=1,\dots,\ell}$ of $\mathcal H^k$ (resp. $\mathcal H^k_\top$, $\mathcal H^k_N$), the orthogonal projection is of course given by  $\sum_{i=1}^\ell \langle \omega\, ,\gamma_i\rangle\, \gamma_i$.
 
 The Laplace operator $-\Delta=dd^*+d^*d$ on smooth $k$-forms is  positive semidefinite, commutes with $\star$, $d$, $d^*$, and $h\in\mathcal H^k\Rightarrow -\Delta h=0$.
 

\subsection{Hodge decompositions}\label{sect:hodge} For $\omega\in L^p(\Lambda^k\Omega)$, $1<p<+\infty$, we have the following Hodge decomposition, orthogonal with respect to $\langle\cdot\, ,\cdot\rangle$ (see e.g. \cite{iss}, Theorem 5.7, or \cite{M} for $p\ge 2$):
 \beq\label{eq:hodge}
 \omega=\gamma+d\alpha+d^*\beta\, ,
 \eeq
 where
 \beq\label{eq:hodgen}
  \gamma\in\mathcal H^k_N,\ \alpha\in W^{1,p}(\Lambda^{k-1}\Omega),\ \beta\in W^{1,p}(\Lambda^{k+1}\Omega),\ \beta_N=0.
 \eeq
Then $\gamma = H_N(\omega)$.  Moreover there exists a unique $\Psi\in W^{2,p}(\Lambda^k\Omega)$ such that
 \beq\label{eq:potential}
 -\Delta\Psi=\omega-H_N(\omega)\, ,\quad \Psi_N=0,\quad (d\Psi)_N=0\, ,
 \eeq
 and
 \beq\label{eq:ellipticestimates}
 || d\Psi ||_{1,p}+ || d^*\Psi ||_{1,p}\le C_p || \omega ||_p\, .
 \eeq
 We will write
 $
 \Psi=-\Delta_N^{-1}(\omega-H_N(\omega))\, .
 $
 
 \medskip
 We may also decompose $\omega=\gamma+d\alpha+d^*\beta$ with
 \beq\label{eq:hodget}
 \gamma\in\mathcal H^k_\top,\ \alpha\in W^{1,p}(\Lambda^{k-1}\Omega),\ \beta\in W^{1,p}(\Lambda^{k+1}\Omega),\ \alpha_\top=0,
 \eeq
so that $\gamma=H_\top(\omega)$. In this case there exists a unique $\Psi\in W^{2,p}(\Lambda^k\Omega)$ such that
 \beq\label{eq:potentialt}
 -\Delta\Psi=\omega-H_\top(\omega)\, , \quad \Psi_\top=0,\quad (d^*\Psi)_\top=0\, .
 \eeq
Moreover, \eqref{eq:ellipticestimates} holds. We write in this case $\Psi=-\Delta^{-1}_\top(\omega-H_\top(\omega))$.
 
 The operator $-\Delta^{-1}_\top$ is self-adjoint on $\calH_\top^\perp$, and similarly
 $-\Delta^{-1}_N$ is self-adjoint on $\calH_N^\perp$.
  
 \begin{remark}\label{rem:hodgern}{\rm In case $\Omega=\R^n$, basic properties of harmonic functions imply that $\mathcal H^k = \{0\}$. For $\omega$ compactly supported  the potential $\Psi$ is given in particular by $\Psi=G*\omega$, where $G(x)=c_n|x|^{n-2}$ is the Poisson kernel on $\R^n$, $n\ge 3$. The Hodge decomposition of $\omega$ reads $\omega=d\alpha+d^*\beta$ with $\beta=G*d\omega$ and $\alpha=G*d^*\omega$. In this case $\alpha, \beta\in \dot W^{1,p}$ rather than $W^{1,p}$.
 }
 \end{remark}
 \medskip
 
 For $\omega\in L^1(\Lambda^k\Omega)$ or more generally a measure-valued $k$-form, the decomposition \eqref{eq:hodge} fails in general, but decompositions of the form \eqref{eq:potential}, \eqref{eq:potentialt} still hold, in view of this variant of \cite{bo}, Theorem 2.10:
  \begin{proposition}\label{prop:2.10} Let $\mu$ be a measure-valued $k$-form in $\Omega$. If 
$H_N(\mu)=0$, there exists a unique $\Psi\in W^{1,q}(\Lambda^k\Omega)$ $\forall\, q<n/(n-1)$, denoted by $\Psi=-\Delta^{-1}_N(\mu)$, such that 
 $$
 -\Delta\Psi=\mu\, ,\quad \Psi_N=0,\quad (d\Psi)_N=0\, ,
 $$
 so that in particular $H_N(\Psi)=0$. 
 
 \medskip
 \noindent
 If $H_\top(\mu)=0$, then there exists a unique $\Psi\in W^{1,q}(\Lambda^k\Omega)$ $\forall\, q<n/(n-1)$, denoted by $\Psi=-\Delta^{-1}_\top(\mu)$, such that
 $$
 -\Delta\Psi=\mu\, ,\quad \Psi_\top=0,\quad (d^*\Psi)_\top=0\, ,
 $$
 and in particular $H_\top(\Psi)=0$.
 
 \medskip
 \noindent
 In both cases, we have  
 \beq\label{eq:stampac2}
 ||d\Psi ||_q+|| d^*\Psi||_q\le C_q ||\mu||\qquad\forall q<\frac{n}{n-1}\, .
 \eeq 
 \end{proposition}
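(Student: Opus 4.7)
The plan is to extend the $L^p$-Hodge theory of \eqref{eq:hodge}--\eqref{eq:ellipticestimates} to measure data via approximation and duality, with the key estimate \eqref{eq:stampac2} coming from the Sobolev embedding $W^{1,q'}(\Omega)\hookrightarrow C^0(\bar\Omega)$ that holds exactly when $q'>n$, i.e.\ $q<n/(n-1)$. I treat only the case $H_N(\mu)=0$; the tangential case is identical with $\Delta_\top^{-1}$ in place of $\Delta_N^{-1}$.

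First, I would regularize $\mu$ by a boundary-respecting mollification (for instance via local reflection near $\partial\Omega$) to obtain $\mu_\ep\in C^\infty(\Lambda^k\bar\Omega)$, then correct by subtracting the smooth finite-dimensional projection $H_N(\mu_\ep)$ (which tends to $0$ in norm since $H_N(\mu)=0$), so that $\mu_\ep\rightharpoonup\mu$ weakly as measures, $H_N(\mu_\ep)=0$, and $\|\mu_\ep\|\le\|\mu\|+o(1)$. For each $\ep$ the classical Hodge theory \eqref{eq:potential}--\eqref{eq:ellipticestimates} yields a unique $\Psi_\ep\in\bigcap_{p>1}W^{2,p}(\Lambda^k\Omega)$ with $-\Delta\Psi_\ep=\mu_\ep$, $(\Psi_\ep)_N=(d\Psi_\ep)_N=0$, and $H_N(\Psi_\ep)=0$.

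The heart of the argument is an $\ep$-uniform bound $\|\Psi_\ep\|_{L^q}+\|d\Psi_\ep\|_{L^q}+\|d^*\Psi_\ep\|_{L^q}\le C_q\|\mu\|$ for $q<n/(n-1)$. I would argue by duality: for a test form $\phi\in C^\infty(\Lambda^k\bar\Omega)$ with $\|\phi\|_{L^{q'}}\le 1$, set $\tilde\phi:=-\Delta_N^{-1}(\phi-H_N\phi)\in W^{2,q'}(\Lambda^k\Omega)$, which by Sobolev embedding lies in $C^0(\bar\Omega)$ with $\|\tilde\phi\|_\infty\le C\|\phi\|_{L^{q'}}$, thanks to $q'>n$ and \eqref{eq:ellipticestimates}. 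Since both $\Psi_\ep$ and $\tilde\phi$ satisfy the normal Hodge boundary conditions $\omega_N=(d\omega)_N=0$, the Hodge Laplacian $-\Delta_N$ acts self-adjointly on this pair, so Green's identity yields
\[
\langle\Psi_\ep,\phi\rangle\;=\;\langle\Psi_\ep,-\Delta\tilde\phi\rangle\;=\;\langle-\Delta\Psi_\ep,\tilde\phi\rangle\;=\;\langle\mu_\ep,\tilde\phi\rangle\;\le\; C\|\mu\|\,\|\phi\|_{L^{q'}},
\]
using $\langle\Psi_\ep,H_N\phi\rangle=0$ since $H_N(\Psi_\ep)=0$. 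Taking the supremum over $\phi$ yields the $L^q$ bound on $\Psi_\ep$. Analogous duality computations for $d\Psi_\ep$ and $d^*\Psi_\ep$---testing against smooth $(k+1)$- and $(k-1)$-forms and routing through the Hodge theory of the appropriate degree---give the matching derivative bounds and hence \eqref{eq:stampac2}.

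Finally, by weak compactness in $W^{1,q}$ I would extract a subsequence $\Psi_\ep\rightharpoonup\Psi$, verify in distributions that $-\Delta\Psi=\mu$, $\Psi_N=(d\Psi)_N=H_N(\Psi)=0$ (each condition passing to the limit by continuity of the trace and projection operators), and conclude \eqref{eq:stampac2} by weak lower semicontinuity. Uniqueness follows because the difference $w$ of two solutions satisfies $-\Delta w=0$ with the full normal boundary conditions and $H_N(w)=0$, and self-adjointness of $-\Delta_N$ forces $w\in\mathcal H^k_N\cap(\mathcal H^k_N)^\perp=\{0\}$. The main obstacle I expect is the duality step for $d\Psi_\ep$ and $d^*\Psi_\ep$: the double integration by parts must produce no boundary contributions, which succeeds only because of the exact matching of the normal Hodge boundary conditions on both $\Psi_\ep$ and the auxiliary potential $\tilde\phi$---a geometric feature specific to this boundary-value formulation.
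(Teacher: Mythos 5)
Your proposal follows essentially the same route as the paper, which simply invokes a Stampacchia-type duality argument from \cite{bo} together with the elliptic estimates \eqref{eq:ellipticestimates} and the self-adjointness of $-\Delta_N$, $-\Delta_\top$; the mollification and limit-passage you add is a standard variant rather than a genuinely different strategy. One small clarification: the duality computation you write out explicitly bounds $\|\Psi_\ep\|_{L^q}$ via $\tilde\phi\in W^{2,q'}\hookrightarrow C^0$, which only needs $q'>n/2$; the constraint $q'>n$ (i.e.\ $q<n/(n-1)$) that you correctly announce at the outset actually arises in the estimates for $d\Psi_\ep$ and $d^*\Psi_\ep$, where the auxiliary potential is $-\Delta_N^{-1}d^*\phi\in W^{1,q'}$ (since $d^*\phi$ is only in $W^{-1,q'}$), so one genuinely needs $W^{1,q'}\hookrightarrow C^0$ there, and the boundary terms must be killed by testing against compactly supported forms --- this is precisely the step you defer to ``analogous,'' and it is the one that carries the weight of \eqref{eq:stampac2}.
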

  \begin{proof} The proof of Proposition \ref{prop:2.10} follows exactly the duality argument {\it \`a la} Stampacchia carried out in \cite{bo}, taking into account the elliptic estimates \eqref{eq:ellipticestimates} for the operators $-\Delta_N$ and $-\Delta_\top$, and observing that they are self-adjoint.
\end{proof}

\begin{corollary}\label{cor:poincare} 
A measure-valued $k$-form $\mu$ is exact if and only if
$d\mu = 0$ and $H_N(\mu)=0$. In addition, if $\mu$ is exact then 
$\mu = d\zeta$, for $\zeta := d^*(-\Delta_N)^{-1}\mu \in \cap_{1\le q <n/n-1} L^q(\Lambda^{k-1}(\Omega))$, and $||\zeta||_{q}\le C_q||\mu||$.

Similarly, a measure-valued $k$ form $\mu$ is co-exact (that is, can be written $\mu = d^*\psi$ for some
measure-valued $k+1$-form $\psi$) if and only if  $d^*\mu=0$ and $H_\top(\mu)=0$, and
if these conditions hold, then $\mu = d^*\zeta$ for $\zeta = d(-\Delta_\top)^{-1}\mu
\in \cap_{1\le q <n/n-1} L^q(\Lambda^{k+1}\Omega)$,
and $||\zeta||_{q}\le C_q||\mu||$.
\end{corollary}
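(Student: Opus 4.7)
The plan is to extract the corollary from Proposition \ref{prop:2.10}. I treat the exact case in detail; the co-exact assertion follows by applying the Hodge star, which interchanges $d$ with $\pm d^*$, $\mathcal{H}^k_N$ with $\mathcal{H}^{n-k}_\top$, and $-\Delta_N^{-1}$ with $-\Delta_\top^{-1}$.

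For necessity, suppose $\mu = dw$ distributionally. Then $d\mu = d^2w = 0$ trivially. To verify $H_N(\mu) = 0$, fix $\gamma \in \mathcal{H}^k_N$ and a cutoff family $\chi_\varepsilon \in C^\infty_c(\Omega)$ with $\chi_\varepsilon \to 1$ pointwise. Since $\chi_\varepsilon\gamma$ is a legitimate test form,
\begin{equation*}
\langle\mu,\chi_\varepsilon\gamma\rangle = \langle w, d^*(\chi_\varepsilon\gamma)\rangle = \pm\langle w, \star(d\chi_\varepsilon\wedge\star\gamma)\rangle,
\end{equation*}
using $d^*\gamma = 0$ in $\Omega$. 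The condition $\gamma_N = 0$ is equivalent to $(\star\gamma)_\top = 0$ on $\partial\Omega$, which makes $(\star\gamma)_\top$ of size $O(\mathrm{dist}(x,\partial\Omega))$ in the shell where $d\chi_\varepsilon$ is supported; hence $d\chi_\varepsilon\wedge\star\gamma$ stays uniformly bounded there. Since $|w|$ has finite total mass and assigns arbitrarily small mass to such shells, the right-hand side vanishes as $\varepsilon\to 0$, while dominated convergence gives $\langle\mu,\chi_\varepsilon\gamma\rangle\to\langle\mu,\gamma\rangle$; therefore $\langle\mu,\gamma\rangle = 0$.

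For sufficiency, set $\Psi := -\Delta_N^{-1}\mu$ from Proposition \ref{prop:2.10}, so that $\Psi\in W^{1,q}(\Lambda^k\Omega)$ for every $q<n/(n-1)$, with $\Psi_N=0$, $(d\Psi)_N=0$, and $\|d\Psi\|_q+\|d^*\Psi\|_q\le C_q\|\mu\|$. Define $\zeta:=d^*\Psi$; the $L^q$ bound for $\zeta$ is then immediate. Since
\begin{equation*}
\mu = -\Delta\Psi = dd^*\Psi + d^*d\Psi = d\zeta + d^*d\Psi,
\end{equation*}
the identity $d\zeta=\mu$ reduces to showing $d\Psi = 0$ (which forces $d^*d\Psi=0$ automatically). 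I would prove $d\Psi = 0$ by applying the uniqueness part of Proposition \ref{prop:2.10} to the $(k+1)$-form $d\mu = 0$: commuting $d$ and $\Delta$ distributionally gives $-\Delta(d\Psi) = d\mu = 0$; the conditions $(d\Psi)_N = 0$ and $(d(d\Psi))_N = 0$ hold by Proposition \ref{prop:2.10} and $d^2=0$; and a cutoff-Stokes argument parallel to the necessity proof, now using $\Psi_N = 0$ in place of $\gamma_N = 0$, establishes $H_N(d\Psi) = 0$. Uniqueness then forces $d\Psi$ to equal the unique Neumann potential of zero, namely zero itself.

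The main technical obstacle I anticipate is that Proposition \ref{prop:2.10} states uniqueness within $W^{1,q}$, whereas a priori $d\Psi$ only lies in $L^q$. To close this gap I would either upgrade the uniqueness assertion to the $L^q$ class (any weakly $\Delta$-harmonic form in $L^q$ with $\omega_N=0$, $(d\omega)_N=0$, and $H_N(\omega)=0$ must vanish, by a standard duality against smooth solutions of an auxiliary boundary-value problem), or proceed by approximation: mollify $\mu$ to smooth closed $H_N$-orthogonal $\mu_\varepsilon\to\mu$, run the argument classically to obtain $d\Psi_\varepsilon = 0$ and $d\zeta_\varepsilon=\mu_\varepsilon$, and pass to the distributional limit using the uniform bounds \eqref{eq:stampac2}.
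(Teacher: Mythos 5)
Your argument is correct and follows essentially the same route as the paper: for necessity you use the same cutoff-against-$\gamma\in\calH^k_N$ argument, and for sufficiency you set $\zeta=d^*\Psi$ with $\Psi=-\Delta_N^{-1}\mu$ (the paper compresses that direction to a single phrase, ``it follows that $\mu=d\zeta$''). Two small remarks. First, in your cutoff-Stokes argument for $H_N(d\Psi)=0$, the vanishing that makes $d\chi_\ep\wedge\star(\cdot)$ uniformly bounded must come from the harmonic test form $\gamma\in\calH^{k+1}_N$ (i.e.\ $\gamma_N=0$, hence $(\star\gamma)_\top=0$), not from $\Psi_N=0$; the step is exactly parallel to the necessity step with $\Psi$ in the role of $w$ and $\gamma\in\calH^{k+1}_N$ in the role of $\gamma\in\calH^k_N$. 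Second, you correctly flag that Proposition~\ref{prop:2.10} gives uniqueness only within $W^{1,q}$ while $d\Psi$ is a priori only in $L^q$; your first proposed repair (a duality argument pairing a harmonic $L^q$ form against smooth solutions of the adjoint Neumann problem) is precisely the Stampacchia-type duality used to prove Proposition~\ref{prop:2.10} itself, so it is the natural way to close the gap and is consistent with the paper's intent.
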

 
\begin{proof}
If $d\mu = 0$ and $H_N(\mu) = 0$ then we appeal to Proposition \ref{prop:2.10} and define $\zeta=d^*(-\Delta_N^{-1}\mu)$, and it follows that $\mu = d\zeta$.
Conversely, $\mu= d\psi$ in $\Omega$ for some measure-valued $k-1$-form $\psi$, then
it is clear that $d\mu=0$ in $\Omega$, and if $\varphi \in \mathcal{H}_N^k$, then for $\chi_\ep$ as in
the proof of \eqref{omegatopclaim}, 
\[
\int \phi\cdot \mu = 
\lim_{\ep\to 0} \int \chi_\ep \varphi \cdot d\psi = 
\lim_{\ep\to 0} \int d^*( \chi_\ep \varphi ) \cdot \psi .
\]
Next, the fact that $\varphi\in \mathcal{H}_N^k$ and properties of $\chi_\ep$ imply that
$|d^*( \chi_\ep \varphi )| = |d\chi_\ep \wedge \star \varphi| \le C$, independent of $\ep$. We then conclude as in the proof of \eqref{omegatopclaim} that $\int \phi\cdot \mu = 0$, and hence that $H_N(\mu) =0$.

The assertions about co-exact forms are proved in exactly the same way.
 \end{proof}
 \begin{remark}\label{rem:poincare}{\rm In case $\Omega=\R^n$, $\mu$ compactly supported, we have in particular
$\zeta=d^*(G*\mu)$ (resp. $\zeta=d(G*\mu)$).
}
\end{remark}

 \begin{remark}\label{rem:weaktrace}
If $\varphi$ is a smooth $k$-form and $\varphi_N = 0$ (resp. $\varphi_\top=0$), then $(d^*\varphi)_N = 0$ (resp. $(d\varphi)_\top=0$).
The form $\zeta$ of Corollary \ref{cor:poincare} is only in $L^q$, and so does not have
a normal (resp. tangential) trace, but can be shown to satisfy $\zeta_N = 0$ (resp. $\zeta_\top=0$) in a sort of distributional sense, as a consequence of the fact that $\zeta = d^*\Psi$ (resp. $\beta=d\Psi$)
for $\Psi=-\Delta^{-1}_N\mu\in W^{1,q}$, with $\Psi_N=0$ (resp. $\Psi=-\Delta^{-1}_\top\mu$, $\Psi_\top=0$).

This distributional trace (of which our definition \eqref{qtop.def1} of $q_\top$ for a closed measure-valued $n-1$-form $q$ is a special case) 
is strong enough to provide uniqueness assertions in the setting of Corollary \ref{cor:poincare}. For example, if $d\mu = 0$, then there is a {\em unique} $\zeta\in  L^q(\Lambda^{k-1}\Omega)$
satisfying $d\zeta = \mu, d^*\zeta = 0$, and  $\zeta_N=0$ in the distributional sense.
\end{remark}

\begin{remark}\label{rem:green} Through the Green operators $-\Delta^{-1}_N$ (resp. $-\Delta^{-1}_\top$), one obtains an integral expression for the linking number of a $k$-cycle and a (relative) $(n-k-1)$-boundary (resp. a relative $k$-cycle with a $(n-k-1)$-boundary) in $\Omega$ (see e.g. \cite{dR}). Let for instance $\Gamma$ be a relative $(n-k-1)$-boundary in $\Omega$, i.e.
$\Gamma=\bd R+\Gamma^\prime$ with $R\subset\Omega$ and $\Gamma^\prime\subset\bd\Omega$.
One immediately verifies that $H_\top(\widehat{\Gamma})=0$, and hence $H_N(\star\widehat{\Gamma})=0$.
Let $\beta=-\Delta^{-1}_N(\star\widehat{\Gamma})$. Hence we have $d^*\beta\in L^p(\Lambda^1\Omega)$ for $p< \frac n{n-1}$ and $\beta$ is smooth outside $\Gamma$. Hence, for a $k$-cycle $\gamma\subset{\Omega\setminus\Gamma}$ we have $0=\widehat{\bd\gamma}=d^*\widehat{\gamma}$, and moreover
\beq\label{eq:linking}
\begin{aligned}
\int_\gamma d^*\beta  &=\langle d^*\Delta_N^{-1}(\star \hat \Gamma)\, ,\widehat{\gamma}\rangle=\langle \widehat{\Gamma}\, , \star d(-\Delta^{-1}_N\widehat{\gamma})\rangle=\langle \widehat{\bd R}\, , \star d(-\Delta^{-1}_N\widehat{\gamma})\rangle\\
&=\langle \widehat{R}\, , \star d^*d(-\Delta^{-1}_N\widehat{\gamma})\rangle=\langle \widehat {R}\, , \star \widehat {\gamma} +\star\Delta^{-1}_N (dd^*\widehat{\gamma})\rangle\\
&=\langle\widehat{R}\, ,\star\widehat{\gamma}\rangle=\langle \widehat{\gamma}\rest R\, , \star 1\rangle=\sum_{a_i\in \gamma\cap R}\star(\tau_\gamma\wedge\star\tau_R(a_i))\in\Z\, .\\
\end{aligned}
\eeq
Observe that in case $\Gamma=\bd R\subset\Omega$ is a $(n-k-1)$-boundary in $\Omega$, we have $H(\widehat\Gamma)=0$, hence we may consider $\beta=-\Delta^{-1}(\star\widehat\Gamma)=G*(\star\widehat\Gamma)$ with $G$ the Poisson kernel in $\R^n$, and deduce for $d^*\beta$  the integral representation
\beq\label{eq:biotsavart1}
 d^*\beta=G*(\star d\widehat{\Gamma})=(\star dG)*\widehat{\Gamma}=\int_{\Gamma}\star dG(x-\cdot), 
\eeq
which in the case $n=3$, $k=1$ reads more familiarly
\beq\label{eq:biotsavart2}
d^*\beta=\sum_{i,j,k =1}^3 {4\pi} dx^i \ \ep_{ijk}\int_{\Gamma_h^\ell}\frac{(x_j-y_j)dy^k}{|x-y|^3}\, .
\eeq
Following \eqref{eq:linking}, we thus deduce the Biot-Savart formula for the linking number link$(\Gamma\, ,\gamma)$ of $\Gamma=\bd R$ with a $k$-cycle $\gamma$ in $\Omega$, namely
\beq\label{eq:biotsavart3}
\int_\gamma d^*\beta=\int_{\gamma_x}\int_{\Gamma_y}\star dG(x-y)=\langle \widehat R\, , \star\widehat\gamma\rangle=\sum_{a_i\in \gamma\cap R}\star(\tau_\gamma\wedge\star\tau_R(a_i))\in\Z\, .
\eeq
\end{remark}
Notice that the integral formula \eqref{eq:biotsavart3} gives link$(\Gamma\, ,\gamma)$ also when $\Gamma$ is just a cycle, i.e. $\bd\Gamma=0$, not necessarily a boundary. In fact, considering $\gamma\times \Gamma\subset \R^n_x\times\R^n_y$, we have $\bd (\gamma\times\Gamma)=0$ in $\R^n\times\R^n$, and $\star dG(x-y)=|S^{n-1}|^{-1}\cdot\psi^*(d\sigma)$, where $\psi:\gamma\times\Gamma\to S^{n-1}\subset\R^n$ is given by $\psi(x,y)=\frac{x-y}{|x-y|}$ and $d\sigma$ is the volume form of $S^{n-1}$. Hence
\beq\label{eq:biotsavart4}
\int_{\gamma_x}\int_{\Gamma_y}\star dG(x-y)=\frac{1}{|S^{n-1}|}\int_{\gamma\times \Gamma} \psi^*(d\sigma)=\text{deg}(\psi)\in\Z\, .
\eeq


\subsection{Representation of harmonic 1-forms.} We describe next the spaces $\mathcal H^1_N$, (resp. $\mathcal H^1_\top$), of harmonic $1$-forms on $\Omega\subset\R^n$ with zero normal (resp. tangential) component on $\bd\Omega$. Since $\mathcal H^{n-1}_N=\star \mathcal H^{1}_\top$ (resp. $\mathcal H^{n-1}_\top=\star \mathcal H^{1}_N$), this yields also a representation for harmonic $(n-1)$-forms.

\begin{lemma} {\rm(Description of $\calH^1_\top$).}
Let $(\partial \Omega)_i$, $i=0,\ldots,b$  denote the connected components of
$\partial \Omega$. Then
$\gamma\in \calH^1_\top$ if and only there exist constants $c_1,\ldots, c_b$ such that
$\gamma = d\phi$, where $\phi$ is the unique harmonic function in $\Omega$ such that
$\phi \equiv c_i$ on $(\partial \Omega)_i$ for $i\ge 1$, and $\phi = 0$
on $(\partial \Omega)_0$.
\label{L:H2N}\end{lemma}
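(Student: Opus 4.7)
The plan is to treat the two implications separately, with the ``only if'' direction being the substantive one.

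For the ``if'' direction, suppose $\phi$ is the harmonic function described. Then $\gamma := d\phi$ is automatically closed, and $d^*\gamma = d^* d \phi = -\Delta \phi = 0$ since $\phi$ is harmonic. Finally, since $\phi|_{(\partial\Omega)_i}$ is the constant $c_i$, the identity $(d\phi)_\top = d_{\partial\Omega}(\phi|_{\partial\Omega})$ yields $\gamma_\top = 0$ on each boundary component, so $\gamma \in \calH^1_\top$. Uniqueness of $\phi$ for prescribed boundary constants is immediate from the maximum principle.

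For the ``only if'' direction, given $\gamma \in \calH^1_\top$, I would first show that $\gamma$ is exact on $\Omega$ by proving $\int_c \gamma = 0$ for every smooth $1$-cycle $c\subset\Omega$. Since $H_1(\R^n)=0$, such a $c$ bounds an oriented smooth $2$-chain $S\subset \R^n$, and by Sard/transversality we may assume $S$ is transverse to $\partial\Omega$, so that $S\cap \partial\Omega$ is a smooth $1$-submanifold and $S_\Omega := S\cap\bar\Omega$ is a smooth $2$-manifold with boundary $c\cup (S\cap\partial\Omega)$ (with the induced orientations). Stokes' theorem then gives
\[
\int_c \gamma \ = \ \int_{S_\Omega} d\gamma \ - \ \int_{S\cap\partial\Omega} \gamma_\top \ = \ 0,
\]
since $d\gamma=0$ in $\Omega$ and $\gamma_\top = 0$ on $\partial\Omega$. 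This establishes exactness, so $\phi(x):= \int_{x_0}^x\gamma$ is a well-defined single-valued function on each connected component of $\Omega$, and $d\phi = \gamma$. Harmonicity of $\phi$ follows from $-\Delta\phi = d^*d\phi = d^*\gamma = 0$. The identity $0=\gamma_\top = d_{\partial\Omega}(\phi|_{\partial\Omega})$ forces $\phi|_{(\partial\Omega)_i}$ to equal a constant $c_i'$ on each boundary component; subtracting the constant $c_0'$ gives a harmonic primitive with $c_0=0$ and $c_i := c_i'-c_0'$ for $i\ge 1$, as required. Uniqueness of $\phi$ for the resulting boundary datum again follows from the maximum principle.

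The main obstacle is the careful execution of the Stokes argument. Three points need attention: (i) that $\gamma$ is smooth up to $\partial\Omega$ so that $\gamma_\top$ is a classical form on $S\cap\partial\Omega$ — this follows from elliptic regularity for the overdetermined elliptic system $d\gamma = 0, \ d^*\gamma = 0,\ \gamma_\top = 0$ on a smooth domain; (ii) that standard approximation reduces the verification of $\int_c \gamma = 0$ to smooth cycles $c$ and smooth transverse bounding $2$-chains $S$, for which one can in fact take $S$ to be a smooth immersed surface since $\R^n$ is simply connected; and (iii) that $\Omega$ is connected (as is tacitly assumed throughout), so that a single basepoint $x_0$ suffices — otherwise the argument is applied componentwise.
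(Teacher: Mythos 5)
Your proof is correct but takes a genuinely different and more elementary route than the paper's. The paper argues by dimension counting: it cites the Hodge--Morrey--Friedrichs isomorphism $\calH^1_\top\simeq H^1_{dR}(\Omega,\partial\Omega)$, computes $H^1_{dR}(\Omega,\partial\Omega)\simeq\R^b$ using the long exact sequence of the pair $(\bar\Omega,\partial\Omega)$ together with a Mayer--Vietoris argument in $\R^n$ (Lemma~\ref{lem:2completed}), and then notes that the $b$-parameter family $\{d\phi\}$ described in the statement spans a $b$-dimensional subspace of $\calH^1_\top$, so it must be all of it. You instead re-prove the relevant piece of that isomorphism by hand: given $\gamma\in\calH^1_\top$, you check $\int_c\gamma=0$ for every $1$-cycle $c\subset\Omega$ by capping $c$ with a $2$-chain $S$ in $\R^n$ (using $H_1(\R^n)=0$), making $S$ transverse to $\partial\Omega$, and applying Stokes on $S\cap\bar\Omega$, where the boundary contribution from $S\cap\partial\Omega$ vanishes because $\gamma_\top=0$. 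This is self-contained modulo the boundary regularity of harmonic fields (recorded in the paper's appendix and needed by either approach to make $\gamma_\top$ a classical form) and makes visible exactly which topological input is being used, namely simple connectivity of $\R^n$. The trade-off is that the paper's dimension count certifies surjectivity for free once $\dim\calH^1_\top=b$ is known, whereas you establish both inclusions directly; and the cohomological argument generalizes more mechanically to $\calH^k_\top$ for arbitrary $k$, while the direct Stokes argument is specific to $1$-forms. Both are valid; I find your version more transparent for this particular statement.
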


\begin{proof}
In fact $\calH^1_\top$ is isomorphic to the first relative de Rham cohomology group of $\Omega$,
that is $H^1_{dR}(\Omega;\partial \Omega)$,
(see for example \cite{GMS} vol. 1, Corollary 1, section 5.2.6) and $H^1_{dR}(\Omega,\partial\Omega)\simeq \R^{b}$, as it is shown in Lemma \ref{lem:2completed} below. Finally, the family of 1-forms described in the above statement span a $b$-dimensional subspace of $\mathcal H^1_\top$.
\end{proof}

\begin{lemma}\label{L:H1N}
{\rm (Description of $\calH^1_N$).}
Let $\kappa$ denote the dimension of $\calH^1_N$. Then there exists an
an orthogonal basis
$\{H_j\}_{j=1}^\kappa$  
for
$\calH^1_N$
normalized  so that for each $j$ there exists
a $\R/\Z$-valued function $\phi_j$ such that $H_j = d\phi_j$, so that
$e^{i2\pi \phi_j}$ is well-defined.
\end{lemma}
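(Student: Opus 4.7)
My plan is to combine Hodge theory for manifolds with boundary with de Rham's theorem, so as to identify $\calH^1_N$ with $H^1(\Omega;\R)$ via the period map, and then lift the integer sublattice coming from $H^1(\Omega;\Z)$ back to a lattice of harmonic representatives in $\calH^1_N$. The key observation is that a closed form $H\in\calH^1_N$ can be written as $d\phi$ for some $\R/\Z$-valued function $\phi$ precisely when all its periods $\int_\gamma H$ over $1$-cycles $\gamma\subset\Omega$ are integers.

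Concretely, standard Hodge theory (exactly parallel to the argument for Lemma \ref{L:H2N} above, see e.g.\ \cite{GMS} vol.~1, section~5.2.6) gives $\calH^1_N \cong H^1_{dR}(\Omega) \cong H^1(\Omega;\R)$, with the composite isomorphism realized by the period map $H \mapsto ([\gamma]\mapsto\int_\gamma H)$. In particular $\kappa = b_1(\Omega)$, and the image of $H^1(\Omega;\Z)$ modulo torsion pulls back to a full lattice $\Lambda\subset\calH^1_N$ of harmonic forms with all periods in $\Z$. Picking any $\Z$-basis $\{H_j\}_{j=1}^\kappa$ of $\Lambda$ and fixing a basepoint $x_0\in\Omega$, I would then set
\[
\phi_j(x):=\int_{x_0}^x H_j \pmod{\Z},
\]
with integration along any smooth path from $x_0$ to $x$ in $\Omega$. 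Closedness of $H_j$ together with the integer-period condition makes the ambiguity from different paths an element of $\Z$, so $\phi_j:\Omega\to\R/\Z$ is a well-defined smooth map with $d\phi_j=H_j$, and $e^{i 2\pi\phi_j}$ is a smooth $S^1$-valued function, as required.

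The main obstacle is the orthogonality claim. A lattice $\Lambda$ in a finite-dimensional Euclidean space generically admits no $\Z$-basis that is simultaneously $L^2$-orthogonal, so one cannot hope to enforce both conditions literally by the same choice of $H_j$. I would address this by noting that in the actual applications in Section \ref{sect:3.6} only the basis property together with the $\R/\Z$-valued potentials is used (to expand $\gamma=\sum_j a_j\, d\phi_j$ and then round each $a_j$ to the nearest integer multiple of $h$), so the construction above is already sufficient. If true $L^2$-orthogonality is required for some purpose, one can perform Gram--Schmidt inside $\calH^1_N$ and re-express each resulting orthogonal vector as a real linear combination of the $d\phi_j$; either the $H_j$ or the orthogonalized forms can play the role demanded by the lemma, depending on which feature is needed in the argument at hand.
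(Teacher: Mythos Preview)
Your argument is essentially the same as the paper's: identify $\calH^1_N\cong H^1_{dR}(\Omega)\cong \operatorname{Hom}(H_1(\Omega;\Z),\R)$, take the basis $\{H_j\}$ dual to a $\Z$-basis of $H_1(\Omega;\Z)$ so that all periods are integers, and define $\phi_j(x)=\int_{x_0}^x H_j\pmod\Z$. Your concern about orthogonality is well taken and in fact applies to the paper too: the paper's proof constructs exactly this dual basis and never verifies $L^2$-orthogonality, and as you correctly note, the application in Section~\ref{sect:3.6} only uses the basis property and the $\R/\Z$-valued potentials, not orthogonality.
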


\begin{proof}
In fact $\calH^1_N$ is isomorphic to the first de Rham cohomology group  $H^1_{dR}(\Omega)$, which  in 
turn is isomorphic to Hom$(H_1(\Omega, \Z), \R)$, and these are all finitely generated.
(See e.g. \cite{GMS} vol.1, Corollary 1 in section 5.2.6 and Theorem 3 in Section 5.3.2). 
It follows that if   $\{\gamma_i\}_{i=1}^\kappa$ are cycles that
form a basis for $H_1(\Omega; \Z)$, then 
there exists a (unique) basis $\{H_i\}_{i=1}^\kappa$ for $\calH^1_N$
such that $\int_{\gamma_j}H_j  = \delta_{ij}$ for $i,j=1,\ldots, \kappa$.
We now fix $x_0\in \Omega$ and define $\phi_j(x) := \int_{\gamma(x_0,x)}H_j$,
$j=1\ldots,\kappa$, where $\gamma(x_0,x)$ is any path in $\Omega$ that starts at $x_0$ and ends at $x$.
If $\gamma'(x_0,x)$ is another such path, then $\gamma(x_0,x) - \gamma'(x_0,x)$ is homologous to an integer linear combination of the $\gamma_i$'s, so that $ \int_{\gamma(x_0,x)}H_j- \int_{\gamma'(x_0,x)}H_j 
\in \Z$.
Thus $\phi_j$ is well-defined as a function $\Omega\to \R/ \Z$. It is immediate that
$H_j = d\phi_j$.  \end{proof}

\begin{remark}\label{rem:restrictedharmonic} Although this fact is not needed in this paper,
we remark that if $H\in\mathcal H^{k}_N$, and $K=H_\top=H\rest\bd\Omega$ is its tangential component on $\bd\Omega$, then $K$ is a harmonic $k$-form in $\bd\Omega$. (A special case of this fact is used in the proof of Lemma \ref{L:H1N} above.) Indeed, since $dH=0$ and $(dH)_N=0$ we have $dK=(dH)_\top=dH-(dH)_N=0$. Moreover, one can check that $d\star_\top K=(d\star H)_\top$ since $H_N=0$, where $\star_\top$ denotes the star operator on the tangent space of $\bd\Omega$. Hence $d\star_\top K=0$ and the conclusion follows.
\end{remark}

We describe next an exactness criterion for closed $(n-1)$-forms in $\Omega\subset\R^n$. 

\begin{lemma}\label{lem:dp} A measure-valued $(n-1)$ form $q$ on a smooth bounded open set $\Omega\subset \R^n$
is exact if and only if 
$dq=0$ and  $\int_{(\bd\Omega)_i}q_\top=0$ for every connected component $(\bd\Omega)_i$ of $\bd\Omega$. \end{lemma}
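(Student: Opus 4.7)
\medskip
\noindent\textbf{Proof plan.}
The plan is to reduce the statement to Corollary \ref{cor:poincare}, which characterizes exactness of a measure-valued form $\mu$ by the pair of conditions $d\mu=0$ and $H_N(\mu)=0$. Since the condition $dq=0$ appears on both sides of the equivalence we wish to prove, everything will boil down to showing that, under the standing assumption $dq=0$, the vanishing of all the boundary fluxes $\int_{(\partial\Omega)_i}q_\top$ is equivalent to the vanishing of the harmonic projection $H_N(q)$, i.e., to $\langle q,h\rangle=0$ for every $h\in\calH^{n-1}_N$.

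For the ``$\Rightarrow$'' direction, I would write $q=d\psi$ with $\psi$ a measure-valued $(n-2)$-form. Then $dq=0$ is automatic from $d^2=0$. To verify the flux condition on a given component $(\partial\Omega)_i$, choose $f\in C^\infty(\bar\Omega)$ that equals $1$ in a neighborhood of $(\partial\Omega)_i$ and $0$ in a neighborhood of every other component, so that $df$ is smooth with compact support in $\Omega$. Then the definition \eqref{qtop.def1} together with a short Hodge-star manipulation gives $\int_{(\partial\Omega)_i}q_\top=\int_\Omega df\wedge q=\langle d\psi,\star df\rangle=\langle \psi, d^*\star df\rangle$, where in the last step we use the distributional duality recalled in Section \ref{sect:mvf} (legitimate because $\star df$ is compactly supported in $\Omega$). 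The right-hand side is zero because $d^*\star df=\pm\star d(df)=0$.

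For the nontrivial ``$\Leftarrow$'' direction, assume $dq=0$ and $\int_{(\partial\Omega)_i}q_\top=0$ for every $i$. To invoke Corollary \ref{cor:poincare} it is enough to show $\langle q,h\rangle=0$ for every $h\in\calH^{n-1}_N$. Combining the bijection $\star:\calH^{1}_\top\to\calH^{n-1}_N$ with the description of $\calH^1_\top$ from Lemma \ref{L:H2N}, I obtain a basis $\{\star d\phi^{(i)}\}_{i=1}^{b}$ of $\calH^{n-1}_N$, where each $\phi^{(i)}$ is the harmonic function on $\Omega$ with $\phi^{(i)}\equiv \delta_{ij}$ on $(\partial\Omega)_j$ for $j\ge 1$ and $\phi^{(i)}\equiv 0$ on $(\partial\Omega)_0$. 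Elliptic regularity up to the boundary makes $\phi^{(i)}\in C^\infty(\bar\Omega)$, so it is an admissible test function in \eqref{qtop.def1}. A short computation with $\star\star$ on $1$-forms identifies $\langle q,\star d\phi^{(i)}\rangle$ with $\pm\int_\Omega d\phi^{(i)}\wedge q$, which by \eqref{qtop.def1} equals $\sum_j \phi^{(i)}|_{(\partial\Omega)_j}\int_{(\partial\Omega)_j}q_\top$; by hypothesis, this vanishes for every $i$. Hence $H_N(q)=0$, and Corollary \ref{cor:poincare} gives exactness.

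The main point to be careful about is the integration-by-parts step for a measure-valued $q$ (and, in the forward direction, for a measure-valued $\psi$). These are in fact encoded into the setup already developed in the Appendix: the definition \eqref{qtop.def1} of $q_\top$ for a closed measure-valued $(n-1)$-form is tailored precisely so that $\int_{\partial\Omega}f\,q_\top=\int_\Omega df\wedge q$ for every $f\in C^\infty(\bar\Omega)$, and the distributional duality $\langle d\mu,\eta\rangle=\langle\mu,d^*\eta\rangle$ for compactly supported $\eta$ is exactly how $d$ was defined on measure-valued forms in Section \ref{sect:mvf}. Consequently no new technical machinery beyond the Appendix is needed.
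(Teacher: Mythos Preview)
Your proposal is correct and follows essentially the same route as the paper: reduce to Corollary~\ref{cor:poincare} by showing, via the description of $\calH^{n-1}_N=\star\calH^1_\top$ from Lemma~\ref{L:H2N} together with the definition \eqref{qtop.def1} and the property \eqref{omegatopclaim}, that (under $dq=0$) the vanishing of all boundary fluxes is equivalent to $H_N(q)=0$. The only minor difference is that for the forward implication you argue directly with a cutoff $f$ having $df\in C^\infty_c(\Omega)$, whereas the paper simply reads off both implications from the single identity $\langle q,\gamma\rangle=\sum_i c_i\int_{(\partial\Omega)_i}q_\top$ combined with Corollary~\ref{cor:poincare}.
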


\begin{proof} Let $\gamma\in \mathcal H^{n-1}_N$, so that $\star\gamma\in \mathcal H^1_\top$ and hence, by Lemma \ref{L:H2N}, $\star\gamma=d\varphi$, where $\Delta\varphi=0$ in $\Omega$ and $\varphi\equiv c_i$ on the $i$-th connected component $(\bd\Omega)_i$. Then
\beq
\begin{aligned}
\langle q\, ,\gamma\rangle &=\int_\Omega q\wedge\star\gamma=\int_\Omega q\wedge d\varphi 
\ \overset{\eqref{qtop.def1}, \eqref{omegatopclaim}}= \ \sum_{i=1}^b c_i\int_{(\bd\Omega)_i} q_\top
\end{aligned}
\eeq
We deduce that $H_N(q)=0$ if and only if $\int_{(\bd\Omega)_i} q_\top=0$ for every $i$. The conclusion now follows from Corollary \ref{cor:poincare}.
\end{proof}


\subsection{Proof of Lemma \ref{L:H2N} completed}  We  need the following easy result, whose proof uses the language of algebraic topology (see e.g. \cite{spa}).

\begin{lemma}\label{lem:2completed} Let $U$ be a connected Lipschitz domain in $\R^n$, such that $\partial U$ has $b+1$ connected components. Then $H^1_{dR}(U,\partial U)\simeq\R^{b}$.
\end{lemma}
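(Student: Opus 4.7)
The plan is to combine the long exact sequence of the pair $(U,\partial U)$ in de Rham cohomology with the Hodge theory developed earlier in the Appendix. The long exact sequence reads
\[
0 \to H^0_{dR}(U,\partial U) \to H^0_{dR}(U) \xrightarrow{r_0} H^0_{dR}(\partial U) \xrightarrow{\delta} H^1_{dR}(U,\partial U) \xrightarrow{j^*} H^1_{dR}(U) \xrightarrow{r_1} H^1_{dR}(\partial U).
\]
Since $U$ is connected and $\partial U\ne\emptyset$, one has $H^0_{dR}(U,\partial U)=0$, $H^0_{dR}(U)\simeq\R$, and $H^0_{dR}(\partial U)\simeq\R^{b+1}$, with $r_0$ the diagonal embedding sending a constant to its value on each boundary component. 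Therefore $\operatorname{coker}(r_0)\simeq\R^{b}$, and by exactness $\dim H^1_{dR}(U,\partial U) = b + \dim\ker r_1$, so the problem reduces to showing that $r_1$ is injective.

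For that step I would pass to harmonic representatives. Every class in $H^1_{dR}(U)$ is represented by some $\omega\in\mathcal{H}^1_N(U)$, i.e.\ $d\omega=d^*\omega=0$ in $U$ and $\omega_N=0$ on $\partial U$. By Remark~\ref{rem:restrictedharmonic} the tangential trace $i^*\omega=\omega_\top$ is then a harmonic $1$-form on the closed $(n-1)$-manifold $\partial U$. If $r_1[\omega]=[i^*\omega]=0$ in $H^1_{dR}(\partial U)$, then $\omega_\top=df$ is simultaneously harmonic and exact on a closed manifold; applying $d^*$ gives $\Delta f=0$ on each connected component of $\partial U$, so $f$ is locally constant and $df=0$. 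Hence $\omega_\top=0$ on $\partial U$, and combined with $\omega_N=0$ this forces $\omega$ to vanish identically on $\partial U$ as a $1$-form.

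To conclude, since $U\subset\R^n$ carries the flat Euclidean metric, a short calculation from $d\omega=0$ and $d^*\omega=0$ shows that each coordinate function $\omega_i$ in $\omega=\sum_i\omega_i\,dx^i$ is scalar-harmonic on $U$. The vanishing of $\omega$ on $\partial U$ gives $\omega_i=0$ on $\partial U$, and the maximum principle on the bounded connected set $U$ forces $\omega_i\equiv 0$. Hence $\omega\equiv 0$ in $U$, so $\ker r_1=0$ and $H^1_{dR}(U,\partial U)\simeq\R^{b}$, as claimed. The main delicacy I anticipate is the Lipschitz regularity of $\partial U$: Remark~\ref{rem:restrictedharmonic} and the Hodge-theoretic manipulations on $\partial U$ are cleanest for smooth boundaries, so for a general Lipschitz domain one should either invoke $L^2$ Hodge theory on Lipschitz closed hypersurfaces, or approximate $U$ from inside by smooth subdomains $U_k\Subset U$ (with the same number of boundary components as $U$) and pass to the limit.
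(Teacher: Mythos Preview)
Your argument is correct for smooth domains, and you rightly flag the Lipschitz regularity as the delicate point. The paper takes a purely topological route instead: it works with the long exact sequence in singular \emph{homology} for the pair $(\bar U,\partial U)$, and proves the dual of your injectivity statement --- namely that $i_*:H_1(\partial U)\to H_1(\bar U)$ is surjective, so that $\mathrm{Im}\,\Phi_*=0$ --- via Mayer--Vietoris with $V=\bar U$, $W=\R^n\setminus U$, using that $V\cup W=\R^n$ is contractible. It then dualizes by $H^1(U,\partial U;\R)=\mathrm{Hom}(H_1(U,\partial U),\R)$ and the de Rham isomorphism.

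Both arguments ultimately exploit that $U$ sits inside $\R^n$, but in different ways: you use the \emph{metric} structure (componentwise scalar harmonicity in the flat metric, then the maximum principle), while the paper uses only the \emph{topological} fact that $\R^n$ is contractible. The gain of the paper's route is robustness to low regularity: once $U$ is a Lipschitz neighborhood retract, all the homological machinery goes through unchanged, with no need for boundary Hodge theory or smooth approximation. Your route has the appeal of staying within the Hodge-theoretic framework already set up in the Appendix and avoiding Mayer--Vietoris altogether, but --- as you note --- it buys that at the cost of needing either an approximation argument or $L^2$ Hodge theory on Lipschitz closed hypersurfaces to make Remark~\ref{rem:restrictedharmonic} and the boundary-trace manipulations rigorous.
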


\begin{proof} From the exact sequence in singular homology for the pair $(\bar U\, ,\partial U)$ we have
\beq\label{eq:exact1}
H_1(\partial U)\xrightarrow{i_*} H_1(\bar U)\xrightarrow{\Phi_*} H_1(\bar U ,\partial U)\xrightarrow{\partial_*} H_0(\partial U)\xrightarrow{i_*^0} H_0(\bar U)\rightarrow 0
\eeq
which gives rise to the short exact sequence
\beq\label{eq:exact1'}
0\rightarrow \text{Im}\, \Phi_*\rightarrow H_1(\bar U,\partial U)\rightarrow \text{Ker}\, i_*^0\rightarrow 0\, .
\eeq
By hypothesis we have $H_0(U)=\Z$, $H_0(\partial U)=\Z^{b+1}$, and \eqref{eq:exact1} implies Ker$\, i^0_*=\Z^{b}$.
By the Mayer-Vietoris exact sequence for $V=\bar U$, $W=\R^n\setminus U$ we have
\beq\label{eq:exact2}
H_2(V\cup W)\rightarrow H_1(V\cap W)\xrightarrow{(i_*,i_*)} H_1(V)\oplus H_1(W) \rightarrow H_1(V\cup W)
\eeq
which yields, since $V\cup W=\R^n$ is contractible,
\beq\label{eq:exact3}
0\rightarrow H_1(\partial U)\xrightarrow{(i_*,i_*)} H_1(\bar U)\oplus H_1(\R^n\setminus\bar U)\rightarrow 0\, ,
\eeq
so that $(i_*,i_*)$ is an isomprphism. In particular $i_*=\pi_{1}\circ (i_*,i_*)$ is onto, hence $H_1(\bar U)=$Im$\, i_*$=Ker$\, \Phi_*$, which yields Im$\, \Phi_*=0$, so that \eqref{eq:exact1'} implies that $H_1(\bar U\, , \partial U)$ is isomorphic to Ker$\, i_*^0=\Z^{b}$. From the regularity assumption\footnote{actually it sufficient for $U$ to be a Lipschitz neighborhood retract in $\R^n$} on $U$ we have in particular $H_1(\bar U,\partial U)\simeq H_1(U,\partial U)$. Finally, from the relation
\beq\label{eq:cohomology}
H^1(U,\partial U;\R)=\text{Hom}\, ( H_1(U,\partial U);\R)=\text{Hom}\, (\Z^{b};\R)\simeq \R^{b}
\eeq
the conclusion follows, since the first singular relative cohomology group with real coefficients $H^1(U,\partial U;\R)$ is isomorphic to the  first de Rham relative cohomology group $H^1_{dR}(U,\partial U)$.

\end{proof}





\subsection{Proof of Lemma \ref{lem:mincon}.}\label{sect:mincon} 
\mbox{ }

\smallskip


\smallskip
\noindent
{\bf Step 1.}  We have: $
\inf \{ || \alpha ||_{L^1(\Lambda^2K)}\, , \ d\alpha=0\text{ in } K\, ,\ \alpha_\top=\zeta\ \text{on } \bd K\, \}=|| \zeta ||_{\dot W^{-1,1}(K)}
$,
where
\[
||\zeta ||_{\dot W^{-1,1}(K)}=\sup \left\{ \int \varphi\, \zeta  \ : \  \varphi\in W^{1,\infty}_c(\R^3)\, ,\ ||d\varphi||_{L^\infty(K)}\le 1\right\}.
\]
\noindent
This follows by a straightforward modification of an argument in  Federer \cite{F}.
We provide a sketch:
define  a linear functional acting on $C^\infty_c(\R^3)$ by
\[
A(\varphi):=\int_{\bd K}\varphi \,\zeta,\quad\quad \varphi\in C^\infty_c(\R^3).
\]
Given any measure-valued $2$-form $\alpha$, we similarly define a linear functional $B_\alpha$
acting on $C^\infty_c(\Lambda^1\R^3)$ by
\[
B_\alpha(\psi)=\int_{K}\psi\wedge\alpha\, , \quad\quad\psi\in C^\infty_c(\Lambda^1\R^3).
\]
And generally, for a linear functional $C$ on  $C^\infty_c(\Lambda^1\R^3)$,
we define $\bd C(\varphi) := C(d\varphi)$ for $\varphi\in C^\infty_c(\R^3)$.
Then the definitions (see \eqref{qtop.def1} in particular) imply that 
$A=\bd C$ and $\|C\|<\infty$ if and only if  $C = B_\alpha$ for some measure-valued $2$-form $\alpha$ such that  $d\alpha=0$ in $K$ and $\alpha_\top=\zeta$ on $\bd K$. Next, we note that 
$||\zeta ||_{\dot W^{-1,1}(K)}={\bf F}_{hom,K}(A)\, ,$
where ${\bf F}_{hom,S}(A)$ denotes the {\it homogeneous} flat norm of  $A$ in $K$, see \cite{F}. 
Then as observed in section 4.1.12 of \cite{F} in a slightly different setting, the Hahn-Banach Theorem implies that
 $$
{\bf F}_{hom,K}(A) =\min \{\, || C ||\, ,\ \text{spt}\, C\subset K\, ,\ \bd C=A\, \}
$$
and this translates to our claim, in view of our earlier remarks.

\smallskip
\noindent
{\bf Step 2.} We claim that $|| \zeta ||_{\dot W^{-1,1}(K)}\le C || \zeta ||_{W^{-1,1}(\R^3)}$,
where
\[
||\zeta ||_{W^{-1,1}(\R^3)}=\sup \{ \int_{ \R^3}\varphi \zeta\, ,\ \varphi\in W^{1,\infty}_c(\R^3)\, ,\ ||\varphi||_{W^{1,\infty}(\R^3)}\le 1\}.
\]
It suffices to show that there exists $C>0$ such that, for any $\varphi\in W^{1,\infty}_c(\R^3)$ with $\| d\varphi\|_{L^\infty(K)}\le 1$, there exists $\psi\in W^{1,\infty}_c(\R^3)$ such that
\beq\label{eq:comparenorms}
\int \varphi \zeta = \int \psi\zeta
\quad\quad
\mbox{ and }\| \psi\|_{W^{1,\infty}(\R^3)} \le C.
\eeq
Indeed, given $\varphi$ such that $\|d\varphi\|_{L^\infty(K)}<\infty$, we fix $x_0\in K$ and
we define $\psi(x) = \varphi(x) - \varphi(x_0)$ for $x\in K$.
Since $K$ is convex, $\varphi$ and hence $\psi$ are $1$-Lipschitz on $K$, 
so that $|\psi(x)| \le |x-x_0| \le \mbox{diam}(K)$ in $K$.
Next, we extend $\psi$ to  $\R^3\setminus K$, such that the extended function is still $1$-Lipschitz and
moreover satisfies $\|\psi\|_{L^\infty(\R^3)} \le \mbox{diam(K)}$, and has compact support.

Since $\zeta$ is a measure supported on $\bd K$, clearly $\int \psi\zeta$ depends only on the behavior of $\psi$ in $\bd K$,
and hence $\int \psi \,\zeta=\int (\varphi - \varphi(x_0))\,\zeta=\int \varphi \,\zeta$, since $\int_{\bd K}\zeta=0$, proving \eqref{eq:comparenorms}
\qed

\subsection{Proof of Lemma \ref{lem:nurest}.}\label{sect:nurest} 
{\bf Step 1}. We will show below that
there exists a piecewise smooth oriented 2-manifold with boundary $S=S_\ep$ such that 
\beq\label{eq:ScapU}
\bd S=M_\ep-M'_\ep\qquad\text{in }U\quad\text{and } \calH^2(S\cap U)\le C \ell\cdot E_\ep(u_\ep;\Omega)\le C\ell g_\ep,
\eeq
with $C>0$ independent of $\ep$ and $U$. (See the proof of Proposition \ref{prop:3.1} for notation used here and below.) We first complete the proof of the lemma, assuming \eqref{eq:ScapU}.

We may assume that $S$ intersects transversally the level set $f^{-1}(t)$ for a.e. $t$, since if not, we can arrange that this condition is satisfied after an arbitrarily small perturbation of $S$ that leaves $\partial S$ fixed.
Noting that $f^{-1}(t)$ coincides with $\bd C^t$ for a.e. $t$, we deduce that $S\cap \bd C^t$ is piecewise smooth for a.e. $t>0$. 
 
Since $f$ is $1$-Lipschitz, the same is true for $f\rest S$, so that  $|\nabla (f\rest S)|\le 1$ a.e., and
$$
\calH^2((S\cap C^{N\ell})\cap U)\ge\int_{(S\cap C^{N\ell})\cap U}|\nabla (f\rest S)|d\mathcal{H}^2=\int_0^{N\ell} \mathcal{H}^1((S\cap\bd C^t)\cap U)dt\, ,
$$
by the coarea formula.
We deduce that there exists $t_\ep$ s.t. 
\beq\label{eq:ScapCt}
\calH^1((S\cap \bd C^{t_\ep})\cap U)\le (N\ell)^{-1}\calH^2(S\cap U)\le C N^{-1}g_\ep\, .
\eeq
In $U$ it holds
\beq\label{eq:cutS}
\begin{aligned}
\bd(S\cap C^{t_\ep}) &=(\bd S)\cap C^{t_\ep}+S\cap(\bd C^{t_\ep})\\
&= (M_\ep-M'_\ep)\cap C^{t_\ep}+ S\cap(\bd C^{t_\ep}) \\
&=M_\ep-M'_\ep\cap C^{t_\ep}+S\cap (\bd C^{t_\ep})\, . \\
\end{aligned}
\eeq
In particular, for $\phi\in C^\infty_c(\Lambda^1 U)$, we have $$
\langle \nu_\ep-\nu'_\ep\rest C^{t_\ep}\, ,\phi\rangle=\int_{S\cap C^{t_\ep}}d\star\phi-\int_{S\cap \bd C^{t_\ep}}\star\phi\, ,
$$
(using the definitions \eqref{eq:nu} and
\eqref{eq:nu'}),
whence
\beq\label{eq:nu'restclose}
\begin{aligned}
||\nu_\ep-\nu'_\ep\rest C^{t_\ep}||_{W^{-1,1}(U)} &\le \calH^2(S\cap C^{t_\ep}\cap U)+\calH^1(S\cap\bd C^{t_\ep}\cap U)\\
&\le (1+(N\ell)^{-1})\calH^2(S\cap U)\le C(\ell+N^{-1})g_\ep
\end{aligned}
\eeq
by \eqref{eq:ScapCt} and \eqref{eq:ScapU}. This gives precisely \eqref{eq:nurest}.

{\bf Step 2}. To conclude, we supply the proof of our earlier claim \eqref{eq:ScapU}.

Let $g(x)=|\text{dist}(x,R_1)|^{-1}+|\text{dist}(x,R_1^*)|^{-1}$. By the coarea formula, we have
\beq\label{eq:coareau}
\begin{aligned}
\int_{B_1}ds\int_{u_\ep^{-1}(s)}g(x)d\calH^1(x) &=\int_\Omega g(x)|Ju_\ep|dx
\le \int_\Omega g(x)e_\ep(u_\ep)dx\, ,
\end{aligned}
\eeq
 so that by a mean-value argument, \eqref{eq:meanvalue1}, and \eqref{eq:meanvalueE}, we deduce from \eqref{eq:coareau} that there exists a regular value $s$ of $u_\ep$ such that
 $|s|< 1/2$ and, denoting $M_s:=u_\ep^{-1}(s)$, we have
 \beq\label{eq:meanvalueu}
 \int_{M_s}g(x)d\calH^1(x)=\int_{M_s}\frac{d\calH^1(x)}{|\text{dist}(x,R_1)|}+\int_{M_s}\frac{d\calH^1(x)}{|\text{dist}(x,R_1^*)|}\le \frac{KE_\ep(u_\ep;\Omega)}{\pi\delta\ell}.
 \eeq
Define as in \cite{abo}, Lemma 3.8 (i), the map $\Phi:\R^3\setminus R_1\to R'_1$ and, accordingly, the map $\Phi^*:\R^3\setminus R^*_1\to {R^*_1}'$. Set $\Psi(t,x)=(1-t)x+t\Phi(x)$, $\Psi^*(t,x)=(1-t)x+t\Phi^*(x)$,  and define
$S_1=\Psi([0,1]\times M_s)$ and $S_2=\Psi^*([0,1]\times M_s)$.
Note, following  \cite{abo}, Lemma 3.8 (ii), that 
since $M_s$ has no boundary in $U$, we have $\bd S_1=\Phi_\#M_s-M_s$ and $\bd S_2=\Phi_\#^*M_s-M_s$ in $U$.
However, from  \cite{abo}, Lemma 3.8 (i), we know that $\Phi_\#M_s = M_\ep$ the point being that
the intersection number of $M_s$ with any $2$-face $Q_i$ agrees with $(-1)^{\sigma_i} d_{Q_i}$,
due to orientation conventions and elementary properties of topological degree.
Similarly $\Phi^*_\#M_s = M_\ep'$, so if we define $S := S_1-S_2$, then $\bd S =  M_\ep- M_\ep'$ in $U$, which
is the first part of \eqref{eq:ScapU}. Following the proof of \cite{abo}, Lemma 3.8 (ii), we readily deduce that
\beq\label{eq:areaS}
\calH^2(S\cap U)=\calH^2(S_1\cap U)+\calH^1(S_2\cap U)\le C\ell^2\int_{M_s}g(x)d\calH^1(x)\, .
\eeq
Combining \eqref{eq:areaS} and \eqref{eq:meanvalueu}, claim \eqref{eq:ScapU} follows.

\qed


\end{document}